\newcommand{\from}{\leftarrow}
\newcommand{\est}{\textsf{Strong Estimator }}
\newcommand{\estt}{\textsf{Strong Estimator}}
\DeclareMathOperator{\med}{median}
\newcommand{\ind}{\textsf{Indyk's p-stable Sketch }}
\newcommand{\indd}{\textsf{Indyk's p-stable Sketch}}
\newcommand{\dif}{\textsf{Difference Estimator }}
\newcommand{\diff}{\textsf{Difference Estimator}}
\newcommand{\R}{\mathbb R}
\newcommand{\eps}{\varepsilon}
\newcommand\@erelb@r[1]{%
  \mathrel{\tikz[baseline=-.5ex]\draw[#1] (0,0)--(0.3,0);}
}
\newcommand{\erelbar}[1]{\@erelbar#1}
\def\@erelbar#1#2{%
  \ifcase\numexpr#1*4+#2\relax
    \@erelb@r{-}\or     
    \@erelb@r{->}\or    
    \@erelb@r{-|}\or    
    \@erelb@r{->|}\or   
    \@erelb@r{<-}\or    
    \@erelb@r{<->}\or   
    \@erelb@r{<-|}\or   
    \@erelb@r{<->}\or   
    \@erelb@r{|-}\or    
    \@erelb@r{|->}\or   
    \@erelb@r{|-|}\or   
    \@erelb@r{|<->|}\or 
    \@erelb@r{|<-}\or   
    \@erelb@r{|<->}\or  
    \@erelb@r{|<-|}\or  
    \@erelb@r{|<->|}    
  \else
    \@wrong
  \fi
}
\theoremstyle{plain}
\newtheorem{theorem}{Theorem}[section]
\newtheorem{lem}[theorem]{Lemma}
\newtheorem{clm}[theorem]{Claim}
\newtheorem{defn}[theorem]{Definition}
\newtheorem{problem}[theorem]{Problem}
\newtheorem{tech}[theorem]{Technique}
\newtheorem{procedure}[theorem]{Procedure}
\newtheorem{construction}[theorem]{Construction}
\newcommand{\norm}[2]{\left\|#1 \right\|_{#2}}
\newcommand{\REF}{ {\color{red} \boxed{REF}} }
\newcommand{\FILL}{ {\color{red} \boxed{FILL}} }
\newcommand{\unisize}{U}
\newcommand{\window}{n}
\title{Tight Bounds for Heavy-Hitters and Moment Estimation in the Sliding Window Model
\thanks{Authors William Swartworth and David Woodruff were supported by a Simons Investigator Award and Office of Naval Research award number N000142112647.}}
\begin{document}
\author{Shiyuan Feng
\\EECS, Peking University\\ \href{mailt:fsyo@stu.pku.edu.cn}{fsyo@stu.pku.edu.cn}
\and
William Swartworth \\Carnegie Mellon University\\ 
\href{mailt:wswartwo@andrew.cmu.edu}{wswartwo@andrew.cmu.edu}
\and David P. Woodruff \\Carnegie Mellon University\\
\href{mailt:dwoodruf@cs.cmu.edu}{dwoodruf@cs.cmu.edu}
}
\date{}

\maketitle

\begin{abstract}
We consider the heavy-hitters and $F_p$ moment estimation problems in the sliding window model.  For $F_p$ moment estimation with $1<p\leq 2$, we show that it is possible to give a $(1\pm \epsilon)$ multiplicative approximation to the $F_p$ moment with $2/3$ probability on any given window of size $n$ using $\tilde{O}(\frac{1}{\epsilon^p}\log^2 n + \frac{1}{\epsilon^2}\log n)$ bits of space.  We complement this result with a lower bound showing that our algorithm gives tight bounds up to factors of $\log\log n$ and $\log\frac{1}{\epsilon}.$  As a consequence of our $F_2$ moment estimation algorithm, we show that the heavy-hitters problem can be solved on an arbitrary window using $O(\frac{1}{\epsilon^2}\log^2 n)$ space which is tight.
\end{abstract}

\section{Introduction}
In some situations, such as monitoring network traffic, data is being received at an enormous rate and it is not always worthwhile to store the entire dataset. Instead we may only be interested in some statistics of the data, which we can hope to estimate from a data structure that uses much less space.  This motivated the development of streaming algorithms, which process a sequence of updates one at a time, while maintaining some small sketch of the underlying data.

In some applications, it makes sense to focus on only the most recent updates.  For instance, when tracking IP addresses over a network \cite{sen2002analyzing, demaine2002frequency}, we might be more interested in tracking the heavy users on a week-to-week basis rather than over longer time periods. Even if we are interested in longer time periods, we may still be interested in obtaining more fine-grained information about how the heavy users change over time.

This type of problem motivated the \textit{sliding window model} for streaming algorithms which has a considerable line of work \cite{datar2002maintaining, 2007, braverman2018nearly, woodruff2022tight}. There are several versions of the sliding window model, but for us our main goal is to respond to queries about the previous $n$ elements of the stream. 

We consider perhaps the two most foundational problems in streaming algorithms: moment estimation and heavy-hitters.  Both of these models have received an enormous amount of attention in the streaming literature, beginning with ~\cite{charikar2002finding} who proposed the CountSketch algorithm for finding heavy items, all the way to \cite{braverman2017bptree} who gave the optimal space bound for insertion-only streams \footnote{This means that items may be added to the stream, but not removed.}, removing one of the $\log$ factors of CountSketch.  While fully tight bounds are known for heavy-hitters (up to a $\log(1/\epsilon)$ factor in insertion streams), in the sliding window model tight bounds are still unknown.  Previous work \cite{braverman2018nearly} gave a $\log^2 n$ lower bound, and a $\log^3 n$ upper bound in the sliding window model.  We close this gap, and show that a $\log^2 n$ dependence suffices for computing heavy-hitters on a particular sliding window.

In the $F_p$ moment estimation problem, the goal is to approximation the $p^{\text{th}}$ moment for the frequency vector of all items inserted into the stream.  Like heavy-hitters, this problem has a long history going back to \cite{alon1996space}.  Since then, a line of work has led to nearly tight bounds both for large moments $p > 2$, which does not admit polylogarithmic sized sketches \cite{bar2004information, indyk2005optimal}, as well as the small moments $1 \leq p \leq 2$ \cite{indyk2006stable, kane2010exact} that we consider here.  While this problem has received considerable attention in the sliding window model \cite{2007, braverman2010effective, woodruff2022tight}, there still remains a gap in the literature.  We aim to close this gap for the problem of estimating the $F_p$ moment on a sliding window.

\subsection{Our models}
\paragraph{The sliding window model.}
There are several slightly different models that one can consider.  By default we consider an insertion-only stream of length $m$ consisting of items $x_1, \ldots, x_m.$  In the sliding window model, there is a window of size $W$ consisting of the $W$ most recent updates to the stream.  A correct sliding window algorithm may be queried for the desired statistic at any time $t,$ and it must produce a correct estimate for the statistic on the stream $x_{m - W + 1}, \ldots, x_m$ of the $W$ most recent updates.  When we refer to the failure probability for such an algorithm, we mean the probability of failure for a single query (for an arbitrary value of $t$).

One could also consider the problem of tracking a stream statistic over the sliding window \textit{at all times}.  This could be accomplished in our framework simply by making the failure probability $1/m$.

One could also consider a more general model where there is no fixed window. Instead at a given query time $t$, the algorithm receives a positive integer $n$ and must estimate the stream statistic on the portion of the stream in $(t-n,t].$  The difference is that $n$ is now not known until runtime.  Our algorithms all apply to this more general model, and our lower bounds hold in the standard sliding window model.  This model may be useful if one wants to observe how the stream has changed up to the present time.  For instance if we were monitoring network traffic, we might be interested in which IP addresses occurred frequently in the past hour, day, week, etc.

\paragraph{$F_p$ moment estimation.}  In the version of the $(\eps, \delta, F_p)$ moment estimation problem that we consider, we receive a series of insertion-only updates $x_1, \ldots, x_m.$  At any point in time $t$ the algorithm may be asked to output an estimate $\hat{F}$ of the $F_p$ moment of the window $W = [t-n+1, t]$ consisting of updates $x_{t-n+1}, \ldots, x_t.$  If $f^{(W)}_i$ is the frequency count of universe item $i$ over the window, then the $F_p$ moment over the window is defined to be $F_P(f^{(W)})=\sum_i (f^{(W)}_i)^p.$  In order to be correct, the algorithm must satisfy $\hat{F} = (1\pm \eps) F_p(f^{(W)})$ with probability at least $1 - \delta.$

\paragraph{Heavy hitters.}  
We say that item $i$ is $(\eps, \ell_p)$-heavy for a frequency vector $f$ if $f_i \geq \eps \norm{f}{p}.$ In the $(\eps, \ell_p)$-heavy hitters problem the goal is to output a collection $\mathcal{H}$ of $O(\eps^{-p})$ universe items such that all $(\eps, \ell_p)$-heavy items for $f^{(W)}$ are contained in $\mathcal{H}$ and such that all elements of $\mathcal{H}$ are at least $c\eps$-heavy where $c > 0$ is a constant.

\subsection{Prior Work}

\cite{2007} introduced the smooth histogram approach for solving problems in the sliding window model. For a function $f$, given adjacent substreams $A, B$, and $C$, an $(\alpha, \beta)$-smooth function demands that if $(1-\beta) f(A \cup B) \leq f(B)$, then $(1-\alpha) f(A \cup B \cup C) \leq f(B \cup C)$ for some parameters $0<\beta \leq \alpha<1$. Throughout the stream, they use the smooth histogram to remove the  unnecessary information and only keep track of the time when $F_p$ differs by $(1 + \epsilon ^p)$. For $p\in (1, 2]$, their work achieves space $O(\epsilon ^ {-(2+p)} \log ^ 3n)$.

\cite{woodruff2022tight} introduced a new tool called a Difference Estimator. The idea is to estimate the difference of $F_p(u+v)$ and $F_p(v)$ with additive error $\epsilon \cdot F_p(v)$, given that $\max(F_p(u), F_p(u+v)-F_p(v)) \le \gamma F(v)$, the dimension of the sketch only needs to be $\tilde{O}(\frac{\gamma^{2/p}}{\epsilon ^ 2})$. They use the idea in \cite{2007} to maintain a constant factor partition on the top level. Then using a binary tree-like structure to give a fine-grained partition, they use a Difference Estimator to exclude the part that is not in the window.

\cite{braverman2017bptree} gives a $\tilde{O}(\epsilon ^ {-2} \log n)$ bits of space algorithm to give strong tracking over $\ell_2$ in insertion-only stream. Also, \cite {blasiok2017continuous} generalize this to $p\in (0, 2)$. They introduce a weak tracking property, which is a key motivation for our \est.

For heavy hitters, \cite{braverman2018nearly} gives a $\tilde{O}(\epsilon^{-p} \log ^ 3n)$ algorithm, with the $\ell_2$ estimation being the bottleneck. If given the $\ell_2$ estimation over the window, then their algorithm can work in space $\tilde{O}(\epsilon ^ {-p} \log ^ 2n)$. This is the result we apply to obtain improved heavy-hitter bounds from our $F_2$ estimator.  They also prove an $\Omega(\epsilon ^ {-p} \log ^ 2n)$ lower bound for the heavy hitter problem in the sliding window model.

\subsection{Our Results}



\begin{table}[htbp]
\centering
\begin{tabular}{|c|c|c|}
\hline
\textbf{Problem (in Sliding Window Model)} & \textbf{Previous Bound} & \textbf{New Bound} \\
\hline
$L_p$-Heavy Hitters, $p \in (0,2]$ & \makecell{ 
$\tilde{O}(\epsilon^{-p}\log^3 n)$\\ $\Omega(\epsilon^{-p}\log^2 n)$ \cite{braverman2018nearly}} & 
$\tilde{\mathcal{O}}(\epsilon^{-p}\log^2 n)$ \\
\hline
$F_p$ Estimation, $p \in (1,2]$ & 
$\tilde{\mathcal{O}}(\epsilon^{-2}\log^3 n)$ \cite{woodruff2022tight} & \makecell{
$\tilde{O}(\epsilon^{-p}\log^2 n + \epsilon^{-2}\log n)$,\\ $\Omega(\epsilon ^ {-p} \log ^ 2n + \epsilon ^ {-2}\log n)$ }\\
\hline
\end{tabular}
\label{tab:results}
\caption{Results in the Sliding Window Model}
\end{table}

\paragraph{$F_p$ estimation.}

We give an algorithm for $F_p$ moment estimation for $1 < p \leq 2$ that achieves optimal bounds in the sliding window model, in terms of the accuracy $\eps$ and the window size $n$.

\begin{theorem}
    There is an algorithm that uses  
    \[
    O\Big((\epsilon ^ {-p} \log ^ 2n + \epsilon ^ {-2}\log n \log ^ 4 \frac 1{\epsilon}) (\log \frac{1}{\epsilon} + (\log \log n) ^ 2)\Big)
    = \tilde{O}(\eps^{-p}\log^2 n + \eps^{-2}\log n)
    \]
    bits of space, and solves the $(\eps, F_p)$ moment estimation problem in the sliding window model. 
\end{theorem}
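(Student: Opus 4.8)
The plan is to follow the smooth-histogram plus Difference-Estimator template of \cite{2007, woodruff2022tight}, but to split the $F_p$ mass of the query window $W$ into a \textbf{heavy} part and a \textbf{light} part and to estimate the two with different, carefully balanced data structures. First I would run the coarse layer: maintain, in the style of \cite{2007}, a set of $O(\log n)$ ``anchor'' timestamps whose suffixes have $F_p$ values spaced by constant factors, using only constant-factor $F_p$ sketches and hence $\tilde O(\log^2 n)$ bits in total. At query time this pins $F_p(f^{(W)})$ down to a factor of $2$ and reduces the task to computing, for the anchor $a$ lying just inside $W$, the ``correction'' between the suffix starting at $a$ and the true window start.

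Next I would build the refinement hierarchy for the heavy coordinates. For any window there are at most $O(\epsilon^{-p})$ coordinates that are $(\epsilon,\ell_p)$-heavy, so I would attach to each anchor a CountSketch-type structure together with $O(\log n)$ levels of Difference Estimators on its counters; this is essentially the machinery behind the $\tilde O(\epsilon^{-p}\log^2 n)$ heavy-hitters bound stated above, and over all $O(\log n)$ anchors it costs $\tilde O(\epsilon^{-p}\log^2 n)$. Tracking the approximate window-frequencies of the heavy coordinates then yields $\sum_{i\ \text{heavy}} (f^{(W)}_i)^p$ to additive error $\epsilon F_p(f^{(W)})$.

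For the light part I would use a single Strong Estimator of the type suggested by the weak-tracking results of \cite{braverman2017bptree, blasiok2017continuous}, maintained over the current suffix, together with a \emph{short} hierarchy of $L=O(\log\frac1\epsilon)$ Difference Estimators: at level $\ell$ the chunk to be estimated carries at most a $2^{-\ell}$ fraction of the window's $F_p$ mass, so by \cite{woodruff2022tight} the level-$\ell$ estimator has dimension $\tilde O\!\big(2^{-2\ell/p}(L/\epsilon)^2\big)$, and since $2/p\ge 1$ the geometric sum over $\ell$ converges. The crucial point is that once the $\ell_p$-heavy hitters have been removed the light residual satisfies $\norm{g}{\infty}\le\epsilon\norm{f^{(W)}}{p}$, so its $F_p$ concentrates well enough that per-window tracking does \emph{not} need the $O(\log n)$ smooth-histogram blocks here, only a single sketch per level; this gives $\tilde O(\epsilon^{-2}\log n\,\log^4\frac1\epsilon)$ bits. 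Telescoping the Difference Estimators down the hierarchy, anchored at $a$, and adding the Strong Estimator's value estimates the light $F_p$ to additive error $L\cdot\frac{\epsilon}{L}F_p=\epsilon F_p$.

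Finally I would add the heavy and light estimates; a triangle inequality gives a $(1\pm O(\epsilon))$ estimate, rescaled to $(1\pm\epsilon)$. To reach the stated failure probability each of the $O(\log n+\log\frac1\epsilon)$ constituent estimators must be boosted to failure probability $1/\poly(\log n,\tfrac1\epsilon)$ and union-bounded, which is the source of the $\big(\log\frac1\epsilon+(\log\log n)^2\big)$ factor. I expect the main obstacle to be two-fold: (i) proving that the light part genuinely avoids the second logarithmic factor, i.e.\ that after removing the heavy hitters the light $F_p$ is smooth enough to be tracked across windows with a single Strong Estimator plus $O(\log\frac1\epsilon)$ Difference Estimators rather than a full per-block histogram; and (ii) controlling the interaction between the two parts for coordinates whose frequency sits near the $\epsilon\norm{f^{(W)}}{p}$ threshold, so that mis-classifying such a coordinate as heavy or light perturbs the final estimate by at most $O(\epsilon)F_p$.
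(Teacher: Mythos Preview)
Your proposal takes a genuinely different route from the paper. The paper does \emph{not} split into heavy and light coordinates. Instead it runs the \cite{woodruff2022tight} hierarchy essentially unchanged but removes the two extraneous $\log n$ factors by (i) replacing every high-probability $\ell_p$ sketch used for timestamp maintenance with a \emph{Strong Estimator}, whose guarantee over all sub-intervals of a window holds simultaneously and therefore sidesteps the union bound over time, and (ii) applying the randomized rounding of \cite{rounding} to the Difference-Estimator and Indyk sketch entries so that each entry costs $O(\log\frac1\epsilon+\log\log n)$ bits rather than $O(\log n)$. The $\epsilon^{-p}\log^2 n$ term in the paper arises not from heavy hitters but from the $O(\epsilon^{-p}\log n)$ timestamps the hierarchy maintains, each carrying $O(\log n)$ bits of Strong-Estimator bookkeeping; the $\epsilon^{-2}\log n$ term is the rounded Difference-Estimator storage summed over all $O(\log n)$ top-level blocks.

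Your central claim for the light part --- that once $\norm{g}{\infty}\le\epsilon\norm{f^{(W)}}{p}$ one needs only a \emph{single} sketch per Difference-Estimator level rather than $O(\log n)$ smooth-histogram copies --- is where I see a genuine gap. The light residual can carry the entire $F_p$ mass (take a stream with no heavy hitters at all), and that mass still varies by a factor of $n$ across the $O(\log n)$ candidate anchors. Your Difference-Estimator hierarchy is ``anchored at $a$,'' but $a$ is chosen at query time among $O(\log n)$ anchors, and a hierarchy built for one anchor does not telescope from another; so you still need one hierarchy per anchor and are back to $\tilde O(\epsilon^{-2}\log^2 n)$ for the light part. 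Boundedness of $\norm{g}{\infty}$ buys nothing here: the smoothness parameter of $F_p$ in the Braverman--Ostrovsky sense is $\Theta(\epsilon^p)$ whether or not heavy hitters are present, and the number of constant-factor anchors stays $\Theta(\log n)$ regardless of the $\ell_\infty$ bound. The secondary obstacle you flag --- coordinates near the heaviness threshold, whose classification is itself window-dependent --- compounds this, since any sketch of ``the light part'' must be defined relative to a heavy set that changes as the window slides, so the light sketch cannot be a fixed linear sketch at all.
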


Note that this result is for approximating the $F_p$ on any \textit{fixed} window with $2/3$ probability.  To track $F_p$ at all times, and guarantee correctness at all times with good probability, one could apply our algorithm $O(\log m)$ times in parallel, and take the median estimate at each time.

\paragraph{Heavy hitters.}
As a consequence of our $F_2$ approximation scheme, we obtain an optimal heavy-hitters algorithm by combining with the results of \cite{braverman2018nearly}.

\begin{theorem}\label{Heavy Hitter}
Given $\epsilon>0$ and $0 < p \leq 2$, there exists an algorithm in the sliding window model that, with probability at least $\frac{2}{3}$, outputs all indices $i \in[U]$ for which frequency $f_i \geq \epsilon \ell_p ^ W$, and reports no indices $i \in[U]$ for which $f_i \leq \frac{\epsilon}{12} \ell_p ^ W$. The algorithm has space complexity $O\left(\eps^{-p} \log ^2 n\left(\log \frac{1}{\epsilon}+\log \log n\right)\right)$.
\end{theorem}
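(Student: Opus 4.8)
The plan is to obtain Theorem~\ref{Heavy Hitter} as a black-box combination of our $F_2$ moment estimator with the sliding-window heavy-hitters reduction of \cite{braverman2018nearly}. Recall from the discussion of prior work that \cite{braverman2018nearly} give an $(\eps,\ell_p)$-heavy-hitters algorithm for the sliding window model whose space is $\tilde O(\eps^{-p}\log^3 n)$, and whose only ingredient costing more than $\tilde O(\eps^{-p}\log^2 n)$ is an estimator for $\ell_2^W := \norm{f^{(W)}}{2}$ on the queried window: given such an estimate to within a sufficiently small constant factor, their routine runs in space $O(\eps^{-p}\log^2 n(\log\tfrac1\eps+\log\log n))$ and satisfies exactly the reporting guarantee in the statement, the constant $\tfrac1{12}$ arising from their analysis (and in particular tolerating the use of an approximate value of $\ell_2^W$ in place of the exact one). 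So it suffices to supply such an $\ell_2^W$-oracle within the same space budget.

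For this we invoke our $F_p$ moment estimation algorithm with $p=2$ and a constant accuracy parameter $\eps_0$ (say $\eps_0=\tfrac1{100}$), which by the preceding theorem uses $O((\log^2 n+\log n)(\log\log n)^2)=\tilde O(\log^2 n)$ bits and, at the query time, returns $\hat F=(1\pm\eps_0)F_2(f^{(W)})$ — so that $\sqrt{\hat F}$ is within a factor $1+2\eps_0$ of $\ell_2^W$ — with probability at least $2/3$. Running $O(1)$ independent copies and taking the median boosts this to probability at least $5/6$, still within $\tilde O(\log^2 n)$ space. We run this in parallel with the routine of \cite{braverman2018nearly}, feed it $\sqrt{\hat F}$ at query time, and union-bound over the (at most two) failure events, so the overall success probability is at least $2/3$. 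The total space is the sum of the two parts, $O(\eps^{-p}\log^2 n(\log\tfrac1\eps+\log\log n))+\tilde O(\log^2 n)=O(\eps^{-p}\log^2 n(\log\tfrac1\eps+\log\log n))$, as claimed.

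The main point requiring care is the precise interface to \cite{braverman2018nearly}: one must check (i) exactly which accuracy for $\ell_2^W$ their argument needs — a constant factor suffices, but if one instead wants to match the $\eps^{-p}$ term exactly it is enough to take $\eps_0=\Theta(\eps^{p/2})$, which keeps the $F_2$ sketch at $\tilde O(\eps^{-p}\log^2 n)$ — and (ii) whether the estimate is consulted only at the query time (in which case the $2/3$-guarantee of our theorem, amplified by a constant number of repetitions, is enough) or at a $\poly\log n$-sized set of internal checkpoints maintained throughout the stream (in which case one amplifies the per-instance failure probability to $1/\poly\log n$ by $O(\log\log n)$ repetitions and union-bounds, which still stays inside the budget). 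Once the interface is pinned down, the remainder — verifying that feeding a $(1\pm\eps_0)$-estimate of $\ell_2^W$ into their analysis degrades the heaviness threshold only from $\eps$ to $\tfrac{\eps}{12}$, that the output size stays $O(\eps^{-p})$, and that the combined $(\log\log n)^2$ and $\log\log n$ lower-order factors simplify as stated — is routine.
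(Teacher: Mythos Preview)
Your proposal is correct and follows essentially the same route as the paper: obtain a constant-factor $\ell_2^W$ estimate and feed it to the sliding-window heavy-hitters machinery of \cite{braverman2018nearly}. Two small differences are worth noting. First, the paper uses the simpler warmup algorithm (Algorithm~\ref{alg:reg1}, Theorem~\ref{1}) with constant accuracy rather than the full $F_p$ estimator; since $\eps_0$ is constant the $\eps_0^{-3p}$ factor there is $O(1)$ and the space is already $O(\log^2 n(\log\log n))$, so the heavier machinery is unnecessary. Second, the paper does not treat the $\eps^{-p}$ bound as a direct black box from \cite{braverman2018nearly}: it invokes Lemma~\ref{2} (which gives $\ell_2$ heavy hitters in $O(\eps^{-2}\log^2 n(\log\tfrac1\eps+\log\log n))$ space given the $\ell_2^W$ estimate) with parameter $\eps^{p/2}$, and then applies the standard reduction of Lemma~\ref{heavyhitter} to conclude that $\eps^{p/2}$-heavy hitters for $\ell_2$ are $\eps$-heavy hitters for $\ell_p$. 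Making this two-step reduction explicit resolves your point~(i) about the interface and removes the need for the hedging in your last paragraph.
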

An earlier version of \cite{braverman2018nearly} claimed this bound for the stronger problem of tracking the heavy-hitters on the window at all times.  Unfortunately this original bound was incorrect, and their algorithm actually gives a $\log^3 n$ dependence, even in our setting where we are interested in obtaining good success probability for a single window.

\paragraph{Lower bounds.}  We also show that our $F_p$ estimation algorithms are  tight in the sliding window model up to $\log\frac{1}{\eps} \log\log n$ factors.  Specifically, we show

\begin{theorem}
Fix $p \in (1,2]$. Suppose that $\mathcal{A}$ is a streaming algorithm that operates on a window of size $\window$, and outputs a $(1\pm \eps)$ approximation of $F_p$ on any fixed window with $9/10$ probability. For $\window \geq \unisize,$ $\mathcal{A}$ must use at least \[\Omega\left(\eps^{-p}\log^2(\eps \unisize) + \eps^{-2}\log(\eps^{1/p} \unisize)\right)\] space.
\end{theorem}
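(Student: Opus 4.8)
The plan is to prove the two terms in the lower bound separately, via two different communication reductions. For the $\eps^{-2}\log(\eps^{1/p}\unisize)$ term, I would adapt the classical one-shot lower bound for $F_p$ estimation in the (ordinary, non-windowed) streaming model: Gap-Hamming or the augmented-indexing style hard instance shows that $\eps^{-2}$ bits are needed to $(1\pm\eps)$-approximate $F_p$ of a single vector on a universe of size roughly $\eps^{-2}$ (for $p\le 2$ this is the standard $\Omega(\eps^{-2})$ bound, tight with AMS/Indyk up to log factors). To get the extra $\log(\eps^{1/p}\unisize)$ factor, I would embed this instance into the sliding window using the standard "time-as-information" trick: a window of size $\window$ lets you place $\Theta(\log(\eps^{1/p}\window))$ independent copies of the hard instance at geometrically spaced positions in the stream, and by choosing a query point that isolates one copy, the algorithm must answer correctly for each — so it must effectively store all of them, giving $\Omega(\eps^{-2}\log(\eps^{1/p}\unisize))$ when $\window \geq \unisize$. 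The scaling factor $\eps^{1/p}$ inside the log comes from the fact that each copy needs $F_p$ roughly a constant, so a vector of support size $\eps^{-2}$ with unit entries has $F_p = \eps^{-2}$, and one needs $\Theta(\eps^{-2})$ stream positions per copy to avoid overlap; the number of disjoint copies fitting in a window of size $\window$ is then $\log_{1+c}(\window / \eps^{-2})=\Theta(\log(\eps^{2}\window))$... I'd need to be careful about whether the paper intends $\eps^{1/p}$ or $\eps^{2}$ here, and I'd reconcile the exponent by tracking exactly how much "mass budget" each level consumes.

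For the dominant $\eps^{-p}\log^2(\eps\unisize)$ term, I would reduce from a two-party (or multi-party) communication problem that already captures the $\log^2$ barrier in the sliding window model — this is exactly the style of the $\Omega(\eps^{-p}\log^2 n)$ heavy-hitters lower bound of \cite{braverman2018nearly}. The key idea: a sliding window algorithm for $F_p$ can distinguish whether a designated "heavy" item is present in the current window, and the sliding window forces you to resolve this at $\Theta(\log(\eps\window))$ different time scales simultaneously, each of which hides $\Theta(\log(\eps\window))$ bits of a hard instance costing $\eps^{-p}$ to solve. Concretely I would build a hierarchical hard distribution: at each of $L = \Theta(\log(\eps\window))$ scales, plant a block of $\Theta(\eps^{-p})$ "candidate" heavy items, encode one bit of information per scale-pair via whether that block's mass falls inside or outside the queried window, and argue by a direct-sum / information-complexity argument over the $L^2$ (scale, position) pairs that the sketch at the query time must encode $\Omega(\eps^{-p} L^2)$ bits. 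The $\eps^{-p}$ per block is the standard heavy-hitters / $F_p$ point-query lower bound (CountSketch is tight), and the $L^2 = \log^2(\eps\window)$ factor is the genuinely sliding-window phenomenon: one $\log$ from needing exponentially spaced checkpoints, a second $\log$ because within each checkpoint you still face a nested instance of the same problem.

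The main obstacle, and where I'd spend most of the effort, is making the $\log^2$ direct-sum argument rigorous: it is not enough to have $L$ independent $\eps^{-p}$-bit instances (that would give only $\log n$), one must genuinely show the interactions between scales force a quadratic blowup, which requires a carefully designed reduction where the algorithm's single query answer is simultaneously informative about many (scale, scale') pairs, combined with an information-theoretic argument (e.g., a chain rule over the hierarchy, or an embedding of a "tree" communication problem analogous to the one used for sliding-window heavy hitters). Getting the universe-size dependence exactly right — i.e.\ that the logs are $\log(\eps\unisize)$ and $\log(\eps^{1/p}\unisize)$ rather than $\log\window$ — requires that the hard instance fit into a universe of the claimed size, which constrains how large each planted block can be and forces the condition $\window \geq \unisize$; I would verify this by explicitly bounding the total number of distinct universe items used across all scales ($O(\eps^{-p} \cdot \mathrm{polylog})$ fits in $[\unisize]$ after rescaling $\eps$). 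A secondary technical point is boosting: the theorem assumes only $9/10$ success probability, so the communication lower bounds must be stated for bounded-error protocols, which is standard for Gap-Hamming and for the augmented-indexing variants I'd use, so this should not pose a real difficulty.
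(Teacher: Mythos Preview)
Your proposal for the $\eps^{-p}\log^2$ term misidentifies the source of the second logarithm. You frame it as a direct-sum over $(\text{scale}, \text{scale}')$ pairs, or as a ``quadratic blowup from interactions between scales,'' and flag making this rigorous as the main obstacle. The paper's argument is considerably simpler and involves no scale interactions at all. It is a single reduction from $\text{IndexGreater}(N, kr)$ (equivalently, Augmented Index on $kr\log N$ bits): Alice builds $r = \Theta(\log n)$ geometrically shrinking blocks, each containing $k = \Theta(\eps^{-p})$ heavy items, and each heavy item's mini-block can occupy any of $N$ positions within its sub-block. The second $\log$ factor is simply $\log N$, the number of bits encoding one item's \emph{position}; the first $\log$ is $r$, the number of blocks. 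There is no nested instance, no chain rule, and no information-complexity machinery --- just $kr$ independent GreaterThan instances packed into one stream.

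The key step you are missing is the detection mechanism that turns a $(1\pm\eps)$ $F_p$ estimate into a GreaterThan answer for a single heavy item. Bob first appends $Q$ copies of the target universe item (with $Q$ chosen so that $Q^p$ is comparable to the window's $F_p$), and \emph{then} appends singletons to shift the window so it begins at the query position. If the target's mini-block is still inside the window (so roughly $\eps Q$ copies of the item were already there), the $F_p$ increases by about $(Q+\eps Q)^p - (\eps Q)^p \approx (1 + p\eps)Q^p$; if the mini-block has expired, the increase is only $Q^p$. A $(1\pm c\eps)$-approximation distinguishes these two cases. This is what ties $\eps$ to $k$ (each of the $k$ items must be $\Theta(\eps)$-heavy, forcing $k = \Theta(\eps^{-p})$) and what supplies the per-item $\log N$ bits.

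For the $\eps^{-2}\log(\eps^{1/p}\unisize)$ term, the paper does not use any sliding-window embedding trick: it simply invokes a known insertion-only lower bound, which already holds without windows. Your proposed geometric-copies argument is not wrong in spirit, but it is unnecessary here.
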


The first term in the lower bound is our contribution.  The second term applies to general insertion-only streams, and follows from a recent work of \cite{braverman2024optimality}. 
While the lower bound only applies to $p>1$, our proof also applies to $p=1$ if the stream is allowed to contain empty insertions that do not affect the frequency counts, but that move the window. (Without empty insertions $p=1$ is trivial, since $F_1$ is always the window size.)

\subsection{Notation}

We use $m$ to denote the length of the stream, $U$ be the universe size, and $u_1, \dots, u_m \in [U]$ to denote the elements in the stream. For sliding window, we use $n$ to denote the window size. We use $\text{poly}(n)$ to denote a constant degree polynomial in $n$. We assume $m \le \text{poly}(n), U \le \text{poly}(n)$. 

For interval $[l, r] \subseteq [1, m]$, we use $x ^ {(l,r)}$ to denote the frequency vector of elements $u_l, u_{l+1}, \dots, u_r$. For $p \in (0, 2]$, we use $\ell_p ^ {(l,r)}$ to denote the $\ell_p$ norm of the frequency vector $x ^ {(l,r)}$, more specifically, $\|x ^ {(l, r)}\|_p$, and $F_p ^ {(l,r)}$ to denote the $F_p$ moment of frequency vector $x ^ {(l,r)}$, more specifically, $\|x ^ {(l, r)}\|_p ^ p$. For window $W = [l, r]$, we also use $\ell_p ^ W$ or $F_p ^ W$ to denote the $\ell_p$ norm or $F_p$ frequency moment on window $W$. We use $a = (1 \pm \epsilon) b$ to denote $a \in [(1 - \epsilon)b, (1 + \epsilon)b]$.

\subsection{Technical Overview}

\paragraph{$F_p$ estimation.} 

To motivate our approach, consider approximating $F_2$ to within a constant factor. We first recall the framework introduced by \cite{braverman2007smooth} for the sliding window model.   The rough idea is to start a new $F_p$ estimation sketch at each point in the stream.  This clearly gives a sliding window estimate, but uses far too much space.  However, the observation is that if sketch $i$ and sketch $i+2$ give similar estimates for $F_p$ at a given point in the stream, then they will continue to do so for the remainder of the stream.  The estimate given by sketch $i+1$ is sandwiched between the estimates from the neighboring sketches, so sketch $i+1$ is redundant, and we may safely delete it.  After deleting all redundant sketches, we are left with $O(\log n)$ sketches (for constant $\eps$) with geometrically increasing $F_p$ estimates.

Each of these $F_2$ sketches uses $O(\log n \log\frac{1}{\delta})$ bits of space to estimate $F_2$ with probability $1-\delta$, so this approach requires $O(\log^2 n \log \frac{1}{\delta})$ bits.  Since we only store $\log n$ sketches at a time, it might seem that we can take $\delta = O(\frac{1}{\log n}).$  This is a subtle point and has been a source of error in prior work.

In fact we cannot take $\delta = O(\frac{1}{\log n})$ if we are only guaranteed that our sketches correctly estimate $F_2$ with probability at least $1-\delta$.  To see why, suppose that each sketch we create is correct with probability at least $1-\delta$ and is otherwise allowed to be adversarially incorrect with probability $\delta.$  With $\delta = O(1/\log n),$ it is likely that two consecutive sketches will eventually both be incorrect.  In that case, they can conspire to in turn sandwich and wipe out all sketches in front of them, from which we will not be able to recover.  While extreme, this example illustrates that an $F_p$ estimation sketch with a failure probability guarantee is not sufficient for us.  This is why prior approaches for estimating the  $\ell_p$ norm or $F_p$ moment in sliding windows require $\Omega(\log ^ 3n)$ bits of space, with one of the logarithmic factors arising from the need for sketches with high success rates. These high-success-rate sketches necessitate union bounds over all sketches, even though most of them are not kept by the algorithm when we make a query. 

Of course we should not expect sketches to fail this way in practice.  Could there be a somewhat stronger guarantee that rules out this type of adversarial behavior?  In this work, we overcome this limitation by introducing a new tool: the \est. This estimator enables a refined analysis that avoids the union-bound overhead, allowing us to achieve the first space complexity which breaks the  $O(\log^ 3 n)$ bit barrier. 

The intuition of \est is for a window $[l, r]$, we do not need our estimator to be right (meaning give an $\epsilon$ approximation) on any sub-interval of this window, but rather only an additive error: $\epsilon$ fraction of the $\ell_p$ norm of the window suffices for maintaining the timestamps. Thus we can avoid the union bound for all sub-intervals to be right. The intuition is similar to the weak tracking in \cite{blasiok2017continuous}.

In Section $3$, we show that \est with the simplest algorithm in \cite{2007} suffices to give an $\epsilon$ approximation to the $\ell_p$ norm of the sliding window. However, this simple algorithm uses $\tilde{O}(\epsilon ^ {-3p} \log ^ 2n)$ bits of space, which requires a high dependence on $\epsilon$. 

In Section $4$, we introduce \dif \cite{woodruff2022tight} to improve the $\epsilon$ dependence. We first use our simple algorithm to give a constant factor partition on the top level. We then use a binary tree-like structure to make a more detailed partition between the top level, and use \dif estimator to exclude the contribution that is not in the window. The algorithm in \cite{woodruff2022tight} requires $\tilde{O}(\epsilon ^ {-2} \log ^ 3n)$ space. The extra $\log n$ factor also comes from the high success probability requirement. Using the \est and the structure of their algorithm, we can obtain an algorithm using $O(\epsilon ^ {-2} \log ^ 2n)$ bits of memory. We further discover that the bottleneck of the algorithm comes from the \dif, which can be stored by rounded values using the rounding procedure in \cite{rounding}. We thus improve the space to $\tilde{O}(\epsilon ^ {-p} \log ^ 2n + \epsilon ^ {-2} \log n)$ bits, which gives a separation between $p = 2$ and $p < 2$. We further prove a lower bound on $F_p$ moment estimation over sliding windows, see below, showing that our algorithm is nearly optimal.

\paragraph{Heavy Hitters.}
\cite{braverman2018nearly} showed that if one has an $\ell_2$ estimation algorithm over a sliding window, then one can find the $\ell_2$ heavy hitters with $\tilde{O}(\epsilon ^ {-2} \log ^ 2n)$ additional bits of space. In \cite{heavyhitter}, they show that an $\epsilon ^ {p/2}$ - heavy hitter algorithm for $\ell_2$ also finds $\epsilon$- heavy hitters for $\ell_p$. Thus, using our simplest algorithm to provide a constant $\ell_2$ estimation over the window will give us an algorithm that uses $\tilde{O}(\epsilon ^ {-p} \log ^ 2n)$ bits of space, and returns all the $\epsilon$-$\ell_p$ heavy hitters. This matches the lower bound $\Omega(\epsilon ^ {-p}\log ^ 2n)$ for heavy hitters in \cite{braverman2018nearly}.

\paragraph{Lower Bounds.}
For our lower bound for $F_p$ estimation, we construct a stream that consists of roughly $\log(n)$ blocks, where each block has half the $F_p$ value of the prior block.  Within each block, we place $\eps^{-p}$ items that are each $(\eps, \ell_p)$-heavy for the block, and arrange them so that each of these heavy items occurs in a contiguous ``mini-block".  Overall, there are $\frac{1}{\eps^p} \log(n)$ of these ``mini-blocks".

The main observation is that a correct streaming algorithm for $F_p$-moment estimation in the sliding window model must remember the location of each mini-block. Doing so requires $\log n$ bits of space per mini-block as long as there are at least, say, $n^{0.1}$ disjoint positions that each block can occur in, resulting in our $\Omega(\frac{1}{\eps^p}\log^2 n)$ lower bound.

To see why our algorithm must remember each location, consider one of the heavy items $x$ in our construction, and suppose we want to decide whether the mini-block for $x$ occurs entirely before or entirely after stream index $i.$  We first shift the sliding window so that it begins at index $i$.  We would like to decide whether or not $x$ occurs in the resulting window.  By the geometric decay of the blocks, $x$ remains heavy for the entire window. Then to decide if $x$ remains in the window we append $Q$ copies of $x$ to the window, where $Q$ is a rough estimate of the $F_p$ over the window.  If $x$ does not occur in the window, then appending the $x$'s increases the $F_p$ by $Q^p.$  On the other hand, if $x$ does occur roughly $\eps Q$ times in the window then the $F_p$ increases by $(\eps Q + Q)^p - (\eps Q)^p = \Theta((1 + \eps)Q^p).$ Since $Q^p$ is approximately the $F_p$ moment of the window, a $\Theta(\eps)$ heavy-hitters algorithm can distinguish between these two cases.  A minor issue is that as stated, we would need to append the $x$'s without shifting the window past the index $i$. 
 However we can easily fix this by appending the $x$'s before performing appending singletons to accomplish the shift.  To make the above intuition precise we give a reduction from the IndexGreater communication game in Section~\ref{sec:lower_bounds}.  

\section{Preliminary Results}

In this section, we will introduce some basic definitions and lemmas that form the foundation of our \estt. We will also introduce the previous results for heavy hitters. We first require the following definition for $p$-stable distribution. 

\begin{defn}( $p$-stable distribution).\cite{zolotarev1986one} For $0<p \leq 2$, there exists a probability distribution $\mathcal{D}_p$ called the $p$-stable distribution so that for any positive integer $n$ with $Z_1, \ldots, Z_n \sim \mathcal{D}_p$ and vector $x \in \mathbb{R}^n$, then $\sum_{i=1}^n Z_i x_i \sim\|x\|_p Z$ for $Z \sim \mathcal{D}_p$.
\end{defn}

The probability density function $f_X$ of a $p$-stable random variable $X$ satisfies $f_X(x)=\Theta\left(\frac{1}{1+|x|^{1+p}}\right)$ for $p<2$, while the normal distribution corresponds to $p=2$. Moreover, \cite{Nolan2001StableDM} details standard methods for generating $p$-stable random variables by taking $\theta$ uniformly at random from the interval $\left[-\frac{\pi}{2}, \frac{\pi}{2}\right], r$ uniformly at random from the interval $[0,1]$, and setting

$$
X=f(r, \theta)=\frac{\sin (p \theta)}{\cos ^{1 / p}(\theta)} \cdot\left(\frac{\cos (\theta(1-p))}{\log \frac{1}{r}}\right)^{\frac{1}{p}-1}
$$

These $p$-stable random variables are crucial to obtaining our \estt.

\begin{defn}\label{Strong Estimator}(\estt)
    Let $\epsilon_1 \leq \epsilon_2 \in (0, 1)$,  $ p \in (0, 2]$ and let $u_1, u_2, \dots, u_m$ be the sequence of stream elements. Let $x ^ {(i, j)}$ denote the frequency vector for elements $u_i, \dots, u_j.$
 We say that $f_p$ has the $(\eps_1, \eps_2, \delta)$  \est property on the window $[l,r]$ if with probability $1-\delta$, we have the bound $f_p(x ^ {(a,b)}) = (1 \pm \epsilon_1) \ell_p ^ {(a, b)} \pm  \epsilon_2 \ell_p ^ {(l,r)}$ for all sub-windows $[a,b]\subseteq [l,r]$ simultaneously. We say that $f_p$ has the $(\eps_1, \eps_2, \delta)$ \est property if it has the $(\eps_1, \eps_2, \delta)$ \est property on all windows $[l,r].$ 

    For simplicity, we will use $f_p ^ {(a, b)}$ to denote the estimate $f_p(x ^ {(a, b)})$. 
\end{defn}

To support the construction and analysis of such an estimator, we rely on several probabilistic tools. Lemma \ref{cher} gives a Chernoff-type concentration bound for $k$-wise independent random variables. Lemma \ref{prefix bound} provides tail bounds on the supremum of inner products between a $p$-stable random vector and a sequence with an increasing frequency vector. 

\begin{lem}\label{cher}\cite{bellare1994randomness} Let $X_1, \ldots X_n \in\{0,1\}$ be a sequence of $k$-wise independent random variables, and let $\mu=\sum \mathbb{E} X_i$. Then

$$
\forall \lambda>0, \mathbb{P}\left(\sum X_i \geq(1+\lambda) \mu\right) \leq \exp \left(-\Omega\left(\min \left\{\lambda, \lambda^2\right\} \mu\right)\right)+\exp (-\Omega(k))
$$
    
\end{lem}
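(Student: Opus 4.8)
The plan is to prove this by the moment method: the moment generating function argument behind the textbook Chernoff bound requires full independence, but a high \emph{even} moment of the centered sum can be controlled using only $k$-wise independence, which is exactly what is available. First I would center the variables: set $Y_i=X_i-\E X_i$, so $\E Y_i=0$, $|Y_i|\le 1$, and $T:=\sum_i Y_i=\sum_i X_i-\mu$. Fix an even integer $t$ with $2\le t\le k$, to be optimized at the end. Since $\sum_i X_i\ge(1+\lambda)\mu$ forces $|T|\ge\lambda\mu$ and $t$ is even, Markov's inequality applied to $T^t$ gives
\[
\mathbb{P}\!\left(\sum_i X_i\ge(1+\lambda)\mu\right)\le\mathbb{P}\big(|T|\ge\lambda\mu\big)\le\frac{\E[T^t]}{(\lambda\mu)^t}.
\]

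Next I would bound $\E[T^t]=\sum_{i_1,\dots,i_t\in[n]}\E[Y_{i_1}\cdots Y_{i_t}]$ by grouping the multi-indices according to the set partition of $\{1,\dots,t\}$ recording which coordinates coincide. A term in which some index appears exactly once vanishes: by $t$-wise independence the expectation factors and picks up a factor $\E[Y_i]=0$. Hence only partitions all of whose blocks have size $\ge 2$ survive, and such a partition has $b\le t/2$ blocks. For such a term $k$-wise independence (here $b\le k$) factors the expectation over blocks, and since $|Y_i|\le 1$ we have $|\E[Y_i^{j}]|\le\E[Y_i^2]\le\E[X_i]$ for all $j\ge 2$; summing over the distinct values assigned to the $b$ blocks is then at most $\big(\sum_i\E[X_i]\big)^b=\mu^b$. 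Combining this with a crude Stirling-type count of the number of partitions of $[t]$ into $b$ blocks of size $\ge 2$ yields $\E[T^t]\le\big(C(t\mu+t^2)\big)^{t/2}$ for an absolute constant $C$, the $t\mu$ term dominating when $\mu\gtrsim t$ and the $t^2$ term otherwise.

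Plugging this in, $\mathbb{P}(|T|\ge\lambda\mu)\le\big(\tfrac{Ct}{\lambda^2\mu}+\tfrac{Ct^2}{\lambda^2\mu^2}\big)^{t/2}$. I would then take $t$ to be the largest even integer at most $\min\{k,\ c\min\{\lambda,\lambda^2\}\mu\}$ for a small enough absolute constant $c$; if this quantity is below $2$ the claim is vacuous, since then $\min\{\lambda,\lambda^2\}\mu=O(1)$ and the right-hand side of the lemma is $\Theta(1)$. A short case split on whether $\lambda\le 1$ or $\lambda>1$ shows that with this choice the bracketed quantity is at most $\tfrac14$, so $\mathbb{P}(|T|\ge\lambda\mu)\le 4^{-t/2}=\exp(-\Omega(t))$; since $t=\Theta(\min\{k,\ \min\{\lambda,\lambda^2\}\mu\})$, this is at most $\exp(-\Omega(\min\{\lambda,\lambda^2\}\mu))+\exp(-\Omega(k))$, which is the claimed bound.

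The step I expect to be the main obstacle is the moment estimate $\E[T^t]\le(C(t\mu+t^2))^{t/2}$: obtaining a clean bound requires carefully counting set partitions of $[t]$ into blocks of size at least two, weighting each partition by $\mu^{b}$ with $b$ its number of blocks, and verifying that the resulting sum is dominated by its extreme terms. Once that estimate is in hand, the Markov step and the optimization over $t$ are routine.
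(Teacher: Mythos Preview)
The paper does not prove this lemma; it is quoted as a preliminary result from \cite{bellare1994randomness} and used as a black box. Your proposal is the standard moment-method argument that underlies that reference (and the closely related Schmidt--Siegel--Srinivasan bound): center, take an even $t\le k$ moment, expand, kill terms with a singleton index via $k$-wise independence, bound the surviving partition sum by $(C(t\mu+t^2))^{t/2}$, and optimize $t$. The outline is correct, including the case split on $\lambda\lessgtr 1$ and the observation that when $\min\{\lambda,\lambda^2\}\mu=O(1)$ the claim is trivial. The only real work, as you note, is the combinatorial estimate on $\E[T^t]$; the cleanest route there is to bound the contribution of partitions with exactly $b$ blocks (all of size $\ge 2$) by $\binom{t}{2b}\frac{(2b)!}{2^b b!}\,b^{\,t-2b}\mu^b$ and then sum over $b\le t/2$, which recovers the $(C(t\mu+t^2))^{t/2}$ form.
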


\begin{lem}\label{prefix bound} \cite{blasiok2017continuous} Let $x^{(1)}, x^{(2)}, \ldots x^{(m)} \in \mathbb{R}^n$ satisfy $0 \preceq x^{(1)} \preceq x^{(2)} \preceq \ldots \preceq x^{(m)}$. Let $Z \in \mathbb{R}^n$ have $k$-wise independent entries marginally distributed according to $\mathcal{D}_p$. Then for some $C_p$ depending only on $p$,

$$
\mathbb{P}\left(\sup _{k \leq m}\left|\left\langle Z, x^{(k)}\right\rangle\right| \geq \lambda\left\|x^{(m)}\right\|_p\right) \leq C_p\left(\frac{1}{\lambda^{2 p /(2+p)}}+k^{-1 / p}\right)
$$

\end{lem}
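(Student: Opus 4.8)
The plan is to normalize $\|x^{(m)}\|_p=1$, split the $p$-stable vector $Z$ into a part with bounded entries and a heavy-tailed part at a threshold $\tau$ chosen at the end, and to exploit the nesting $0\preceq x^{(1)}\preceq\cdots\preceq x^{(m)}$ so that taking a supremum over the $m$ prefixes costs essentially nothing. Concretely, I would first rescale so that $\|x^{(m)}\|_p=1$ and (to avoid the clash with the independence parameter $k$) rename the supremum index; writing $y:=x^{(m)}$ so $\sum_i y_i^p=1$, set $c^{(j)}_i:=x^{(j)}_i/y_i\in[0,1]$ (nondecreasing in $j$, equal to $1$ at $j=m$) and $W_i:=y_iZ_i$, so that $\langle Z,x^{(j)}\rangle=\sum_i c^{(j)}_iW_i$ and, from the tail $\mathbb{P}(|Z_i|>t)=\Theta(\min\{1,t^{-p}\})$ of $\mathcal D_p$, $\mathbb{P}(|W_i|>t)=\Theta(\min\{1,(y_i/t)^p\})$. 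Fixing $\tau>0$ I would decompose $W_i=A_i+B_i$ with $A_i:=W_i\,\mathbbm{1}[|W_i|\le\tau]$; both families stay $k$-wise independent, $\mathbb{E}A_i=0$ by the symmetry of $\mathcal D_p$, and this \emph{coordinate-adapted} truncation (truncating $W_i$, not $Z_i$) is exactly what keeps $\|y\|_1$ out of the final bounds.

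For the heavy part, since each $c^{(j)}_i\in[0,1]$ is nondecreasing in $j$, monotonicity gives $\sup_j\bigl|\sum_i c^{(j)}_iB_i\bigr|\le\sum_i|B_i|$, a single random variable with no supremum at all. A direct tail computation then gives $\mathbb{E}\sum_i|B_i|=\sum_i\Theta(y_i^p\tau^{1-p})=\Theta(\tau^{1-p})$ for $p>1$ (the at most $\tau^{-p}$ coordinates with $y_i>\tau$ being handled separately, using that their $\ell_1$-mass is $O(\tau^{1-p})$), so Markov controls this contribution; for $p\le1$ one first removes, at cost $O(\lambda^{-p})$, the rare event that some $|W_i|$ exceeds $\lambda$, and then applies Markov to the further-truncated sum.

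The crux is the bounded part $S^A_j:=\sum_i c^{(j)}_iA_i$. A one-line second-moment estimate gives $\operatorname{Var}(S^A_j)\le\sum_i\mathbb{E}[A_i^2]=\Theta(\tau^{2-p})$ uniformly in $j$ (using $p<2$ and $\sum_iy_i^p=1$); the real work is a maximal inequality for $\sup_j|S^A_j|$ under only $k$-wise independence. I would (i) bound an even moment $\mathbb{E}|S^A_m|^q$ for $q\le k$ by its full-independence value $O(\sigma^q+\tau^{q-2}\sigma^2)$ with $\sigma^2=\Theta(\tau^{2-p})$, valid because only $q$-wise products of the $A_i$ enter, and (ii) transfer this tail from the endpoint $S^A_m$ to $\sup_j|S^A_j|$ — either via a L\'evy/Ottaviani-type argument, arranged so that the monotone structure of the $x^{(j)}$ lets it survive $k$-wise independence up to constants, or, as a fallback, via a dyadic partition of $\{1,\dots,m\}$ into blocks of geometrically shrinking incremental variance, with a union bound over blocks and a higher-moment estimate inside each, set up so that no $\log m$ factor survives. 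The cap $q\le k$ is what produces the additive $k^{-1/p}$ term.

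Finally I would add the two failure probabilities and optimize over $\tau$; the balance produces a power of $\lambda$, and tracking the exponents through the bounded-part maximal inequality yields $\lambda^{-2p/(2+p)}$, after which absorbing all $p$-dependent constants into $C_p$ gives the claim. I expect the main obstacle to be step (ii): obtaining a tail bound for the \emph{supremum} $\sup_j|S^A_j|$ that is essentially as strong as for a single prefix while (a) paying only a polynomial-in-$k$ additive loss and (b) not leaking a $\log m$ factor — a naive union bound over all $m$ prefixes would destroy the bound, so the nesting $x^{(1)}\preceq\cdots\preceq x^{(m)}$ must be used essentially. Everything else — the $\mathcal D_p$ tail estimates, the variance computation, and the final optimization in $\tau$ — is routine.
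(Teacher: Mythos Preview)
The paper does not prove this lemma; it is quoted verbatim from \cite{blasiok2017continuous} as a preliminary result and used as a black box in the proof of Lemma~\ref{est}. So there is no ``paper's proof'' to compare your proposal against.

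On the proposal itself: the high-level skeleton (normalize, truncate $W_i=y_iZ_i$ at a level $\tau$, control the heavy part by the monotonicity bound $\sup_j\bigl|\sum_i c_i^{(j)}B_i\bigr|\le\sum_i|B_i|$, control the light part by a second-moment maximal inequality, optimize $\tau$) is standard and the heavy-part estimates are correct. The gap is exactly where you flag it, step~(ii), but it is more serious than ``the main obstacle'': it is essentially the entire content of the lemma, and neither of your two suggested mechanisms works.

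L\'evy/Ottaviani-type inequalities bound $\sup_{\ell}\bigl|\sum_{i\le \ell}X_i\bigr|$ where the time index $\ell$ corresponds to revealing more \emph{independent} summands. Here the index $j$ does no such thing: the increments
\[
S^A_{j+1}-S^A_j=\sum_i\bigl(c_i^{(j+1)}-c_i^{(j)}\bigr)A_i
\]
all depend on the \emph{same} random variables $A_1,\dots,A_n$, so they are fully dependent across $j$ (refining to micro-steps where a single $c_i$ moves does not help, since the same $A_i$ recurs). For the same reason a dyadic partition of $\{1,\dots,m\}$ yields blocks that are not even approximately independent, so the usual chaining arithmetic that removes the $\log m$ cannot be run. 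Finally, the pure monotonicity bound $\sup_j|S^A_j|\le\max(P_m,N_m)$ with $P_m=\sum_iA_i^+$, $N_m=\sum_iA_i^-$ is too weak: for $p>1$ one has $\mathbb{E}[P_m]\asymp\|y\|_1$, not $\|y\|_p$, so this route reintroduces exactly the $\ell_1$ dependence you were trying to avoid. You would need a genuinely different maximal argument for the light part --- one that simultaneously uses the symmetry of the $A_i$, the nesting of the $c^{(j)}$, and survives $k$-wise independence --- and that argument is the substance of the cited result.
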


Finally, we include two additional lemmas that are essential for our heavy hitter detection algorithm in the sliding window model.

Lemma~\ref{2}, due to Braverman et al.~\cite{braverman2018nearly}, shows that a constant-factor approximation to the $\ell_2$ norm within a window suffices to recover all $\epsilon$-heavy hitters under the $\ell_2$ norm, with provable accuracy guarantees and near-optimal space complexity.

\begin{lem}\label{2}\cite{braverman2018nearly}
    For a fixed window $W$, if given a $\frac 12$ approximation of $\ell_2 ^ {W}$, then there is an algorithm that outputs all the $\epsilon$ heavy hitters for $\ell_2$ within the window, and can guarantee that all the output elements have frequency $\ge \frac{\epsilon}{12} \ell_2 ^ W$. This algorithm uses $O(\epsilon ^ {-2} \log ^ 2n (\log \frac 1{\epsilon}+\log \log n))$ and has success probability $\ge \frac 23$.
\end{lem}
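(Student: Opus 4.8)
The plan is to follow the standard recipe for sliding‑window heavy hitters: combine the smooth‑histogram framework of \cite{braverman2007smooth} with a \textsc{CountSketch}-style recovery structure, refined inside each ``epoch'' in the spirit of the difference estimator of \cite{woodruff2022tight}, and use the supplied $\tfrac12$‑approximation $\widetilde L$ of $\ell_2^W$ only to calibrate the reporting threshold. First I would maintain a \emph{constant‑factor} $\ell_2$ smooth histogram: $O(\log n)$ timestamps $c_1<\dots<c_s$ with an $\ell_2$ sketch of each suffix $[c_i,m]$, guaranteeing that for the window $W=[a,m]$ there is an index $i$ with $c_i\le a<c_{i+1}$ and $\ell_2^{(c_i,m)}=\Theta(\ell_2^W)$, so that discarding the prefix $[c_i,a-1]$ inflates the $\ell_2$ mass by only a constant. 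Since the suffix masses decay geometrically in $i$, this carves the stream into $O(\log n)$ epochs $[c_i,c_{i+1})$ of geometrically decreasing $\ell_2$ mass.

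Within each epoch I would keep a dyadic hierarchy of depth $O(\log n)$ and, for each dyadic block $B$, a \textsc{CountSketch} on $B$ whose accuracy is measured against $B$'s \emph{own} $\ell_2$ mass — this is the difference‑estimator idea: a level‑$\ell$ block of an epoch of mass $M\approx\ell_2^W$ carries $\ell_2$ mass $\approx M/2^{\ell/2}$, so recovering an item of window‑frequency $\eps M$ inside it needs only $\Theta(\eps^{-2}2^{-\ell})$ buckets per block, hence $\Theta(\eps^{-2})$ buckets at each of the $O(\log n)$ levels, $\Theta(\eps^{-2}\log n)$ per epoch and $\Theta(\eps^{-2}\log^2 n)$ overall. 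On a query I would take the bracketing timestamp $c_i$, express the retained suffix $[a,c_{i+1}-1]$ of the epoch as an exact union of $O(\log n)$ whole dyadic blocks plus one partially‑covered leaf spanning $O(1)$ stream positions (which perturbs any frequency by $O(1)\ll\eps\ell_2^W$), and use linearity to assemble a \textsc{CountSketch} of $W$ from these block sketches together with the suffix sketch of $[c_{i+1},m]$.

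Running the recovery procedure on the assembled sketch yields a candidate set of size $O(\eps^{-2})$ with estimates $\hat f_j$ accurate to $\pm\tfrac{\eps}{C}\ell_2^W$ for a large constant $C$; I would report exactly those $j$ with $\hat f_j\ge\tfrac{\eps}{2}\widetilde L$. Every $\eps$‑$\ell_2$‑heavy hitter $j$ of $W$ has $f_j^W\ge\eps\ell_2^W=\Omega(\eps)\ell_2^{(c_i,m)}$, so it survives both recovery and the threshold test; conversely a reported $j$ has $f_j^W\ge\tfrac{\eps}{2}\widetilde L-\tfrac{\eps}{C}\ell_2^W\ge\tfrac{\eps}{12}\ell_2^W$ using $\widetilde L\ge\tfrac12\ell_2^W$ and a suitable $C$, which is the one‑sided guarantee asked for. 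For the success probability I would note that whp recovery need only union‑bound over the $\le\eps^{-2}$ possible heavy items, not over all of $[U]$, so $O(\log\tfrac1\eps)$ independent rows suffice; identification of a bucket's heavy item can be folded in by a dyadic‑over‑the‑universe refinement or by storing its signed identity digest. To reach the stated bits bound I would store \emph{rounded} \textsc{CountSketch} counters as in \cite{rounding} — sign plus top $O(\log\tfrac1\eps+\log\log n)$ bits — since detection and thresholding only require constant relative accuracy against each block's mass, with the rounding errors over the $\Theta(\eps^{-2})$ buckets of a level summing to an $O(\eps)$‑fraction of that mass after a union bound over the $O(\log n)$ levels and epochs.

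The main obstacle, and where the bookkeeping is delicate, is reconciling the two competing demands: the within‑epoch refinement must be fine enough that (i) the partially‑covered boundary block is negligible \emph{relative to $\eps\ell_2^W$} (not merely relative to $\ell_2^{(c_i,m)}$) and (ii) the dyadic reassembly still resolves every window heavy hitter after the contributions of $O(\log n)$ blocks are summed with their individual (weak) accuracies — while simultaneously the total counter count stays at $O(\eps^{-2}\log^2 n)$, i.e. a finely‑spaced smooth histogram (which would cost an extra $\poly(1/\eps)$ factor) must be shown to be unnecessary. Making the level‑dependent accuracy budget tight enough for (i)–(ii) but loose enough for the space bound, and verifying that the $O(\log\tfrac1\eps)$‑row union bound together with the rounding survives the reassembly, is the part I expect to require the most care.
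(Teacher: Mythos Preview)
This lemma is not proved in the paper; it is quoted from \cite{braverman2018nearly} and used as a black box, so there is no in-paper argument to compare your proposal against.

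As a standalone argument, your plan has a genuine gap. The two structural claims you rely on are in direct tension: to make the partially-covered leaf span only $O(1)$ stream positions you need a \emph{time}-dyadic hierarchy of depth $\Theta(\log n)$, but for time-dyadic blocks there is no reason a level-$\ell$ block should have $\ell_2$ mass $\approx M/2^{\ell/2}$ --- an epoch's entire mass could sit in a single leaf. If instead you split adaptively by $F_2$ mass (as the actual difference-estimator structure does), you do control the block masses, but then the hierarchy has depth $O(\log\tfrac1\eps)$ rather than $O(\log n)$, the finest blocks are not $O(1)$-length, and the boundary block can hold $\Theta(\eps\,\ell_2^W)$ of the very item you are trying to detect --- exactly the scale you must resolve.

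Even granting the mass bound, the ``assemble by linearity'' step fails: \textsc{CountSketch} tables with different numbers of buckets cannot be added. If you instead estimate the target item's frequency in each block separately and sum, each of the $O(\log n)$ blocks in your covering contributes additive error $\Theta(\eps M)$ (a level-$\ell$ block of mass $M/2^{\ell/2}$ with $\eps^{-2}2^{-\ell}$ buckets gives per-item error $\sqrt{(M^2/2^\ell)\cdot \eps^2 2^\ell}=\eps M$), so the summed error is $\Theta(\eps M\log n)$, not $O(\eps M)$. Tightening each block's accuracy to absorb this costs at least another $\log^2 n$ factor in space, overshooting the stated bound. The construction in \cite{braverman2018nearly} does not route through the difference-estimator framework for this step; you should consult that paper directly for the intended argument.
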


To extend this result to the $\ell_p$ setting for any $p \in (0,2]$, we use Lemma~\ref{heavyhitter}, which establishes that any algorithm capable of recovering $\epsilon^{p/2}$-heavy hitters under $\ell_2$ (with a suitable tail guarantee) also successfully identifies all $\epsilon$-heavy hitters under the $\ell_p$ norm.

\begin{lem}\cite{heavyhitter} \label{heavyhitter}For any $p \in(0,2]$, any algorithm that returns the $\epsilon^{p / 2}$-heavy hitters for $\ell_2$ satisfying the tail guarantee also finds the $\epsilon$-heavy hitters for $\ell_p$.
\end{lem}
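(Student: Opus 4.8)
The plan is to prove the underlying combinatorial fact: for any frequency vector $f$, being $(\epsilon,\ell_p)$-heavy implies being $(\epsilon^{p/2},\ell_2)$-heavy in the \emph{tail} sense used by $\ell_2$ heavy-hitter algorithms, where the tail is taken over the top $\epsilon^{-p}$ coordinates. Once this is in hand, any algorithm that returns every coordinate $i$ with $f_i \ge \theta\,\|f_{-1/\theta^2}\|_2$ for $\theta = \epsilon^{p/2}$ must return every $(\epsilon,\ell_p)$-heavy coordinate — and the bookkeeping works out because $1/\theta^2 = \epsilon^{-p}$, so the tail length demanded by the $\ell_2$ guarantee matches exactly the number of coordinates that could possibly be $(\epsilon,\ell_p)$-heavy. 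I would also note the counting bound: from $\|f\|_p^p \ge \sum_{i\text{ heavy}} f_i^p \ge (\#\text{heavy})\,\epsilon^p\|f\|_p^p$ one gets at most $\epsilon^{-p}$ heavy coordinates, so the $O(\epsilon^{-p})$ output-size requirement is also met.

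The one inequality doing the work is the interpolation bound $\|v\|_2 \le \|v\|_\infty^{\,1-p/2}\,\|v\|_p^{\,p/2}$, valid for $0<p\le 2$ (just write $\|v\|_2^2 = \sum v_j^2 \le \|v\|_\infty^{2-p}\sum|v_j|^p$). First I would fix $k = \lceil \epsilon^{-p}\rceil$ and let $f_{-k}$ be $f$ with its $k$ largest coordinates zeroed, and apply the bound to $v = f_{-k}$. For the two factors: $\|f_{-k}\|_p \le \|f\|_p$ trivially, and $\|f_{-k}\|_\infty = f_{(k+1)} \le \|f\|_p/k^{1/p}$, since the top $k+1$ coordinates are each at least $f_{(k+1)}$ and hence contribute at least $(k+1)f_{(k+1)}^p$ to $\|f\|_p^p$. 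Substituting,
\[
\|f_{-k}\|_2 \;\le\; \bigl(\|f\|_p/k^{1/p}\bigr)^{1-p/2}\,\|f\|_p^{\,p/2} \;=\; \|f\|_p\, k^{-(1/p-1/2)} \;\le\; \epsilon^{\,1-p/2}\,\|f\|_p,
\]
using $k^{-(1/p-1/2)} = \epsilon^{\,p(1/p-1/2)} = \epsilon^{\,1-p/2}$.

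Then if $i$ is $(\epsilon,\ell_p)$-heavy, $f_i \ge \epsilon\|f\|_p \ge \epsilon\cdot\epsilon^{-(1-p/2)}\|f_{-k}\|_2 = \epsilon^{p/2}\|f_{-k}\|_2$, so $i$ clears the $\ell_2$ tail-heavy-hitter threshold at parameter $\epsilon^{p/2}$ with tail length $\epsilon^{-p}$, and is therefore reported. This establishes the lemma as stated.

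The main difficulty here is not conceptual but keeping the exponents straight in the interpolation step and in the $k^{-(1/p-1/2)}=\epsilon^{1-p/2}$ simplification; everything else is routine. One subtlety worth flagging for the downstream application (Theorem~\ref{Heavy Hitter}) is the ``reports no light coordinates'' half: one cannot simply reverse the norm comparison above — $f_{-k}$ can even vanish when $f$ has exactly $\epsilon^{-p}$ large coordinates — so that direction is not part of this lemma and instead needs a separate pruning pass that estimates each candidate's frequency against the window's $\ell_p$ estimate. I would keep Lemma~\ref{heavyhitter} to the containment statement proved above and relegate the pruning to the heavy-hitters algorithm itself.
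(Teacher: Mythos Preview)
Your argument is correct and is the standard proof of this fact: the interpolation inequality $\|v\|_2^2 \le \|v\|_\infty^{2-p}\|v\|_p^p$ applied to the tail $f_{-k}$ with $k=\lceil\epsilon^{-p}\rceil$, together with the pigeonhole bound $f_{(k+1)}\le\|f\|_p/k^{1/p}$, yields exactly $\|f_{-k}\|_2\le\epsilon^{1-p/2}\|f\|_p$, from which the containment follows. The paper does not supply its own proof of this lemma---it is quoted from~\cite{heavyhitter}---so there is nothing further to compare; your remark that the ``no light coordinates'' direction is not covered by this reduction and must be handled separately by the heavy-hitter algorithm is also on point.
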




        




    

\section{Warmup: Constant-factor $F_p$ approximation}

In this section, we will introduce our result for \estt. This estimator enables us to design a constant-factor approximation algorithm for $F_p$, and in combination with previous results, it leads to a nearly optimal algorithm for identifying $\epsilon$-heavy hitters over sliding windows.

\subsection{Constructing a Strong Estimator}

First, we will show how to construct a space efficient \est defined in Definition \ref{Strong Estimator}. 

\begin{construction}\label{construct est}
    
Let $d$, $r$ and $s$ be parameters to be specified later, the  \est takes $\Pi \in \mathbb{R}^{d \times U}$ to be a random matrix with entries drawn according to $\mathcal{D}_p$, and such that the rows are r-wise independent, and all entries within a row are s-wise independent. For frequency vector $x ^ {(a, b)}$, the estimate $f_p(x ^ {(a, b)})$ is given by  $\med(|(\Pi x ^ {(a, b)})_1|,|(\Pi x ^ {(a, b)})_2|, \dots, |(\Pi x ^ {(a, b)})_d|)$. 
        
\end{construction}

The following lemma shows that, by appropriately setting the parameters $d$, $r$, and $s$, the above construction yields a valid \estt.

\begin{lem} \label{est}

Let $\epsilon_1, \epsilon_2 \in (0, 1)$ with $\epsilon_2 \le \epsilon_1$, $\delta \in (0, 1)$, and $p \in [1, 2]$. Given a stream of elements $u_1, u_2, \dots, u_m$, Construction~\ref{construct est} yields an $(\epsilon_1, \epsilon_2, \delta)$ \est by setting the parameters as follows:
\[
d = \Theta\left(\epsilon_1^{-2} \left( \log \tfrac{1}{\epsilon_2} + \log \tfrac{1}{\delta} \right) \right), \quad
r = \Theta\left( \log \tfrac{1}{\epsilon_2} + \log \tfrac{1}{\delta} \right), \quad
s = \Theta\left( \epsilon_1^{-p} \right).
\]
\end{lem}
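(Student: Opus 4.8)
## Proof Proposal

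The plan is to analyze the estimator row-by-row and then apply a median-concentration argument, with the key point being that we need a \emph{uniform} bound over all sub-windows $[a,b] \subseteq [l,r]$. Fix a window $[l,r]$. For a single row $i$ of $\Pi$, write $Y_j^{(i)} = \langle \Pi_i, x^{(l, l+j-1)}\rangle$ for the inner product against the $j$-th prefix of the window; the frequency vectors $x^{(l,l+j-1)}$ form an increasing sequence $0 \preceq x^{(l,l)} \preceq \cdots \preceq x^{(l,r)}$, so Lemma~\ref{prefix bound} applies directly and controls $\sup_j |Y_j^{(i)}|$ in terms of $\ell_p^{(l,r)}$. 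Since an arbitrary sub-window satisfies $x^{(a,b)} = x^{(l,b)} - x^{(l,a-1)}$, a bound on all prefixes gives a bound on all sub-windows by the triangle inequality, at the cost of a factor of $2$ in $\epsilon_2$. This is the mechanism by which we get the additive $\epsilon_2 \ell_p^{(l,r)}$ term \emph{for free}, i.e. without a union bound over the $\binom{n}{2}$ sub-windows.

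Next I would split the sub-windows into two regimes according to their $\ell_p$ norm. Call $[a,b]$ \emph{large} if $\ell_p^{(a,b)} \geq \epsilon_2 \ell_p^{(l,r)}$ (say up to a constant), and \emph{small} otherwise. For a small sub-window, we only need the additive guarantee: by Lemma~\ref{prefix bound} with $\lambda = \Theta(\epsilon_2)$ and $k = s = \Theta(\epsilon_1^{-p}) \gg \epsilon_2^{-p}$... wait, one must be careful here — actually the relevant independence parameter for the $\sup$ bound is the within-row independence $s$, and we need $s^{-1/p} \lesssim \epsilon_2^{2p/(2+p)}$ which holds since $s = \Theta(\epsilon_1^{-p}) \geq \Theta(\epsilon_2^{-p})$ and $2p/(2+p) \leq 1$; this makes $|(\Pi x^{(a,b)})_i|$ at most $O(\epsilon_2)\ell_p^{(l,r)}$ simultaneously for all small (indeed all) sub-windows with constant probability per row, hence the median is too. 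For large sub-windows there are effectively only $O(\epsilon_2^{-p})$ ``scales'' to worry about, and on each scale we want the multiplicative $(1\pm\epsilon_1)$ guarantee. For a \emph{fixed} sub-window $[a,b]$, each row satisfies $|(\Pi x^{(a,b)})_i| = |Z| \cdot \ell_p^{(a,b)}$ with $Z$ marginally $p$-stable (using the $s$-wise independence within the row, via a standard argument that $s = \Theta(\epsilon_1^{-p})$ suffices to simulate full independence for the median-of-a-single-coordinate statistic up to $(1\pm\epsilon_1)$ error — this is the role of the $s$ parameter and mirrors Indyk's analysis). The probability that $\mathrm{med}(|Z_1|,\dots,|Z_d|)$ deviates from the true median of $|\mathcal{D}_p|$ by more than a $(1\pm\epsilon_1)$ factor is, by Lemma~\ref{cher} applied to the indicator events $\{|Z_i| \leq (1\pm\epsilon_1)\,\mathrm{med}\}$ (whose means are $\frac12 \pm \Theta(\epsilon_1)$ by anti-concentration of the $p$-stable density near its median), at most $\exp(-\Omega(\epsilon_1^2 d)) + \exp(-\Omega(r))$; with $d = \Theta(\epsilon_1^{-2}(\log\frac{1}{\epsilon_2} + \log\frac1\delta))$ and $r = \Theta(\log\frac1{\epsilon_2} + \log\frac1\delta)$ this is at most $\mathrm{poly}(\epsilon_2)\cdot\delta$.

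Then I would union-bound: over the $O(\epsilon_2^{-p})$ scales in the large regime, times the failure probability $\mathrm{poly}(\epsilon_2)\delta$ per scale, contributes $O(\delta)$; the small-regime / additive event fails with probability $O(\delta)$ by the same median argument over rows; a discretization step handles the fact that ``scales'' are really a continuum (round $\ell_p^{(a,b)}$ to the nearest power of $(1+\epsilon_1)$, of which there are $O(\epsilon_1^{-1}\log\frac1{\epsilon_2})$ many in the large regime, a count still absorbed by $d$). Rescaling $\epsilon_1, \epsilon_2$ by constants throughout gives exactly the claimed form. Finally, the ``on all windows $[l,r]$'' clause of Definition~\ref{Strong Estimator} follows because the per-row prefix bound (Lemma~\ref{prefix bound}) and the per-scale multiplicative bound are translation-invariant in the window endpoints — the argument above never used anything about $l,r$ beyond the ordering of prefixes, so the same $\Pi$ works for every window simultaneously with the stated failure probability.

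The main obstacle I anticipate is the interplay between the two independence parameters. We have \emph{$r$-wise} independence \emph{across rows} and only \emph{$s$-wise} independence \emph{within} each row, with $s = \Theta(\epsilon_1^{-p})$ much smaller than $d$. The subtlety is that the ``$|(\Pi x^{(a,b)})_i| \approx |Z|\ell_p^{(a,b)}$'' step needs the within-row independence to be enough to reproduce the $p$-stable behavior accurately enough that the median of $d$ such coordinates is a $(1\pm\epsilon_1)$ estimate — this is where $s = \Theta(\epsilon_1^{-p})$ (rather than, say, $\Theta(\epsilon_1^{-2})$) is exactly what is needed, matching the known optimal independence for Indyk-style $\ell_p$ sketches, and it requires invoking a moment/CDF-approximation bound for $s$-wise independent $p$-stable sums rather than treating them as genuinely stable. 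Getting this quantitatively tight — so that the $\log\frac1{\epsilon_2}$ (needed for the uniform bound over scales) lands only in $d$ and $r$, and not in $s$ — is the crux, and is presumably why the lemma states three separate parameters with three different dependencies.
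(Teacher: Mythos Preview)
Your proposal has two genuine gaps, and the paper's proof takes a different route.

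First, your invocation of Lemma~\ref{prefix bound} on the prefixes of the \emph{whole} window with $\lambda = \Theta(\epsilon_2)$ is vacuous: the tail bound there is $C_p(\lambda^{-2p/(2+p)} + s^{-1/p})$, and with $\lambda < 1$ the first term already exceeds $1$. Lemma~\ref{prefix bound} is only useful when $\lambda \gg 1$, i.e.\ when the threshold is much larger than the norm of the last prefix. Even setting that aside, bounding $\sup_j |\langle \Pi_i, x^{(l,j)}\rangle|$ does not control the differences $|\langle \Pi_i, x^{(l,b)} - x^{(l,a-1)}\rangle|$ to additive accuracy $\epsilon_2 \ell_p^{(l,r)}$: each prefix inner product is typically of order $\ell_p^{(l,r)}$ (indeed $\langle \Pi_i, x^{(l,r)}\rangle \sim Z\cdot \ell_p^{(l,r)}$), so their difference can be large even when both are bounded. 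Your ``factor of $2$ via triangle inequality'' remark is correct only for the sup bound, which is too coarse here.

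Second, your ``union bound over $O(\epsilon_2^{-p})$ scales'' for large sub-windows does not work. The per-row event $\{|\Pi_i x^{(a,b)}| \in (1\pm\epsilon_1)\ell_p^{(a,b)}\}$ depends on the \emph{vector} $x^{(a,b)}$, not just on its norm; two sub-windows at the same scale can have essentially independent estimator behavior. Discretizing $\ell_p^{(a,b)}$ to powers of $(1+\epsilon_1)$ does nothing to reduce the number of distinct events you must control --- there can still be $\binom{n}{2}$ of them.

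What the paper does instead is a checkpoint decomposition. It fixes indices $l = t_0 < t_1 < \cdots < t_{q+1} = r$, each $t_k$ the first time the mini-interval $[t_{k-1},t_k]$ reaches $\ell_p \geq \epsilon_2^{4/p}\ell_p^{(l,r)}$, so $q \leq \epsilon_2^{-4}$. The $(1\pm\epsilon_1)$ multiplicative control is established only on the $O(\epsilon_2^{-8})$ checkpoint-to-checkpoint intervals $[t_a,t_b-1]$ --- a finite union bound over \emph{concrete vectors}, which is what forces the extra $\log\frac{1}{\epsilon_2}$ in $d$ and $r$. An arbitrary $[a,b]$ with $a\in[t_u,t_{u+1})$, $b\in[t_v,t_{v+1})$ is then written as $x^{(t_u,t_v-1)}$ minus a prefix of $[t_u,t_{u+1})$ plus a prefix of $[t_v,t_{v+1})$. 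Lemma~\ref{prefix bound} is applied \emph{to each mini-interval separately}, with threshold $\epsilon_2\ell_p^{(l,r)}$ against a prefix norm at most $\epsilon_2^{4/p}\ell_p^{(l,r)}$, so $\lambda \approx \epsilon_2^{1-4/p} \gg 1$ and the per-row failure probability is $O(\epsilon_1)$; this is exactly where $s = \Theta(\epsilon_1^{-p})$ enters, to make $s^{-1/p} = O(\epsilon_1)$ as well. The checkpoint mesh is the idea you are missing.
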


To show the space efficiency, we also analyze the space complexity of storing the matrix $\Pi$ used in the construction.

\begin{lem}\label{space}
The memory required to store the matrix $\Pi$ for an $(\epsilon_1, \epsilon_2, \delta)$ \est is 
\[
O\left(\epsilon_1^{-p} \left( \log \tfrac{1}{\epsilon_2} + \log \tfrac{1}{\delta} \right) \log (mU)\right) \text{ bits}.
\]
\end{lem}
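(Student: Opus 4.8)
The plan is to account separately for (i) the randomness needed to specify the $k$-wise independent families, and (ii) the precision needed to store each sampled entry of $\Pi$. Recall from Construction~\ref{construct est} that $\Pi \in \mathbb{R}^{d\times U}$ has its $d$ rows $r$-wise independent, and within each row the $U$ entries are $s$-wise independent, with every entry marginally distributed as $\mathcal{D}_p$. By Lemma~\ref{est} we have $d = \Theta(\epsilon_1^{-2}(\log\frac1{\epsilon_2}+\log\frac1\delta))$, $r = \Theta(\log\frac1{\epsilon_2}+\log\frac1\delta)$, and $s = \Theta(\epsilon_1^{-p})$.

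The first step is the standard observation that a $k$-wise independent family of random variables over a universe of size $N$, each taking values in a discretized range of size $\mathrm{poly}(N)$, can be generated from a seed of $O(k\log N)$ bits (e.g.\ via a degree-$(k-1)$ polynomial over a field of size $\mathrm{poly}(N)$, or Reed--Solomon/Joffe-type constructions). Here we only need the entries to a precision of $\mathrm{poly}(mU)$ bits — the rounding procedure for generating $p$-stable variables via the closed form $f(r,\theta)$ given in the preliminaries introduces only a $1/\mathrm{poly}(mU)$ error, which is absorbed into the $(1\pm\epsilon_1)$ and additive $\epsilon_2 \ell_p^{(l,r)}$ slack of the \est guarantee (this is where I would invoke that $m,U \le \mathrm{poly}(n)$ so that such rounding does not affect correctness). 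So I would first fix a discretization of the entry range into $\mathrm{poly}(mU)$ buckets.

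Next I would combine the two levels of independence. The seed has two parts: a hash that is $r$-wise independent across the $d$ row-indices, costing $O(r\log(mU))$ bits; and, for each row, a hash that is $s$-wise independent across the $U$ column-indices, costing $O(s\log(mU))$ bits. A clean way to phrase this: one degree-$(r-1)$ polynomial over $\mathbb{F}_{\mathrm{poly}(mU)}$ selecting a ``row seed'' for each of the $d$ rows, together with a degree-$(s-1)$ polynomial (seeded by the row seed) over $\mathbb{F}_{\mathrm{poly}(mU)}$ giving the entry in each of the $U$ columns. Composing these preserves the required marginal ($\mathcal{D}_p$, up to rounding) and the stated independence structure, and the total seed length is $O((r+s)\log(mU))$ bits. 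Since $r = O(\log\frac1{\epsilon_2}+\log\frac1\delta)$ and $s = O(\epsilon_1^{-p})$, and $\epsilon_1^{-p} \le \epsilon_1^{-p}(\log\frac1{\epsilon_2}+\log\frac1\delta)$, this is $O\big(\epsilon_1^{-p}(\log\frac1{\epsilon_2}+\log\frac1\delta)\log(mU)\big)$ bits, as claimed.

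The main thing to be careful about — rather than a deep obstacle — is justifying that we never need to materialize $\Pi$ explicitly: at query time, to compute $\med(|(\Pi x^{(a,b)})_1|,\dots,|(\Pi x^{(a,b)})_d|)$ we only maintain the $d$ linear sketches $\Pi x$ incrementally, and each coordinate $(\Pi x)_i$ requires recomputing the relevant entries of row $i$ on the fly from the seed. Thus the $O((r+s)\log(mU))$-bit seed, together with $d$ counters of $O(\log(mU))$ bits each for the sketch values, is all that is stored; the $d\log(mU)$ term for the sketch itself is dominated by (is in fact smaller than) the seed bound since $d = \Theta(\epsilon_1^{-2}(\log\frac1{\epsilon_2}+\log\frac1\delta)) = O(\epsilon_1^{-p}(\log\frac1{\epsilon_2}+\log\frac1\delta)\cdot \epsilon_1^{p-2})$ and in any case the lemma only asks for the memory to store $\Pi$, which is exactly the seed. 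Hence the bound follows.
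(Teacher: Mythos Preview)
Your overall approach---discretize the $\mathcal{D}_p$ entries to precision $1/\mathrm{poly}(mU)$ and then store only the seed for a limited-independence family---is exactly what the paper does (it cites the analogous argument in \cite{blasiok2017continuous} and \cite{kane2010exact}). The final bound you reach is also correct.

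However, your seed-length computation has a real error. You claim the composed construction needs only $O((r+s)\log(mU))$ random bits, but the construction you describe cannot achieve this. A degree-$(s-1)$ polynomial (i.e.\ an $s$-wise independent hash over $[U]$) requires $s$ field elements---$\Theta(s\log(mU))$ bits---as its seed, not a single field element. Your ``row seed'' $P(i)$ is a single field element and cannot specify an $s$-wise independent family. To make the rows $r$-wise independent \emph{as random vectors} (which is what the Chernoff step in Lemma~\ref{cher} over the $d$ indicator variables requires), the outer hash must map each row index to an element of a space of size roughly $|\mathbb{F}|^s$, and an $r$-wise independent map into that space costs $\Theta(r\cdot s\log(mU))$ bits. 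This is precisely the $O(rs(\log U + \tau) + r\log d)$ bound that the paper states. Your final step happens to weaken $(r+s)$ to $rs$ anyway, so the conclusion survives, but the intermediate claim $O((r+s)\log(mU))$ is not justified by the construction you wrote down; replace it with the product $O(rs\log(mU))$ and the argument goes through.
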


Last, we will show an additional property of \estt, which is crucial for solving $F_p$ estimation over sliding windows.  

\begin{lem}\label{add} (Additional Property for the \estt)
    The $(\eps_1, \eps_2, \delta)$ \est $f_p$ has the following property:

For a sequence of elements in the stream $u_1, u_2, \dots, u_m$ and a window $[l, r]$, with probability $\ge 1 - \delta \cdot \log r$, for any interval $[a, b]$ that $1 \le a < l, l \le b \le r$, $f_p(x ^ {(a, b)}) = (1 \pm (\epsilon_1 +2 \epsilon_2)) \ell_p ^ {(a, b)} \pm 2\epsilon_2 \ell_p ^ {(l,r)} $.  
    
\end{lem}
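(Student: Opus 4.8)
The plan is to derive this from the basic Strong Estimator guarantee of Definition~\ref{Strong Estimator}, applied \emph{not} to the window $[l,r]$ itself (of which $[a,b]$ is not a sub-window once $a<l$) but to a short list of enlarged windows $[c,r]$ whose left endpoints $c$ range over a doubling sequence of ``checkpoints'' inside the prefix $[1,l-1]$. The point to exploit is that for a window $[c,r]$ the allowed additive slack is $\epsilon_2\ell_p^{(c,r)}$, and once $c\le a$ we may split $\ell_p^{(c,r)}\le\ell_p^{(c,l-1)}+\ell_p^{(l,r)}$: the second piece is exactly the term we are permitted, while if the checkpoints are chosen so that $\ell_p^{(c,l-1)}$ is within a factor $2$ of $\ell_p^{(a,l-1)}\le\ell_p^{(a,b)}$, the first piece only inflates the multiplicative error from $\epsilon_1$ to $\epsilon_1+2\epsilon_2$.

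We may assume $l\ge 2$, else the statement is vacuous. Let $c_j$ be the smallest index $c$ with $\ell_p^{(c,l-1)}\le 2^{\,j}$, for $j=0,1,\dots,J$ where $J=\lceil\log_2\ell_p^{(1,l-1)}\rceil$ (so $c_J=1$); note $c_0=l-1$ and $c_0\ge c_1\ge\cdots\ge c_J$. Since $p\ge 1$ we have $\ell_p^{(1,l-1)}\le l-1<r$, so there are at most $\log_2 r+2=O(\log r)$ checkpoints. Invoke Definition~\ref{Strong Estimator} for each of the windows $[c_j,r]$ and take a union bound: with probability at least $1-(J+1)\delta=1-O(\delta\log r)$, for every $j$ and every $[a',b']\subseteq[c_j,r]$,
\[
f_p(x^{(a',b')})=(1\pm\epsilon_1)\,\ell_p^{(a',b')}\pm\epsilon_2\,\ell_p^{(c_j,r)}.
\]
Condition on this event for the rest of the argument.

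Fix any $[a,b]$ with $1\le a<l\le b\le r$ and let $j$ be the smallest index with $c_j\le a$; then either $j=0$ and $a=l-1$, or $j\ge 1$ and $c_{j-1}>a\ge c_j$. In the latter case $a\le c_{j-1}-1$, so $\ell_p^{(a,l-1)}\ge\ell_p^{(c_{j-1}-1,l-1)}>2^{\,j-1}\ge\tfrac12\ell_p^{(c_j,l-1)}$ (the middle inequality is the defining minimality of $c_{j-1}$); in the former case $\ell_p^{(c_j,l-1)}=\ell_p^{(a,l-1)}=1$. Either way $\ell_p^{(c_j,l-1)}\le 2\ell_p^{(a,l-1)}\le 2\ell_p^{(a,b)}$, the last step since $[a,l-1]\subseteq[a,b]$. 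As $[a,b]\subseteq[c_j,r]$, the displayed estimate together with the triangle inequality $\ell_p^{(c_j,r)}\le\ell_p^{(c_j,l-1)}+\ell_p^{(l,r)}$ (using $x^{(c_j,r)}=x^{(c_j,l-1)}+x^{(l,r)}$) gives
\[
f_p(x^{(a,b)})=(1\pm\epsilon_1)\,\ell_p^{(a,b)}\pm\epsilon_2\bigl(2\ell_p^{(a,b)}+\ell_p^{(l,r)}\bigr)=(1\pm(\epsilon_1+2\epsilon_2))\,\ell_p^{(a,b)}\pm\epsilon_2\,\ell_p^{(l,r)},
\]
which is at least as strong as the claimed bound.

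The one genuinely delicate step is the choice of the auxiliary windows: they must jointly cover every possible left endpoint $a<l$ (so that some $[c_j,r]$ contains $[a,b]$), they must number only $O(\log r)$ (so the union bound costs $O(\delta\log r)$, matching the claim), and the residual prefix mass $\ell_p^{(c_j,l-1)}$ in the window we select must be chargeable to $\ell_p^{(a,b)}$ up to a universal constant; the $\ell_p$-doubling checkpoints above are engineered to satisfy all three requirements at once. Everything else is routine bookkeeping with the triangle inequality and the monotonicity of $\ell_p^{(\cdot,\cdot)}$ under extending an interval.
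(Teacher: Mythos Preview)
Your proof is correct and follows essentially the same approach as the paper: define $O(\log r)$ doubling checkpoints for the left endpoint, apply the Strong Estimator guarantee on each enlarged window $[\text{checkpoint},r]$, union bound, and then absorb the additive slack via the triangle inequality. The only difference is the quantity you double on: you use $\ell_p^{(c,l-1)}$ whereas the paper uses $\ell_p^{(l_j,r)}$ (defining $l_j$ as the smallest index with $\ell_p^{(l_j,r)}\le 2^j$, then bounding $\ell_p^{(l_j,r)}\le 2\ell_p^{(a,r)}\le 2(\ell_p^{(a,b)}+\ell_p^{(b,r)})$); your choice is slightly cleaner and even gives the sharper additive term $\epsilon_2\ell_p^{(l,r)}$ rather than the stated $2\epsilon_2\ell_p^{(l,r)}$, but the underlying idea is identical.
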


\subsection{Constant-factor $F_p$ approximation }

First, we will introduce the algorithm given in \cite{2007}: it maintains $O(\epsilon^{-p} \log n)$ timestamps to track the times at which $\ell_p$ changes by a factor of $(1 + \epsilon^p)$. This algorithm requires $O(\log^3 n)$ bits of space, where one $\log n$ factor accounts for the number of sketches, another $\log n$ arises from the need for high success probability in order to apply a union bound over all sketches, and the final $\log n$ comes from the fact that each entry in the sketch requires $\log n$ bits.

Our key improvement over their algorithm is that, by using the \est, we no longer require the sketches to have high success probability. The \est provides guarantees over all sub-intervals directly, eliminating the need for a union bound.

We present an $\epsilon$-approximation algorithm for $\ell_p$ over sliding windows that uses $\tilde{O}(\epsilon^{-3p} \log^2 n)$ space. This is the first algorithm achieving $o(\log^3 n)$ space for this problem. As $\ell_p$ estimation and $F_p$ estimation are fundamentally the same, we do not distinguish between the two here. 

The following theorem formalizes the guarantees of our approach:

\begin{theorem}\label{1}
    For any fixed window $W$, with probability $\ge 1 - \delta$, Algorithm \ref{alg:reg1} will give an $\epsilon$ approximation of $\ell_p ^ {W}$, and  use $O(\epsilon ^ {-3p} \log ^ 2n (\log \frac 1{\epsilon} + \log \frac 1{ \delta}+  \log \log n))$ bits of space.
\end{theorem}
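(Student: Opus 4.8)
The plan is to instantiate the smooth‑histogram scheme of \cite{2007} for the $\ell_p$ norm, but to replace the independent $F_p$ sketches it keeps at each retained timestamp by (the linear sketch $\Pi x$ underlying) a single shared \estt. Concretely, Algorithm~\ref{alg:reg1} draws one matrix $\Pi$ as in Construction~\ref{construct est} with parameters $\eps_1 = \Theta(\eps^p)$, $\eps_2 = \Theta(\eps^{p+1})$ and failure probability $\delta' = \Theta(\delta/\log n)$; by Lemma~\ref{est} this yields an $(\eps_1,\eps_2,\delta')$ \est with $d = \Theta(\eps^{-2p}(\log\tfrac1\eps+\log\tfrac1\delta+\log\log n))$. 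It maintains timestamps $t_1 < t_2 < \cdots < t_s$ (each equal to its creation time), and for each live $t_i$ stores the vector $\Pi x^{(t_i,\,\mathrm{now})}$, incremented coordinate‑wise as the stream advances; a timestamp is deleted as soon as the \est estimates of its two neighbours agree to within a $(1-\Theta(\eps^p))$ factor, exactly as in the smooth histogram. Here one uses that $\ell_p$ is $(\Theta(\eps),\Theta(\eps^p))$‑smooth, which follows from $\sum_i (a_i+b_i)^p \ge \sum_i a_i^p + \sum_i b_i^p$ for $p \ge 1$. On a query for window $W = [l,r]$ the algorithm outputs $f_p(x^{(t_{i_0},r)})$, where $t_{i_0}$ is the largest live timestamp with $t_{i_0}\le l$.

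For the space bound: $\ell_p$ takes values in $[1,\poly(n)]$ and (in the standard smooth‑histogram count) drops by a $(1-\Theta(\eps^p))$ factor every second live timestamp, so $s = O(\eps^{-p}\log n)$ — the $O(\log n)$ low‑$\ell_p$ timestamps near $r$, where the additive error of the estimator dominates the deletion test, add only $O(\log n)$ more. Each live timestamp stores $d = \Theta(\eps^{-2p}(\log\tfrac1\eps+\log\tfrac1\delta+\log\log n))$ sketch coordinates, each rounded to $O(\log(mU)) = O(\log n)$ bits (standard, and harmless for the estimate), for $O(\eps^{-3p}\log^2 n(\log\tfrac1\eps+\log\tfrac1\delta+\log\log n))$ bits in total; storing $\Pi$ once costs $O(\eps^{-p^2}\log n(\log\tfrac1\eps+\log\tfrac1\delta+\log\log n))$ bits by Lemma~\ref{space}, which is lower order since $p\le 2$. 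This gives the claimed bound.

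For correctness, fix the query window $W = [l,r]$. By the \est property (Definition~\ref{Strong Estimator}) applied to $[l,r]$ (the case $a\ge l$) together with Lemma~\ref{add} applied to $[l,r]$ (the case $a<l$), with probability at least $1 - \delta' - \delta'\log r \ge 1-\delta$ we have, simultaneously for every interval $[a,b]$ with $l\le b\le r$, the bound $f_p(x^{(a,b)}) = (1\pm O(\eps^p))\ell_p^{(a,b)} \pm O(\eps^{p+1})\ell_p^{(l,r)}$; condition on this event. Note that $[t_{i_0},r]\supseteq W \supsetneq [t_{i_0+1},r]$, so $\ell_p^{(t_{i_0},r)}\ge\ell_p^W\ge\ell_p^{(t_{i_0+1},r)}$. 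Now either $t_{i_0+1}=t_{i_0}+1$, which forces $l = t_{i_0}$ and $W = [t_{i_0},r]$ exactly; or some timestamp between $t_{i_0}$ and $t_{i_0+1}$ was deleted, and (since new timestamps equal the current time) the last such deletion happened at a time $t^\ast$ with $l < t_{i_0+1}\le t^\ast\le r$, at which moment the deleted timestamp's neighbours were $t_{i_0}$ and $t_{i_0+1}$ and $f_p(x^{(t_{i_0+1},t^\ast)}) \ge (1-\Theta(\eps^p))f_p(x^{(t_{i_0},t^\ast)})$. Since $t^\ast\in[l,r]$, both estimates are accurate as above; translating this inequality to the true norms and then pushing it forward from time $t^\ast$ to time $r$ via smoothness of $\ell_p$ yields $\ell_p^{(t_{i_0+1},r)} \ge (1-\Theta(\eps))\ell_p^{(t_{i_0},r)}$, hence $\ell_p^{(t_{i_0},r)} = (1\pm\Theta(\eps))\ell_p^W$. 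Finally $f_p(x^{(t_{i_0},r)}) = (1\pm O(\eps^p))\ell_p^{(t_{i_0},r)}\pm O(\eps^{p+1})\ell_p^W = (1\pm\Theta(\eps))\ell_p^W$ using $\ell_p^W\le\ell_p^{(t_{i_0},r)}$; choosing all hidden constants small enough makes the error at most $\eps$.

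I expect three points to require real care. First and most important is the scale mismatch: if the final error is to be $\eps$, the deletion test must resolve $\ell_p$ at granularity $\Theta(\eps^p)$ (this is exactly what $((p\beta)^{1/p},\beta)$‑smoothness of $\ell_p$ forces), so the \est must have multiplicative error $o(\eps^p)$ — this is what forces $\eps_1 = \Theta(\eps^p)$, $d = \Theta(\eps^{-2p})$, and hence the $\eps^{-3p}$ (rather than, say, $\eps^{-2-p}$) in the space bound. Second, the \est guarantee carries an additive term $\Theta(\eps^{p+1})\ell_p^{(l,r)}$ which can exceed the quantities $\ell_p^{(t_{i_0},t^\ast)}$ actually being compared when $t^\ast\ll r$; handling this needs a two‑regime argument — when $\ell_p^{(t_{i_0},t^\ast)} \gtrsim \eps\,\ell_p^{(l,r)}$ the additive term is an honest $O(\eps^p)$ relative perturbation and smoothness applies, whereas when $\ell_p^{(t_{i_0},t^\ast)}$ is tiny one argues directly via $\ell_p^{(t_{i_0},r)} \le \ell_p^{(t_{i_0},t^\ast)} + \ell_p^{(t^\ast+1,r)} \le \ell_p^{(t_{i_0},t^\ast)} + \ell_p^{(t_{i_0+1},r)}$ together with $\ell_p^{(t_{i_0},r)} \ge \ell_p^W$ — and it is what pins down $\eps_2 = \Theta(\eps^{p+1})$. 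Third, one must verify that the \est's per‑window guarantee really covers every interval whose estimate affected the part of the histogram relevant to $W$: the only live timestamps lying at or before $l$ are $t_{i_0}$ and the $t_i<l$, and for each such $t_i$ the intervals $[t_i,\cdot]$ that enter the analysis (the estimate at query time, and the deletion between $t_{i_0}$ and $t_{i_0+1}$, which occurs at a time $>l$) all have right endpoint in $[l,r]$, so Definition~\ref{Strong Estimator} and Lemma~\ref{add} between them suffice; the residual bookkeeping — the low‑$\ell_p$ tail near $r$ where the additive error dominates the test, and the edge cases in the deletion logic of \cite{2007} — is the fiddly part but does not change the asymptotics.
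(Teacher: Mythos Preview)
Your proposal is correct and follows the same overall architecture as the paper --- a smooth histogram driven by one shared \estt --- but differs from the paper's proof in two technical respects worth noting.

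First, Algorithm~\ref{alg:reg1} as written keeps a separate Indyk $p$-stable sketch $g$ of dimension $O(\eps^{-2}\log\tfrac1\delta)$ at each timestamp and outputs $g^{(t_2,r)}$, i.e.\ the Indyk estimate at the \emph{second} timestamp (the first one inside the window), whereas you output $f_p(x^{(t_{i_0},r)})$ directly from the \est at the last timestamp \emph{outside} the window. Both work; your version is a simplification, since the Indyk sketch is lower order anyway.

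Second, and more substantively, the paper's proof of the key lemma (Lemma~\ref{Eps Approx}) avoids your two-regime case split. Instead of invoking $(\alpha,\beta)$-smoothness of $\ell_p$ to push the comparison from time $t^\ast$ forward to time $r$, it uses super-additivity of $F_p$ for $p\ge 1$ to write $F_p^{(t_1,t_2)} \le F_p^{(t_1,t^\ast)} - F_p^{(t_2,t^\ast)}$, bounds this by $(\ell_p^{(t_1,t^\ast)})^p - \bigl(\ell_p^{(t_1,t^\ast)} - \Theta(\eps^p)\ell_p^{(t_1,r)}\bigr)^p \le \Theta(\eps^p)\,F_p^{(t_1,r)}$ via $a^p-(a-b)^p\le pb\,a^{p-1}$ (using $\ell_p^{(t_1,t^\ast)}\le\ell_p^{(t_1,r)}$ and $\ell_p^W\le\ell_p^{(t_1,r)}$ to absorb the additive term uniformly), and then finishes with the $\ell_p$ triangle inequality $\ell_p^{(t_2,r)}\ge\ell_p^{(t_1,r)}-\ell_p^{(t_1,t_2)}$. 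This handles the additive error in one stroke and lets the paper take $\eps_2=\Theta(\eps^p)$ rather than your $\Theta(\eps^{p+1})$; the difference only moves a constant inside a logarithm, so the final space bound is identical.
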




We now apply our algorithm to the heavy hitter problem in sliding windows. Combining our $\frac{1}{2}$-approximation for $\ell_2$ with known results for heavy hitters (Lemma \ref{2} and Lemma \ref{heavyhitter}), we obtain the following result:

\begin{theorem}\label{Heavy Hitter}
(Main Theorem on Heavy Hitters)
    Given $\epsilon>0$ and $0<p \leq 2$, there exists an algorithm in the sliding window model that, with probability at least $\frac{2}{3}$, outputs all indices $i \in[U]$ for which frequency $f_i \geq \epsilon \ell_p ^ W$, and reports no indices $i \in[U]$ for which $f_i \leq \frac{\epsilon}{12} \ell_p ^ W$. The algorithm has space complexity (in bits) $O\left(\frac{1}{\epsilon^p} \log ^2 n\left(\log \frac{1}{\epsilon}+\log \log n\right)\right)$.
\end{theorem}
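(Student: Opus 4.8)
The plan is to compose three results already in hand: the constant-factor $\ell_p$ estimator of Theorem~\ref{1} specialized to $p=2$, the $\ell_2$ heavy-hitters routine of Lemma~\ref{2} that needs only a crude norm estimate, and the $\ell_2\to\ell_p$ reduction of Lemma~\ref{heavyhitter}. First, I would run Algorithm~\ref{alg:reg1} with moment parameter $2$, a fixed constant accuracy (say $\tfrac14$, which in particular yields a $\tfrac12$-approximation of $\ell_2^W$ as required by Lemma~\ref{2}), and a small constant failure probability $\delta_0$. By Theorem~\ref{1} this uses $O(\log^2 n\,\log\log n)$ bits, since the $\epsilon^{-3p}$, $\log\tfrac1\epsilon$, and $\log\tfrac1{\delta_0}$ factors are all $O(1)$; this is dominated by the claimed bound. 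Call the output $\widehat\ell$.

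Second, feed $\widehat\ell$ into the algorithm of Lemma~\ref{2}, but instantiate it at accuracy $\phi:=\epsilon^{p/2}$ rather than $\epsilon$. Since $p\le 2$ and $\epsilon<1$ we have $\phi\ge\epsilon$, $\phi^{-2}=\epsilon^{-p}$, and $\log\tfrac1\phi=\tfrac p2\log\tfrac1\epsilon=O(\log\tfrac1\epsilon)$, so this step costs $O(\epsilon^{-p}\log^2 n(\log\tfrac1\epsilon+\log\log n))$ bits --- exactly the target. Conditioned on $\widehat\ell$ being a valid constant-factor approximation of $\ell_2^W$, and running this routine at a suitably small constant failure probability (which costs only an $O(1)$ factor in space), Lemma~\ref{2} returns a set that contains every $\phi$-$\ell_2$-heavy hitter of the window and contains no index of frequency below $\tfrac\phi{12}\ell_2^W$.

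Third, apply Lemma~\ref{heavyhitter}: such an $\epsilon^{p/2}$-heavy-hitters-for-$\ell_2$ algorithm with the appropriate tail guarantee returns all $\epsilon$-heavy hitters for $\ell_p$. Completeness is the inequality that any $i$ with $f_i\ge\epsilon\,\ell_p^W$ satisfies $f_i\ge\epsilon^{p/2}\|x^{(W)}_{-k}\|_2$ for $k=\Theta(\epsilon^{-p})$: at most $\epsilon^{-p}$ coordinates exceed $\epsilon\,\ell_p^W$, so after zeroing the top $k$ every surviving coordinate is $<\epsilon\,\ell_p^W$, hence $\|x^{(W)}_{-k}\|_2^2\le(\epsilon\,\ell_p^W)^{2-p}F_p^W=\epsilon^{2-p}(\ell_p^W)^2$ and $\epsilon^{p/2}\|x^{(W)}_{-k}\|_2\le\epsilon\,\ell_p^W\le f_i$. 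The soundness bound (nothing with $f_i\le\tfrac\epsilon{12}\ell_p^W$ is reported) transfers through Lemma~\ref{heavyhitter} from the $\tfrac\phi{12}$ guarantee of Lemma~\ref{2}. A union bound over the two failure events, each arranged to have small enough constant probability that they sum to at most $\tfrac13$, gives overall success $\ge\tfrac23$, and the two space bounds add to $O(\epsilon^{-p}\log^2 n(\log\tfrac1\epsilon+\log\log n))$ bits.

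\textbf{Main obstacle.} The delicate step is the last one: verifying that the routine of Lemma~\ref{2}, run at level $\phi=\epsilon^{p/2}$, genuinely meets the \emph{tail} guarantee demanded by Lemma~\ref{heavyhitter} --- not merely an $\ell_2$-norm guarantee --- and that the constant in its soundness promise propagates through the reduction so that the final threshold is exactly $\tfrac\epsilon{12}\ell_p^W$. The substitution $\epsilon\mapsto\epsilon^{p/2}$ is precisely what cuts the counter count from $\epsilon^{-2}$ down to $\epsilon^{-p}$, so it is essential that tail-heaviness (rather than plain $\ell_2$-heaviness) is the property that carries over; checking this, and that Lemma~\ref{2} is stated in a form strong enough to supply it, is the crux of the argument.
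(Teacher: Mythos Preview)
Your proposal is correct and follows essentially the same route as the paper: run Algorithm~\ref{alg:reg1} at $p=2$ with constant accuracy to obtain a $\tfrac12$-approximation of $\ell_2^W$, invoke Lemma~\ref{2} at parameter $\phi=\epsilon^{p/2}$, and then apply Lemma~\ref{heavyhitter}. The paper's proof is a one-line appeal to exactly these three ingredients, so your write-up is in fact more detailed than the paper's own; the ``main obstacle'' you flag about the tail guarantee and the propagation of the $\tfrac{1}{12}$ constant is simply absorbed into the citations of Lemmas~\ref{2} and~\ref{heavyhitter} without further comment.
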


\begin{proof}
We use Algorithm \ref{alg:reg1} to obtain a $\frac 12$ approximation over $\ell_p ^ W$, the rest of the proof follows by Lemma \ref{2} and Lemma \ref{heavyhitter}.

\end{proof}

\begin{algorithm*}[t!]
\caption{\textbf{\textsf{Smooth Histogram with \est}}} \label{alg:reg1}

{\bf Input:} Stream $a_1, a_2, \dots \in [U] $, window length $n$, approximate ratio $\epsilon \in (0, 1)$ and error rate $\delta \in (0,1)$, parameter $p$.

{\bf Output:} At each time $i$, output a $(1\pm \epsilon)$ approximation to $\ell_p$  in window $[i - n, i]$.

{\bf Initialization:} \begin{enumerate}
    \item Generate the matrix $\Pi \in \R^{d \times U}$ for an $(\frac{\epsilon ^ p}{64}, \frac{\epsilon ^ p}{128}, \delta'=\frac {\delta}{O( \log m )})$ \estt, where $d = O(\epsilon ^ {-2p} \log \frac 1{\epsilon \delta'}))$. 

    \item Random matrix $\pi \in \R^{d'\times U}$ for a \indd, where $d' = O(\epsilon ^ {-2} \log \frac 1{\delta})$.  
\end{enumerate}

 Let $f ^ {(D)}$ be the estimated value of $\|x ^ {(D)}\|_p$ from \est on data stream $D$, and $g^ {(D)}$ be the estimated value of $\|x ^ {(D)}\|_p$ from \indd.

\noindent\makebox[\linewidth]{\rule{\paperwidth-10cm}{0.4pt}}

{\bf During the Stream:}
\begin{algorithmic}[1]
\State $ s \from$ 0, $i \from 1$.
\For{each update $a_i$} 

    \State $s = s + 1$. 
    \State let $t_s = i$, start a copy of \est using matrix $\Pi$ and \ind using matrix $\pi$ from $t_s$. 
    \For {$j$ in $0, 1, \dots, s$}
        \While {$f ^ {(t_{j + 2},i)} \ge (1 - \frac{\epsilon ^ p}{32})f ^ {(t_j, i)}$} 
            \State Delete $t_{j + 1}$, update indices, $s = s -1$
        \EndWhile
    \EndFor
    \State If $t_2 < i - n$, delete $t_1$, update indices, $s = s -1$. 
    \State Output $g ^ {(t_2 , i)}$ as the estimation of $\ell_p ^ W$.
\EndFor

\end{algorithmic}
\end{algorithm*}



    





Define $t_k(j)$ to be the $k$-th timestamps when the stream goes to $j$. The following lemma shows that the $\ell_p$ norm does not decreases significantly between two consecutive timestamps.
\begin{lem}\label{Eps Approx}
    Let $p \in [1, 2]$, for fixed window $W = [i - n, i]$, with probability $\ge 1 - \delta$, either $t_1(i) + 1 = t_2(i)$, or $\ell_p ^ {(t_2(i), i)} \ge (1 - \frac {\epsilon}2) \ell_p ^ {(t_1(i), i)}$. 
\end{lem}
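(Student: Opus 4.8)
The plan is to reason about what it means for timestamp $t_1(i)+1 = t_2(i)$ to fail, i.e., for there to be a genuinely distinct second timestamp $t_2(i)$ strictly later than $t_1(i)+1$. If $t_2(i) \neq t_1(i)+1$, then the timestamp that was originally created at index $t_1(i)+1$ (call its creation time $c$ with $t_1(i) < c \le t_2(i)$, actually we want the last timestamp strictly between, but the cleanest is: at the moment the algorithm deleted whatever lay between the current $t_1(i)$ and $t_2(i)$) was deleted by the \texttt{While} loop in Algorithm \ref{alg:reg1}. The deletion condition is precisely $f^{(t_{j+2}, i')} \ge (1 - \tfrac{\eps^p}{32}) f^{(t_j, i')}$ at some stream time $i' \le i$, where at that time $t_j = t_1(i)$ and $t_{j+2} = t_2(i)$ (using that $t_1(i)$ and $t_2(i)$ are both still present at time $i$, so they were consecutive-up-to-the-deleted-one throughout $[i', i]$). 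So the first step is to extract: there exists a time $i' \le i$ at which $f^{(t_2(i), i')} \ge (1 - \tfrac{\eps^p}{32}) f^{(t_1(i), i')}$.

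Next I would invoke the \est guarantee (Lemma \ref{est} / Definition \ref{Strong Estimator}) with the parameters chosen in the algorithm, $\eps_1 = \tfrac{\eps^p}{64}$, $\eps_2 = \tfrac{\eps^p}{128}$, and failure probability $\delta' = \delta / O(\log m)$ per window, so that after a union bound over the $O(\log m)$ relevant windows the total failure probability is $\le \delta$. On the good event, for every sub-window $[a,b] \subseteq [t_1(i), i]$ we have $f^{(a,b)} = (1 \pm \eps_1)\ell_p^{(a,b)} \pm \eps_2 \ell_p^{(t_1(i), i)}$. Apply this at the two relevant sub-windows: $[t_2(i), i']$ and $[t_1(i), i']$ (noting $t_1(i) \le t_2(i) \le i' \le i$, so both are sub-windows of $[t_1(i), i]$). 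Converting the deletion inequality $f^{(t_2(i), i')} \ge (1 - \tfrac{\eps^p}{32}) f^{(t_1(i), i')}$ into a statement about $\ell_p$ values, the multiplicative $(1\pm\eps_1)$ errors and the additive $\eps_2 \ell_p^{(t_1(i),i)}$ terms combine to give something like $\ell_p^{(t_2(i), i')} \ge (1 - c\eps^p)\ell_p^{(t_1(i), i')}$ for a suitable constant $c$ (here I would be a little careful that $\ell_p^{(t_1(i), i)} \ge \ell_p^{(t_1(i), i')} \ge \ell_p^{(t_2(i), i')}$ to absorb the additive terms, using monotonicity of $\ell_p$ under adding stream elements — wait, this needs $\ell_p^{(t_1(i),i)}$ to be controllable; actually one uses that $f^{(t_1(i),i')}$ is at least a constant times $\ell_p^{(t_1(i),i')}$ and bounds $\ell_p^{(t_1(i),i)}$ against $\ell_p^{(t_1(i),i')}$ times the geometric-growth factor, or more simply one keeps everything in terms of $\ell_p^{(t_1(i),i)}$ which dominates).

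The final step is the smoothness / monotonicity argument in the style of \cite{braverman2007smooth}: once $\ell_p^{(t_2(i), i')}$ is within a $(1 - c\eps^p)$ factor of $\ell_p^{(t_1(i), i')}$ at time $i'$, this relative closeness is preserved as the stream grows from $i'$ to $i$, because appending the same suffix $x^{(i'+1, i)}$ to both frequency vectors can only shrink the ratio's gap (this is exactly the $(\alpha,\beta)$-smoothness of $\ell_p$: with $A = x^{(t_1(i), t_2(i)-1)}$, $B = x^{(t_2(i), i')}$, $C = x^{(i'+1, i)}$, the hypothesis $(1-\beta)\ell_p(A\cup B) \le \ell_p(B)$ implies $(1-\alpha)\ell_p(A\cup B \cup C) \le \ell_p(B \cup C)$). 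Choosing $\beta$ to be the $c\eps^p$ from the previous step and using the known smoothness parameters for $\ell_p$ with $p \in [1,2]$ — where $\beta = \Theta(\eps^p)$ forces $\alpha = \Theta(\eps)$ — yields $\ell_p^{(t_2(i), i)} \ge (1 - \tfrac{\eps}{2}) \ell_p^{(t_1(i), i)}$, which is the claimed conclusion. The main obstacle I anticipate is the bookkeeping in the middle step: correctly tracking which timestamps are consecutive at which times so that the \texttt{While}-loop deletion condition really does translate into an inequality between $t_1(i)$ and $t_2(i)$ themselves (rather than some intermediate pair), and making sure the additive $\eps_2 \ell_p^{(l,r)}$ error terms from the \est are charged against the right (largest) window so they stay a small constant fraction of the multiplicative gap — getting the constants to line up so the final bound is exactly $(1 - \eps/2)$ rather than something weaker. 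The $p$-dependence of the smoothness parameters (the $\eps^p \to \eps$ conversion) is the one genuinely non-cosmetic ingredient and should be cited from \cite{braverman2007smooth}.
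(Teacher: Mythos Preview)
Your outline is essentially the paper's approach, and you correctly identify the two places that need care. Two comments to sharpen it.

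First, you plan to invoke the $(\alpha,\beta)$-smoothness of $\ell_p$ from \cite{braverman2007smooth} as a black box in the final step. This does not quite work as stated: after converting the deletion inequality via the \est bounds, the additive error is $\eps_2\,\ell_p^{(t_1(i),i)}$ (charged against the full window up to time $i$, not against $\ell_p^{(t_1(i),i')}$), so you only obtain
\[
\ell_p^{(t_2(i),i')} \;\ge\; (1-c\eps^p)\,\ell_p^{(t_1(i),i')} \;-\; c'\eps^p\,\ell_p^{(t_1(i),i)},
\]
which is not the smoothness hypothesis when $\ell_p^{(t_1(i),i')}$ is small relative to $\ell_p^{(t_1(i),i)}$. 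Your own suggestion to ``keep everything in terms of $\ell_p^{(t_1(i),i)}$'' is the right fix, and it is exactly what the paper does: use super-additivity $F_p^{(t_1,t_2)} \le F_p^{(t_1,i')} - F_p^{(t_2,i')}$ together with the elementary bound $a^p-(a-b)^p \le pb\,a^{p-1}$ (and a trivial bound in the separate case where $\ell_p^{(t_1,i')}$ is itself $\le \tfrac{\eps^p}{8}\ell_p^{(t_1,i)}$) to conclude $\ell_p^{(t_1,t_2)} \le O(\eps)\,\ell_p^{(t_1,i)}$, and then apply the triangle inequality at time $i$. That is the smoothness argument redone by hand, which is what is needed here.

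Second, you apply the \est guarantee on the window $[t_1(i),i]$, but $t_1(i)$ depends on the estimator's own randomness, so you cannot simply condition on that particular window being good. Your $O(\log m)$-window union bound is the right idea; the paper makes this precise via Lemma~\ref{add}: it fixes the deterministic window $W=[i-n,i]$, uses the \est property on $W$ for the sub-interval $[t_2(i),t]\subseteq W$, and uses Lemma~\ref{add} (a geometric union bound over windows $[l_j,i]$) to handle the interval $[t_1(i),t]$ whose left endpoint lies before $i-n$. The resulting additive error is then charged against $\ell_p^W \le \ell_p^{(t_1(i),i)}$, which is what you want.
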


Now we prove theorem \ref{1}. 
\begin{proof}(Proof of Theorem \ref{1})

Our \ind will give a $\frac {\epsilon}{4}$ approximation with probability $\ge 1 - \delta$. By a union bound over the event of Lemma \ref{Eps Approx} and the condition that \ind is right, by adjusting the constant factor for $\delta$, we obtain an $\epsilon$ approximation of $\ell_p ^ W$ with probability $\ge 1 - \delta$.

 For the space, as the value $f$ is bounded by $\text{poly}(n)$, the number of timestamps will be bounded by $O(\epsilon ^ {-p} \log n)$. 
 
 For each timestamp, we keep a vector with dimension $d = O(\epsilon ^ {-2p} \log \frac {\log m}{\epsilon \delta})$ for the \est, and a vector with dimension $d = O(\epsilon ^ {-2} \log \frac 1{\delta})$ for \ind. So the total space for the sketches is $O(\epsilon ^ {-3p} \log ^ 2n (\log \frac{1}{\epsilon} + \log \log n + \log \frac 1{\delta}))$, as we have assumed $m\le \text{poly}(n)$. The space to store the matrix is of lower order by Lemma \ref{space}.

\end{proof}

\section{Improving the $\eps$ dependence}
\subsection{Preliminary Results} 

Here we use $F(u)$ to denote $\|u\|_p ^ p$ for frequency vector $u$. 

First, we introduce the definition and construction of \diff, which is essential in our algorithm that gives tight bound on $F_p$ estimation over sliding window. 

\begin{defn}\label{dif} (Difference Estimator) (Definition 8.2, \cite{woodruff2022tight})

Given a stream $\mathcal{S}$ and fixed times $t_1$, $t_2$, and $t_3$, let frequency vectors $u$ and $v$ be induced by the updates of $\mathcal{S}$ between times $\left[t_1, t_2\right)$ and $\left[t_2, t_3\right)$. Given an accuracy parameter $\varepsilon>0$ and a failure probability $\delta \in(0,1)$, a streaming algorithm $\mathcal{C}\left(t_1, t_2, t, \gamma, \varepsilon, \delta\right)$ is a $(\gamma, \varepsilon, \delta)$-suffix difference estimator for a function $F$ if, with probability at least $1-\delta$, it outputs an additive $\varepsilon \cdot F\left(v+w_t\right)$ approximation to $F\left(u+v+w_t\right)-F\left(v+w_t\right)$ for all frequency vectors $w_t$ induced by $\left[t_3, t\right)$ for times $t>t_3$, given $\min (F(u), F(u+v)-F(v)) \leq \gamma \cdot F(v)$ for a ratio parameter $\gamma \in(0,1]$.
    
\end{defn}

\begin{lem}\label{dif lemma} (Construction of Difference Estimator) (Lemma 4.9, Lemma 3.6 \cite{woodruff2022tight})

Let the dimension $d = \mathcal{O}\left(\frac{\gamma^{2 / p}}{\varepsilon^2}\left(\log \frac{1}{\varepsilon}+\log \frac{1}{\delta}\right)\right)$.

For $0<p<2$, it suffices to use a sketching matrix $A \in \mathbb{R}^{d \times U}$ of i.i.d. entries drawn from the $p$-stable distribution $\mathcal{D}_p$, with dimension $d$ to obtain a $(\gamma, \varepsilon, \delta)$-difference estimator for $F_p$. To store such a matrix $A$ requires $O(d\log n(\log \log n) ^ 2)$  space. 

Moreover, the estimation of $F(u + v)- F(v)$ is given by: 
\begin{enumerate}
    \item For a parameter $q=3$, let each $z_i=\prod_{j=q(i-1)+1}^{q i}\left(A u+A v\right)_j^{p / q}$ and $z_i^{\prime}=$ $\prod_{j=q(i-1)+1}^{q i}(A v)_j^{p / q}$.
    \item Output the arithmetic mean of $\left(z_1-z_1^{\prime}\right),\left(z_2-z_2^{\prime}\right), \ldots,\left(z_{d / q}-z_{d / q}^{\prime}\right)$.
\end{enumerate}

For $p = 2$, it suffices to use a sketching matrix $M \in \mathbb{R} ^ {d\times U}$ that each entry $M_{i,j}$ of $M$ is a 4-wise independent random sign scaled by $\frac 1{\sqrt d}$ to obtain a $(\gamma, \epsilon, \delta)$-difference estimator for $F_2$. To store such a matrix $M$ requires $O(d \log n)$ space.

The estimation of $F(u+v) - F(v)$ is given by $\|M(u+v)\|_2 ^ 2 -\|Mv\|_2^2$.
    
\end{lem}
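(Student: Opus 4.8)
This lemma restates Lemmas~3.6 and~4.9 of \cite{woodruff2022tight}, so the plan is to reproduce their argument; I sketch its skeleton. Take $p<2$ first and group the $d$ rows of the $p$-stable matrix $A$ into $d/q$ blocks of size $q=3$. By the defining property of $\mathcal{D}_p$, for any fixed frequency vector $y$ the coordinates $(Ay)_j$ are distributed as $\|y\|_p X_j$ with the $X_j$ i.i.d.\ from $\mathcal{D}_p$. Hence within a block $z_i=\prod_j|(A(u+v))_j|^{p/q}$ has $\E z_i=(\E|X|^{p/q})^q\|u+v\|_p^{p}=c_{p,q}F(u+v)$, where both $\E|X|^{p/q}$ and $\E|X|^{2p/q}$ are finite because $q=3>2$ and $\mathcal{D}_p$ has finite moments of every order strictly below $p$ (this is exactly why $q=3$ is chosen, so that the second moment below is finite); likewise $\E z_i'=c_{p,q}F(v)$. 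So, after dividing by $c_{p,q}$, each block gives an unbiased estimate of $F(u+v)-F(v)$.

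The crux is the single-block variance, and this is where the construction earns its name: $z_i$ and $z_i'$ use the \emph{same} matrix $A$, hence are strongly positively correlated, and $z_i-z_i'$ sees only the small perturbation $u$. Writing $z_i-z_i'=\prod_j\bigl(|(Av)_j+(Au)_j|^{p/q}\bigr)-\prod_j|(Av)_j|^{p/q}$, expanding, and using that $|(Au)_j|=\|u\|_p|X_j'|$ is small relative to $|(Av)_j|$ on most of the probability space, a direct second-moment computation (their Lemma~4.9) bounds the variance by $O\bigl(\|u\|_p^2\|v\|_p^{2p-2}\bigr)$; the point is that it is governed by $\|u\|_p^2\|v\|_p^{2p-2}$ rather than by the potentially much larger squared mean $|F(u+v)-F(v)|^2$ that a from-scratch estimator would incur, so under $F(u)\le\gamma F(v)$ it is at most $\gamma^{2/p}F(v)^2$ (the regime where instead $F(u+v)-F(v)\le\gamma F(v)$ is handled analogously). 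Averaging the $d/q$ independent blocks divides the variance by $d/q$, so $d/q=\Theta(\gamma^{2/p}\varepsilon^{-2})$ makes the standard deviation $O(\varepsilon F(v))$, and a median-of-means over $\Theta(\log\frac1\delta)$ groups boosts the additive-$\varepsilon F(v)$ guarantee to probability $1-\delta$.

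It remains to upgrade ``for one $w_t$'' to ``for all $w_t$''. Since $v+w_t$ is coordinatewise nondecreasing in $t$, it suffices to control the estimator at the $O(\varepsilon^{-1})$ breakpoints at which $F(v+w_t)$ grows by a $(1+\varepsilon)$ factor (in the intended use the estimator need only stay valid while this quantity grows by a bounded factor), together with a uniform tail bound on $\sup_t|(A(u+v+w_t))_j|$ in terms of the endpoint norm, of the same flavour as Lemma~\ref{prefix bound}; the union bound over the breakpoints is what contributes the extra $\log\frac1\varepsilon$, yielding $d=O\bigl(\frac{\gamma^{2/p}}{\varepsilon^2}(\log\frac1\varepsilon+\log\frac1\delta)\bigr)$. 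The $p=2$ case is the classical AMS one: $\|M(u+v)\|_2^2-\|Mv\|_2^2=\|Mu\|_2^2+2\langle Mu,Mv\rangle$ is unbiased for $F(u+v)-F(v)$, and under $4$-wise independence its variance is $O\bigl((\|u\|_2^2\|v\|_2^2+\|u\|_2^4)/d\bigr)=O\bigl(\gamma F(v)^2/d\bigr)$ under the hypothesis, so the same median-of-means and breakpoint union bound give the stated $d$ (with $\gamma^{2/p}=\gamma$). The space claims are the per-matrix bounds from \cite{woodruff2022tight}: each row is regenerated from a short hash seed whose length is dictated by the precision needed per entry so that the degree-$p$ block products stay faithful up to $1+1/\poly(n)$ and by the degree of independence the tail bounds demand, which is $O(\log n(\log\log n)^2)$ bits for a $p$-stable row and only $O(\log n)$ bits for a $4$-wise independent sign row, so the totals are $O(d\log n(\log\log n)^2)$ and $O(d\log n)$ respectively.

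The one genuinely delicate step is the single-block variance bound $O(\|u\|_p^2\|v\|_p^{2p-2})$ together with its suffix-uniform companion tail bound: this is exactly where one must carefully use the shared randomness between $z_i$ and $z_i'$ and the heavy-tailed moment structure of $\mathcal{D}_p$, and reproducing it amounts to importing Lemmas~3.6 and~4.9 of \cite{woodruff2022tight} essentially verbatim.
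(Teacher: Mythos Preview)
The paper does not prove this lemma at all: it is stated with an explicit citation to Lemmas~3.6 and~4.9 of \cite{woodruff2022tight} and is used as a black box throughout (there is no corresponding entry in the appendix's ``Upper Bound Lemmas'' list). So there is nothing to compare against in the paper itself; your write-up already goes further than the paper does.

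As a reconstruction of the cited argument, your sketch hits the right beats: the unbiasedness via i.i.d.\ $p$-stable coordinates and finiteness of the $(p/q)$-th moment for $q=3$; the crucial point that $z_i$ and $z_i'$ share the same randomness so that the variance of the block is governed by the small perturbation $u$ rather than by $F(v)$; the resulting $\gamma^{2/p}F(v)^2$ variance scale; the AMS computation for $p=2$; and the breakpoint union bound to pass from a single $w_t$ to all suffixes, which accounts for the $\log\frac1\varepsilon$ in $d$. Two small remarks. First, the statement as written outputs a plain arithmetic mean, whereas you invoke median-of-means to get the $\log\frac1\delta$ dependence; in \cite{woodruff2022tight} the boost to $1-\delta$ is indeed handled by an outer repetition, so this discrepancy is cosmetic in the lemma's phrasing rather than a flaw in your argument. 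Second, the $O(d\log n(\log\log n)^2)$ storage bound for the $p$-stable matrix is not just ``precision plus independence degree'' but specifically comes from the derandomized $p$-stable construction (fast PRGs for half-space fooling, as in Kane--Nelson--Woodruff) that \cite{woodruff2022tight} invokes; your one-line explanation is a bit loose there, though the conclusion is correct.
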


Next, we introduce the rounding procedure described in \cite{rounding}. Although this procedure is originally designed for a distributed setting, the process of merging two sketches can be interpreted as two nodes transmitting their respective sketches to a parent node, which then performs a new rounding operation based on the information received from its children.

\begin{defn}\cite{rounding}(Rounding random variable)  \label{rounding variable}
    For any real value $r \in \mathbb{R}$, let $i_r \in \mathbb{Z}$ and $\alpha_i \in\{1,-1\}$ be such that $(1+\gamma)^{i_r} \leq \alpha_i r \leq(1+\gamma)^{i_r+1}$. Now fix $p_r$ such that:

$$
\alpha_i r=p_r(1+\gamma)^{i_r+1}+\left(1-p_r\right)(1+\gamma)^{i_r}
$$

We then define the rounding random variable $\Gamma_{\gamma}(r)$ by

$$
\Gamma_{\gamma}(r)= \begin{cases}0 & \text { if } r=0 \\ \alpha_i(1+\gamma)^{i_r+1} & \text { with probability } p_r \\ \alpha_i(1+\gamma)^{i_r} & \text { with probability } 1-p_r\end{cases}
$$
    
\end{defn}

\begin{procedure}(Recursive Randomized Rounding) \label{distributed rounding procedure}

1. Choose random vector $Z \in \mathbb{R}^n$ using shared randomness.

2. Receive rounded sketches $r_{j_1}, r_{j_2}, \ldots, r_{j_{t_j}} \in \mathbb{R}$ from the $t_j$ children of node $j$ in the prior layer (if any such children exist).

3. Compute $x_j=\left\langle X_j, Z\right\rangle+r_{j_1}+r_{j_2}+\cdots+r_{j_t} \in \mathbb{R}$.

4. Compute $r_j=\Gamma\left(x_j\right)$. If player $j \neq \mathcal{C}$, then send $r_j$ it to the parent node of $j$ in $T$. If $j=\mathcal{C}$, then output $r_j$ as the approximation to $\langle Z, \mathcal{X}\rangle$.
\end{procedure}

\begin{lem}\label{rounding} (Lemma 2, \cite{rounding})

    
    

    Let $U$ be the universe space, and $m$ be the total point in the graph, $d$ be the diameter of this graph, $\epsilon, \delta \in (0, 1)$. Fix $p \in[1,2]$, and let $Z=\left(Z_1, Z_2, \ldots, Z_U\right) \sim D_p^U$. Then the above procedure when run on $\gamma=(\epsilon \delta /(d \log (U m)))^C$ for a sufficiently large constant $C$, produces an estimate $r_C$ of $\langle Z, \mathcal{X}\rangle$, held at the center vertex $\mathcal{C}$, such that $\mathbb{E}\left[r_{\mathcal{C}}\right]=\langle Z, \mathcal{X}\rangle$. Moreover, over the randomness used to draw $Z$, with probability $1-\delta$ for $p<2$, and with probability $1-e^{-1 / \delta}$ for Gaussian $Z$, we have $\mathbb{E}\left[\left(r_{\mathcal{C}}-\langle Z, \mathcal{X}\rangle\right)^2\right] \leq(\epsilon / \delta)^2\|\mathcal{X}\|_p$. Thus, with probability at least $1-O(\delta)$, we have

$$
\left|r_{\mathcal{C}}-\langle Z, \mathcal{X}\rangle\right| \leq \epsilon\|\mathcal{X}\|_p
$$

Moreover, if $Z=\left(Z_1, Z_2, \ldots, Z_n\right) \in \mathbb{R}^n$ where each $Z_i \in\{1,-1\}$ is a 4-wise independent Rademacher variable, then the above bound holds with $p=2$ (and with probability $1-\delta$ ).
\end{lem}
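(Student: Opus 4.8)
The natural approach, following \cite{rounding}, is to track the per-node rounding errors, show that they telescope to the final error held at $\mathcal C$, and bound their total variance; the choice $\gamma=(\epsilon\delta/(d\log(Um)))^C$ is exactly what makes this total negligible. Write $\mathcal X=\sum_j X_j$ for the aggregate vector, and note that $\Gamma_\gamma$ rounds onto the multiplicative grid $\{\pm(1+\gamma)^i\}_{i\in\Z}$, which is scale-equivariant, so we may assume $\|\mathcal X\|_p=1$. For a node $j$ of the communication tree $T$ let $S_j:=\sum_{k\in\mathrm{subtree}(j)}\langle Z,X_k\rangle$ be the exact partial sum $j$ is meant to approximate (so $S_{\mathcal C}=\langle Z,\mathcal X\rangle$), let $x_j=\langle X_j,Z\rangle+\sum_{c}r_c$ (sum over children $c$ of $j$) be the pre-rounding value $j$ actually computes, and let $\xi_j:=\Gamma_\gamma(x_j)-x_j$ be the fresh rounding error introduced at $j$. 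The first step is unbiasedness: Definition~\ref{rounding variable} is set up precisely so that $\mathbb{E}[\Gamma_\gamma(r)\mid r]=r$, so processing $T$ from leaves to root and using the tower rule gives $\mathbb{E}[r_j]=\mathbb{E}[x_j]=\langle X_j,Z\rangle+\sum_c\mathbb{E}[r_c]$, and unrolling the recursion yields $\mathbb{E}[r_{\mathcal C}]=\langle Z,\mathcal X\rangle$.

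The second step is the telescoping identity. From $r_j=x_j+\xi_j$ and $S_j=\langle X_j,Z\rangle+\sum_c S_c$ one gets $r_j-S_j=\xi_j+\sum_c(r_c-S_c)$, and unrolling over $T$ gives $r_{\mathcal C}-\langle Z,\mathcal X\rangle=\sum_{j\in T}\xi_j$. Order the nodes from leaves to root; conditioned on $Z$ and on the rounding coins of all earlier nodes, $x_j$ is determined, $\mathbb{E}[\xi_j\mid\cdot]=0$, and $|\xi_j|\le\gamma|x_j|$ deterministically (the two grid points straddling $x_j$ differ by $\gamma(1+\gamma)^{i}\le\gamma|x_j|$). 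Hence the $\xi_j$ form a martingale-difference sequence, the cross terms vanish, and
\[
\mathbb{E}\big[(r_{\mathcal C}-\langle Z,\mathcal X\rangle)^2\,\big|\,Z\big]=\sum_{j\in T}\mathbb{E}[\xi_j^2\mid Z]\le\gamma^2\sum_{j\in T}\mathbb{E}[x_j^2\mid Z].
\]

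The third step — the technical heart — bounds $|x_j|$ for all $j$ simultaneously on a good event over $Z$. Since the local vectors $X_k$ are nonnegative, $\sum_{k\in\mathrm{subtree}(j)}X_k\preceq\mathcal X$ coordinatewise, so by $p$-stability $S_j\sim\|\sum_{k\in\mathrm{subtree}(j)}X_k\|_p\,W$ with $W\sim\mathcal D_p$ and $\|\cdot\|_p\le1$. The $p$-stable tail $\Pr[|W|\ge t]=O(t^{-p})$ with a union bound over the $\le m$ nodes gives: with probability $\ge1-\delta$ over $Z$, every $|S_j|\le M$ with $M=\mathrm{poly}(m\log(Um)/\delta)$. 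On this event we prove inductively (leaves to root) that $|x_j|\le2M$: indeed $|x_j|\le|\langle X_j,Z\rangle|+\sum_c|r_c|\le|S_j|+\sum_{k\in\mathrm{subtree}(j)}|\xi_k|$, and $\sum_k|\xi_k|\le\gamma\sum_k|x_k|\le 2\gamma mM$, which is $\ll M$ once $\gamma m\le1/2$ — guaranteed since $\gamma=(\epsilon\delta/(d\log(Um)))^C$ with $C$ large and $m\le\mathrm{poly}$. (For Gaussian $Z$ the subgaussian tail upgrades the $Z$-probability to $1-e^{-1/\delta}$; for $4$-wise independent Rademacher $Z$ one replaces the union bound by Chebyshev, which needs only a second moment and so survives $4$-wise independence, with $p=2$.) This step is the main obstacle: one must verify that the rounding errors do not compound multiplicatively across a tree of diameter $d$ and size $m$ — which is exactly why $\gamma$ is taken polynomially small in $1/(d\log(Um))$ — and one must pay only a $\mathrm{poly}(m/\delta)$ (not worse) factor in the heavy-tailed $p<2$ union bound.

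Combining the steps: on the good $Z$-event, $\mathbb{E}[(r_{\mathcal C}-\langle Z,\mathcal X\rangle)^2\mid Z]\le\gamma^2 m(2M)^2$, and choosing $C$ large enough makes the right-hand side at most $(\epsilon/\delta)^2$, i.e. $(\epsilon/\delta)^2\|\mathcal X\|_p^2$ without the normalization — the stated second-moment bound. For the high-probability additive bound one does not even need Chebyshev: on the same event $|r_{\mathcal C}-\langle Z,\mathcal X\rangle|\le\sum_{j\in T}|\xi_j|\le2\gamma mM\le\epsilon$ deterministically, so $|r_{\mathcal C}-\langle Z,\mathcal X\rangle|\le\epsilon\|\mathcal X\|_p$ holds with probability $\ge1-\delta$ over $Z$ and with probability $1$ over the rounding coins, which is the stated conclusion (with $1-e^{-1/\delta}$ in place of $1-\delta$, and $p=2$, in the Gaussian case, and with $1-\delta$ and $p=2$ in the $4$-wise Rademacher case).
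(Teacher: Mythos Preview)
The paper does not prove this lemma: it is quoted verbatim as ``Lemma 2, \cite{rounding}'' and used as a black box, so there is no in-paper proof to compare against. Your write-up is a faithful reconstruction of the standard argument from that reference --- unbiasedness of $\Gamma_\gamma$, the telescoping identity $r_{\mathcal C}-\langle Z,\mathcal X\rangle=\sum_{j\in T}\xi_j$, the martingale-difference structure of the $\xi_j$'s, and a union bound over the $p$-stable (or Gaussian, or $4$-wise Rademacher) tails of the partial sums $S_j$ to control $|x_j|$ uniformly --- and the mechanics are correct.

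Two small remarks. First, your step ``since the local vectors $X_k$ are nonnegative, $\sum_{k\in\mathrm{subtree}(j)}X_k\preceq\mathcal X$'' is an assumption you are importing from the application (insertion-only streams) rather than from the lemma as stated; it is exactly what is needed to get $\|\sum_{k\in\mathrm{subtree}(j)}X_k\|_p\le\|\mathcal X\|_p$, and without it partial sums need not be controlled by the total. This is fine for every use the paper makes of the lemma, but you should flag it. Second, you correctly write the second-moment bound as $(\epsilon/\delta)^2\|\mathcal X\|_p^2$; the paper's transcription of the lemma drops the square on $\|\mathcal X\|_p$, which is a typo in the quotation (your version is the dimensionally consistent one and is what is actually used downstream).
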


Now, we give our own rounding procedure.

\begin{procedure} \label{rounding procedure}

Suppose we want to update the rounded sketch of window $[l, r]$. The sketch has a matrix $\Pi \in D_p ^ {d\times U}$ for $p \in (0, 2)$. For $p = 2$, the sketch has a matrix  $\Pi \in \{1, -1\} ^ {d\times U}$, with each entry being a Rademacher variable and each row being 4-wise independent. The precise sketch for window $[l, r]$ is $(|\Pi x ^ {(l,r)}|_1, \dots, |\Pi x ^ {(l,r)}|_d) \in \mathbb{R} ^ d$. Let $\mathcal{P}(\Pi x ^ {(l,r)}) \in \mathbb{R} ^ d$ be the sketch vector after the rounding procedure. 

There are 2 cases: First is given a non-rounded sketch, and then rounding the sketch. In this case, for each $i\in [d]$, $\mathcal{P}(\Pi x ^ {(l,r)})_i = \Gamma_{\gamma} ((\Pi x ^ {(l,r)})_i)$.

The second case is given the rounded sketch of 2 sub-intervals $[l, k]$ and $[k+1, r]$. For this case, $\mathcal{P}(\Pi x ^ {(l,r)})_i = \Gamma_{\gamma}(\mathcal{P}(\Pi x ^ {(l,k)})_i + \mathcal{P}(\Pi x ^ {(k+1,r)})_i)$.
\end{procedure}

\begin{lem}\label{rounding lemma}
    Let $[l, r]$ be a window, $\Pi \in \mathbb{R} ^ {d\times U}$ be the sketching matrix (meaning each entry is sampled from $\mathcal{D}_p$ if $0 < p < 2$, or is a Rademacher variable if $p = 2$), and we are running procedure $\mathcal{P}$ described in Procedure \ref{rounding procedure} with parameter $\gamma = (\frac{\epsilon \delta d}{m\log (Um)}) ^ C$ to get a rounded vector for sketch $\Pi x ^ {(l,r)}$. Then for all $i\in [d]$, we have $\Big|\mathcal{P}(\Pi x ^ {(l,r)})_i - (\Pi x ^ {(l,r)})_i\Big| \le \epsilon \|x ^ {(l, r)}\|_p$. 
\end{lem}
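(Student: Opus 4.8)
The plan is to observe that, coordinate by coordinate, Procedure~\ref{rounding procedure} is nothing but an instance of the recursive randomized rounding protocol of Procedure~\ref{distributed rounding procedure}, and then to apply Lemma~\ref{rounding} once per coordinate and take a union bound. Concretely, fix $i \in [d]$ and let $Z := \Pi_i \in \mathbb{R}^U$ be the $i$-th row of $\Pi$; as specified in Procedure~\ref{rounding procedure}, $Z$ has i.i.d.\ entries from $\mathcal{D}_p$ when $0<p<2$, and $4$-wise independent Rademacher entries when $p=2$, which are precisely the two cases covered by Lemma~\ref{rounding}. The computation that produces $\mathcal{P}(\Pi x^{(l,r)})_i$ is indexed by a rooted binary tree $T$: its leaves are the atomic sub-intervals of $[l,r]$ whose sketches are maintained exactly (these partition $[l,r]$ into contiguous pieces), with the frequency vector of each such interval as the leaf's data vector $X_j$; each internal node sums the rounded values received from its children and passes the result through $\Gamma_\gamma$; and the root outputs $\mathcal{P}(\Pi x^{(l,r)})_i$. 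Since the leaf intervals partition $[l,r]$, the aggregate data vector is $\mathcal{X} = \sum_j X_j = x^{(l,r)}$, so $\langle Z, \mathcal{X}\rangle = (\Pi x^{(l,r)})_i$ and $\|\mathcal{X}\|_p = \|x^{(l,r)}\|_p$; this is exactly the setup of Procedure~\ref{distributed rounding procedure}.

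Next I would bound the parameters of $T$: it has at most $m$ leaves (one per stream update, at worst) and hence $O(m)$ nodes, and its depth --- the ``diameter'' in the language of Lemma~\ref{rounding} --- is at most $m$ (and only $O(\log m)$ for the structured merges our algorithm actually performs, but the crude bound suffices here). Because the chosen parameter $\gamma = (\tfrac{\epsilon\delta d}{m\log(Um)})^C$ has base strictly below $1$ and is raised to a large constant power $C$, it is smaller than the threshold $(\epsilon \cdot \tfrac{\delta}{d} / (\mathrm{depth}(T)\,\log(Um)))^{C'}$ required by Lemma~\ref{rounding} once we substitute $\mathrm{depth}(T) \le m$ and rescale the confidence parameter to $\delta/d$. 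Lemma~\ref{rounding} then gives, for each fixed $i$, that with probability at least $1 - O(\delta/d)$ over the draw of $Z$ and the rounding randomness, $|\mathcal{P}(\Pi x^{(l,r)})_i - (\Pi x^{(l,r)})_i| \le \epsilon \|x^{(l,r)}\|_p$. A union bound over $i \in [d]$, followed by rescaling $\delta$ by the hidden constant, yields the claim for all coordinates simultaneously.

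The only part that is not a direct citation, and hence the main thing to get right, is verifying that the merge tree genuinely fits the template of Procedure~\ref{distributed rounding procedure}: that every merge is additive (a clean $\Gamma_\gamma(a+b)$, with no subtractions --- the difference estimators used elsewhere in the algorithm are handled separately and do not enter here), that the leaf intervals really do partition the window so that the leaf frequency vectors add to $x^{(l,r)}$, and that the base case (rounding a yet-unrounded exact sketch coordinate) matches a leaf step with data $X_j = x^{(\text{interval})}$ and no children. Once this identification is made, all of the probabilistic content is supplied verbatim by Lemma~\ref{rounding}.
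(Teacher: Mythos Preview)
Your proposal is correct and follows essentially the same approach as the paper's proof: build the merge tree from the two cases of Procedure~\ref{rounding procedure}, recognize it as an instance of Procedure~\ref{distributed rounding procedure}, invoke Lemma~\ref{rounding}, and union bound over the $d$ coordinates. The paper's version is considerably more terse (it does not spell out the parameter comparison or the per-coordinate argument), but the content is the same.
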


\subsection{Algorithm Overview} 



The structure of our algorithm follows \cite{woodruff2022tight}. Let the current time be $r$ and query window be $W = [r-n+1,r]$. Our algorithm first uses Algorithm \ref{alg:reg1} to maintain a constant factor partition on the top level. We will have $O(\log n)$ timestamps on the top level. Let them be $t_1< t_2< \dots< t_s$. 

For each of the $O(\log n)$ pieces of the stream, we will partition the substream $[t_i, t_{i + 1}]$ into more detailed timestamps. We will have a binary tree-like structure, with $\beta = O(\log \epsilon ^ {-p})$ at the top level, and each level $j$ will keep roughly $O(2 ^ j)$ timestamps. Define the $k$-th timestamp on level $j$ between top level $i$ to be $t_{i, j, k}$. 

We will use \dif to estimate the difference of $F_p$ between 2 nearby timestamps and the current time $r$, to exclude the contribution from $t_1$ to $r-n$. As the \dif is maintained by a binary tree-like structure, we only need to query the \dif a constant number of times on each level. So if each \dif gives an $\eta$ additive error of $F_p ^ W$, then our final approximation will have an $O(\eta \cdot \beta)$ additive error of $F_p ^ W$. So a value $\eta = O(\frac {\epsilon}{\log \frac 1{\epsilon}})$ suffices and the \dif uses $O(\epsilon ^ {-2} \log ^ 2n \text{poly}(\log \frac 1{\epsilon}))$ space in total. 

To maintain the binary tree-like structure for the \diff, the algorithm by \cite{woodruff2022tight} uses a $p$-stable sketch with high success probability, and thus by union bound, these sketch always can rule out unnecessary timestamps and retain only the essential ones. 

Our algorithm relies on the analysis of \estt, and no such union bound will be used (thus saving a factor of $\log n$), so it will require a more careful analysis. In the following content, we introduce our new techniques.

\begin{algorithm}[H]
\caption{\textbf{\textsf{Moment Estimation in the Sliding Window with Optimal Space}}}
\label{alg2}
\textbf{Input:} Stream $u_1, u_2, \dots \in [U] $, window length $n$, approximate ratio $\epsilon \in (0, 1)$ and error rate $\delta \in (0,1)$, parameter $p$.

 \textbf{Output:}At each time $i$, output a
 $(1+\varepsilon)$-approximation to $F_p$ for window $[i - n, i]$.

 \textbf{Initialization:} $\beta \gets \left\lceil \log \frac{100 \max(p, 1)}{\varepsilon^{\max(p, 1)}} \right\rceil, \eta \gets \frac{\epsilon}{2 ^ {25}\cdot \log \frac 1{\epsilon}}, \gamma_j \gets 2^{3-j}$ for all $j \in [\beta]$, $\delta_{dif} = \frac{\delta}{2 ^ {25}\cdot \log \frac 1{\epsilon}}$

 \begin{enumerate}
     \item Let $\epsilon_0 = \frac 12$, prepare the matrix $\Pi_{top}$ for $(\frac{\epsilon_0 ^ p}{64}, \frac{\epsilon_0 ^ p}{128}, \delta_{SE}=\frac{\delta}{\log O(\log m)})$ \est to maintain a constant factor partition on the top level.

     \item Prepare the matrix $\pi$ for a \ind with dimension $O(\epsilon ^ 2 \log \frac 1{\delta})$.

     \item  To maintain the subroutine level, for each $j\in [\beta]$, prepare $C\cdot 2 ^ j\log n$ random matrices $\Pi_i \in d$ for a large constant $C$, each responsible for an $(\frac 18, \cdot \frac{2 ^ {-1/p} \epsilon}{2 ^ {32}}, \delta_{sub} = \frac{\delta}{O(C\epsilon ^ {-p} \log n)})$ \est. Store them in a queue $q_j$.

     \item For each level $j \in [\beta]$, generate the matrix $\Pi_{dif}$ for the $(\gamma_j, \eta, \delta_{dif})$ -\diff, with dimension $d = O(\frac{\gamma_j ^ {2/p}}{\eta ^ 2} (\log \frac 1{\eta} + \log \frac 1{\delta_{dif}}))$.
 \end{enumerate}

\noindent\makebox[\linewidth]{\rule{\paperwidth-10cm}{0.4pt}}

\begin{algorithmic}[1]
\For{each update $a_i$}
    \State \textbf{NewSW}($i$) \Comment{Create new subroutines for each update}
    \State \textbf{MergeSW}($i$) \Comment{Removes extraneous subroutines}
    \State Output \textbf{StitchSW}
\EndFor
\end{algorithmic}
\end{algorithm}

\begin{algorithm}[H]
\begin{algorithmic}[1]
\Function{NewSW}{$i$}
\State Let $s$ be the number of copies in the histogram, $t_j$ be the starting time of the $j$-th copy.
\State $t_{s+1} \gets i$.
\State Start a new top level \est using matrix $\Pi_{top}$.
\For{$j \in [\beta]$} \Comment{Start instances of each granularity for each copy }
    \State $t_{s + 1, j, 0} \gets i$
    \State Start a new instance of ($\gamma_j, \eta, \delta_{dif}$) - \dif $\textbf{SDiffEst}(s + 1, j, 0)$ 
    \State Start a new instance of \est $f_{s + 1, j, 0}$, using the matrix from top$(q_j)$, pop($q_j$).
    \State Use procedure \ref{rounding procedure} to round the \dif ending at time $i$. 
\EndFor
\EndFunction
\end{algorithmic}
\end{algorithm}

\begin{algorithm}[H]
\begin{algorithmic}[1]
\Function{MergeSW}{$t$}

\State Use Algorithm \ref{alg:reg1} with $\epsilon = \frac 12$ to maintain a constant factor partition at the top level, and an \ind along with each timestamps. The only difference is when deleting sketch or adding new sketch, we use Procedure \ref{rounding procedure} to calculate the result of rounding.
\For{$i \in [s], j \in [\beta]$} \Comment{Difference estimator maintenance}
    \State Let $r$ be the number of time steps $t_{i, j, *}$
    \For{$k \in [r-1]$} \Comment{Merges two algorithms with "small" contributions}    
        \If{$\hat f_{i,j,k-1} ^ {(t_{i, j, k-1}, t_{i, j, k+1})} \le 2^{(-j-14)/p} \cdot \hat f_{i,j,k-1} ^ {(t_{i, j, k - 1},t)}$}
            \If{$t_{i, j, k} \notin \{t_{i, j-1, *}\}$}
                \State Delete timestamp $t_{i, j, k}$ and the related \est. 
                \State Push back the matrix $\Pi$ to the queue $q_j$.
                \State Merge (add) the sketches for $\operatorname{SDiffEsT}(i, j, k -1)$  and $\operatorname{SDiffEsT}(i, j, k)$. 
                \State Use Procedure \ref{rounding procedure} to merge the rounding result.

            \EndIf
        \EndIf
    \EndFor
\EndFor
\EndFunction
\end{algorithmic}
\end{algorithm}

\subsection{Rounding Techniques}
In this subsection, we present several space-saving techniques that allow us to maintain correctness guarantees while significantly reducing the storage requirements of the sketching data structures. First, we describe how rounding can be applied to the \est sketches $f_{i,j,k}$ to reduce the number of bits required per entry. Next, we address the challenge of bounding the number of timestamps in the absence of a global union bound, introducing a rotating random set technique that probabilistically limits the number of active sketches. Finally, we apply a fine-grained rounding scheme to compress both the \ind and \dif sketches, ensuring that their space usage remains within our overall complexity bounds.

\paragraph{Storing the Sketch via \est $f_{i, j, k}$}

We now describe how to apply rounding to the estimates computed by \est to achieve space efficiency.

Throughout the algorithm, the sketch $f_{i,j,k}$ is used to estimate the $\ell_p$ norm over the interval from timestamp $t_{i,j,k}$ to either another timestamp $t'$ or the current time. While we maintain exact estimates for intervals ending at the current time, we observe that for estimates ending at previous timestamps $t'$, we can round the values to save space. Specifically, we round $f_{i,j,k}^{(t_{i,j,k}, t')}$ to the nearest power of $(1 + \frac{1}{16})$. The following claim shows that this rounding preserves the \est property up to a small additive loss:

\begin{clm} Let $u_1, \dots, u_m$ be a stream and $[l, r]$ an arbitrary window. Suppose $f$ satisfies the $(\epsilon_1, \epsilon_2)$ \est property on $[l, r]$. Define $\hat{f}^{(a,b)} = \text{Round}(f^{(a,b)})$ to the nearest power of $1 + \frac{1}{16}$. Then $\hat{f}$ satisfies the $(\epsilon_1 + \frac{1}{8}, (1+\frac{1}{16}) \epsilon_2)$ \est property on $[l, r]$. \end{clm}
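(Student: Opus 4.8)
The plan is a short interval-arithmetic argument that propagates a multiplicative rounding error through the two-sided \est bound. Write $\eta=\tfrac1{16}$. The key observation is that deterministically rounding a positive value $v$ to the nearest power of $1+\eta$ perturbs it by at most a $(1\pm\eta)$ multiplicative factor: $v$ lies in a unique bracket $[(1+\eta)^k,(1+\eta)^{k+1}]$, $\mathrm{Round}(v)$ is one of the two endpoints, and each endpoint is within a factor $1+\eta$ of $v$, so $1/(1+\eta)\le \mathrm{Round}(v)/v\le 1+\eta$, hence $\mathrm{Round}(v)=(1\pm\eta)v$. (If $f^{(a,b)}=0$ — which for an insertion-only stream occurs only for an empty interval — set $\widehat f^{(a,b)}=0$; then the $(\epsilon_1,\epsilon_2)$ \est bound already forces $\ell_p^{(a,b)}\le \tfrac{\epsilon_2}{1-\epsilon_1}\ell_p^{(l,r)}$, and the rounded bound holds trivially since the additive term only grows.) Since the rounding map is deterministic it introduces no new randomness: I would simply condition on the same probability-$(1-\delta)$ event on which $f$ satisfies its $(\epsilon_1,\epsilon_2)$ \est bound on all sub-windows of $[l,r]$ simultaneously, and argue deterministically from there, so the failure probability stays $\delta$.

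On that event, fix any $[a,b]\subseteq[l,r]$ with $f^{(a,b)}>0$. Chaining the rounding bound with the \est bound,
\[
\widehat f^{(a,b)}=(1\pm\eta)\,f^{(a,b)}=(1\pm\eta)\bigl[(1\pm\epsilon_1)\ell_p^{(a,b)}\pm\epsilon_2\,\ell_p^{(l,r)}\bigr].
\]
For the leading term, any product $(1+s\eta)(1+t\epsilon_1)$ with $s,t\in[-1,1]$ equals $1+(s\eta+t\epsilon_1+st\,\eta\epsilon_1)$, whose deviation from $1$ is at most $\eta+\epsilon_1+\eta\epsilon_1\le\epsilon_1+2\eta$ since $\epsilon_1\le1$; so this term lies in $(1\pm(\epsilon_1+2\eta))\ell_p^{(a,b)}$. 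For the additive term, $(1\pm\eta)\epsilon_2\ell_p^{(l,r)}$ has magnitude at most $(1+\eta)\epsilon_2\ell_p^{(l,r)}$. Substituting $\eta=\tfrac1{16}$ gives $2\eta=\tfrac18$ and $1+\eta=1+\tfrac1{16}$, so $\widehat f^{(a,b)}=(1\pm(\epsilon_1+\tfrac18))\ell_p^{(a,b)}\pm(1+\tfrac1{16})\epsilon_2\,\ell_p^{(l,r)}$ for every $[a,b]\subseteq[l,r]$ at once — which is exactly the $(\epsilon_1+\tfrac18,(1+\tfrac1{16})\epsilon_2)$ \est property on $[l,r]$.

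There is no genuine obstacle here: the statement is essentially the exercise of pushing a multiplicative rounding factor through a two-sided estimate, and the only things to get right are (i) the $(1\pm\eta)$ bound for round-to-nearest-power (robust to how ties are broken), (ii) accounting for the cross term so that $\epsilon_1+\eta+\eta\epsilon_1\le\epsilon_1+2\eta$ suffices rather than being sloppy and losing a larger constant, and (iii) recording that the rounding is deterministic so no extra union bound or failure probability is incurred. In the writeup I would isolate (i) as a one-line sub-claim and present the main bound as the displayed two-step substitution above.
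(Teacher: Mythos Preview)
Your proposal is correct and is essentially the natural argument; the paper in fact states this claim without proof, treating it as immediate. Your write-up is more careful than the paper (you handle the $f^{(a,b)}=0$ edge case and track the cross term $\eta\epsilon_1$ explicitly), but substantively there is nothing to compare against.
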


Thus, in our analysis, since we use $(\frac{1}{8}, \frac{2^{-1/p} \epsilon}{2^{32}}, \delta_{\text{sub}})$-\est, we may safely treat the rounded sketches as $(\frac{1}{4}, \frac{2^{-1/p} \epsilon}{2^{30}}, \delta_{\text{sub}})$-\est without loss of generality.

However, directly storing $O(s^2)$ rounded values (where $s = O(\epsilon^{-p} \log n)$ is the number of timestamps) would still be space-inefficient. To address this, we introduce the following compression technique:

\begin{tech}\label{rounding tech} For each timestamp $t_{i,j,k}$, we maintain an array of length $O(\log n)$. The $c$-th entry stores the rank of the earliest timestamp $t'$ such that $f_{i,j,k}^{(t_{i,j,k}, t')} \ge (1 + \frac{1}{16})^c$. \end{tech}

A potential issue with this scheme is that the value $f_{i,j,k}^{(t_{i,j,k}, t_u)}$ might be "overestimated" due to a later timestamp $t_v$ with $f_{i,j,k}^{(t_{i,j,k}, t_v)} > f_{i,j,k}^{(t_{i,j,k}, t_u)}$. However, this does not violate the \est property, since $f_{i, j, k} ^ {(t_{i, j, k}, t_u)} < f_{i, j, k} ^ {(t_{i, j, k}, t_v)} < (1+\epsilon_1) \ell_p ^ {(t_{i,j,k}, t_v)} + \epsilon_2 \ell_p ^ {(l,r)} < (1+\epsilon_1) \ell_p ^ {(t_{i,j,k}, t_u)} + \epsilon_2 \ell_p ^ {(l,r)}$. 

Therefore, this approximation remains valid, and the total space required to store all the \est sketches is reduced to $O(s \log n \log s)$ bits.
\paragraph{Challenges in Bounding the Number of Timestamps}

The analysis in \cite{woodruff2022tight} relies on applying a union bound over all sketches and all times, which introduces an additional $\log n$ factor in the space complexity. In contrast, we leverage the \est to eliminate the need for such a union bound. However, this also introduces new challenges.

Consider the current window $W = [r - n + 1, r]$, and let $l$ be the earliest index such that $F_p^{(l, r)} \le 2 F_p^W$. Conditioned on the \est satisfying its guarantee over $[l, r]$, we can show that the number of timestamps at the first level is bounded and provides reliable guidance for constructing \dif.

The complication arises because this guarantee only holds with probability $1 - \delta$, and without a global union bound, we cannot ensure that the number of timestamps is always bounded across time. For instance, if the same matrix $\Pi$ is reused across all \est instances, then in the unlikely case that these matrices fail simultaneously, the number of timestamps could grow unbounded.

To mitigate this, we introduce the following technique:

\begin{tech}[Rotating Random Sets] For each level $j \in [\beta]$, we maintain a queue $q_j$ that holds $C \cdot 2^j \log n$ independent random matrices $\Pi$, for a sufficiently large constant $C$.

Each \est instance $f_{i,j,k}$ uses a distinct matrix $\Pi$ from the queue that has not been used before. When a sketch is deleted, its corresponding matrix is returned to $q_j$, making it available for reuse. \end{tech}

\begin{lem}\label{rotating} Fix any time $t$. With probability at least $1 - \frac{1}{\text{poly}(n)}$, the queue $q_j$ is non-empty for all levels $j \in [\beta]$. This ensures that there are at most $O(2^j \log n)$ active timestamps at level $j$.  \end{lem}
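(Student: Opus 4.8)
The plan is to split the claim into a deterministic structural bound and a probabilistic ingredient, exploiting that every active level-$j$ sketch occupies a \emph{distinct} matrix of the fixed pool $q_j$, which has size $C\,2^j\log n$. Thus it suffices to show that, with probability $1-1/\poly(n)$, at every update up to time $t$ the number of simultaneously active level-$j$ timestamps is at most $\tfrac12 C\,2^j\log n$: this keeps $q_j$ non-empty and also yields the stated $O(2^j\log n)$ bound. I would first establish the structural bound assuming the top level behaves as in Algorithm~\ref{alg:reg1} and that ``most'' matrices in $q_j$ are accurate, and then show the latter fails only with probability $1/\poly(n)$, crucially using that the $C\,2^j\log n$ matrices in $q_j$ are mutually independent.

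For the structural bound: by the analysis behind Theorem~\ref{1} (run with $\eps=\tfrac12$), except with probability $\delta_{SE}$ the top level maintains $O(\log n)$ timestamps $t_1<\dots<t_s$ for which consecutive suffixes $[t_i,t]$ and $[t_{i+1},t]$ have $F_p$ values within a constant factor of one another. Fix a top-level pair $(t_i,t_{i+1})$ and argue by induction on $j$ that it contains $O(2^j)$ active level-$j$ timestamps. A level-$j$ timestamp $t_{i,j,k}$ not shared with level $j-1$ must have survived the test in \textsc{MergeSW}, i.e.\ $\hat f_{i,j,k-1}^{(t_{i,j,k-1},\,t_{i,j,k+1})} > 2^{(-j-14)/p}\,\hat f_{i,j,k-1}^{(t_{i,j,k-1},\,t)}$; if the matrix used by $f_{i,j,k-1}$ satisfies its $(\tfrac14,\tfrac{2^{-1/p}\eps}{2^{30}})$-\est guarantee then, since $\eps_2$ is chosen tiny precisely for this purpose and using the top-level constant-factor property, this forces $F_p^{(t_{i,j,k-1},\,t_{i,j,k+1})}=\Omega(2^{-j}F_p^{(t_i,t)})$. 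Since $F_p$ is superadditive for $p\ge1$ and the intervals $[t_{i,j,k-1},t_{i,j,k+1}]$ are pairwise disjoint over even (resp.\ odd) $k$ and contained in $[t_i,t]$, only $O(2^j)$ such non-shared timestamps can exist, while the shared ones number at most the level-$(j-1)$ count; summing over the $O(\log n)$ pairs gives $O(2^j\log n)$ active level-$j$ timestamps.

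For the probabilistic ingredient: the matrices in $q_j$ are independent, each failing its \est guarantee over all windows with probability at most $\delta_{sub}=\delta/O(C\eps^{-p}\log n)$, which is below any prescribed constant, so $\mathbb{E}|\mathcal B_j|\ll 2^j\log n$. By a Chernoff bound (Lemma~\ref{cher} with full independence), at most a $\tfrac1{100}$-fraction of $q_j$ is bad except with probability $2^{-\Omega(2^j\log n)}=1/\poly(n)$, where it matters that $2^j\log n\ge\log n$. A bad matrix disturbs only a bounded neighborhood of timestamps, so on this event the active level-$j$ count is at most $O(2^j\log n)+O(|\mathcal B_j|)\le \tfrac12 C\,2^j\log n$ for $C$ a large enough constant, hence $q_j$ is non-empty. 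A further standard $1/\poly(n)$ event ensures all \est sketch values stay in $[\Omega(1),\poly(n)]$ (needed so the geometric merge rules produce only $O(\log n)$ top-level and $O(2^j)$ per-piece timestamps). Union-bounding over the $\beta=O(\log\eps^{-p})$ levels and over the $\poly(n)$ update times preserves the $1/\poly(n)$ bound; the $\delta_{SE}$ top-level term is absorbed by the algorithm's global union bound.

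The main obstacle is circularity: which sketches are active, hence which matrices are in play and what they report, is itself a function of the random outputs of earlier sketches, so one cannot simply assert ``each active sketch is independently good.'' The rotating-random-set device is precisely what decouples this — each \est instance draws a fresh matrix never simultaneously in use from a fixed pool of independent matrices — so one only has to reason about the pool, where the number of bad matrices is an honest sum of independent indicators. The second, quantitative subtlety is that a plain union bound over the pool would give failure probability only $O(\delta)$, not $1/\poly(n)$; getting the stated bound needs the Chernoff concentration, which is why the pool is deliberately oversized by a large constant factor and why it is important that each level carries $\Omega(\log n)$ sketches.
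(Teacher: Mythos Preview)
Your proposal is correct and follows essentially the same approach as the paper. Both arguments hinge on (i) using Chernoff on the fixed pool of $C\cdot 2^j\log n$ mutually independent matrices to show that, with probability $1-1/\poly(n)$, at most a constant fraction are ``bad'' (fail the \est guarantee on the relevant geometric family of windows ending at $t$), and (ii) observing that good matrices can support at most $O(2^j\log n)$ active timestamps via the geometric-lower-bound/superadditivity reasoning, so the queue cannot be exhausted once $C$ is large enough. The paper compresses step (ii) into a citation of Lemma~\ref{A.6} (and implicitly Lemma~\ref{A.5}), whereas you re-derive it from the \textsc{MergeSW} survival condition and $F_p$-superadditivity; your explicit handling of the circularity via the pool and of the extra timestamps contributed by bad matrices is a bit more careful than the paper's terse version, but the underlying idea is identical. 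One minor inaccuracy: the per-matrix failure probability over \emph{all} the geometric windows is $\delta_{sub}\cdot O(\log n)$ rather than $\delta_{sub}$ (cf.\ Lemma~\ref{add}), but since this is still a small constant your Chernoff step goes through unchanged.
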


Hence, by a union bound over all levels and time, with high probability, the total number of timestamps across all levels remains bounded by $O(\epsilon^{-p} \log n)$ throughout the entire stream.

\paragraph{Rounding the \ind and \dif} Our algorithm maintains only $O(\log n)$ timestamps at the top level. However, each of these timestamps stores an instance of \ind sketch, which uses $O(\epsilon^{-2} \log n)$ bits, and the total space for the \dif sketches at each level is also $\tilde{O}(\epsilon^{-2} \log n)$. To reduce the overall space to $\tilde{O}(\epsilon^{-p} \log^2 n+ \epsilon ^ {-2}\log n)$, we apply the rounding technique from \cite{rounding} to compress the sketch values.


\begin{lem} (Correctness of rounding \ind)\label{rounding ind}
    
    Let $\epsilon', \delta' \in (0, 1)$, and $[l, r]$ be a window, $\Pi \in \mathbb{R} ^ {d\times U}$ be the sketch matrix, $\gamma = (\frac{\epsilon'\delta'}{dm \log (Um)}) ^ C$, $\mathcal{P}(\Pi x ^ {(l, r)})$ be the sketch vector after rounding procedure \ref{rounding procedure} with parameter $\gamma$. 
    
    Then $\med(|\mathcal{P}(\Pi x ^ {(l, r)})_1|, \dots, |\mathcal{P}(\Pi x ^ {(l, r)})_d|) = \med(|(\Pi x ^ {(l, r)})_1|, \dots, |(\Pi x ^ {(l, r)})_d|) \pm \epsilon'\|x ^ {(l, r)}\|_p$ with probability $\ge 1- \delta'$.
\end{lem}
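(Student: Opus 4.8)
The plan is to reduce the claim to an elementary stability property of the median under small coordinate-wise perturbations, with the per-coordinate rounding error supplied directly by Lemma~\ref{rounding lemma}. Write $a_i := \abs{(\Pi x^{(l,r)})_i}$ and $b_i := \abs{\mathcal{P}(\Pi x^{(l,r)})_i}$ for $i\in[d]$, so that the left-hand side of the claimed identity is $\med(b_1,\dots,b_d)$ and the quantity it is compared with is $\med(a_1,\dots,a_d)$, the true \ind estimate.

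First I would invoke Lemma~\ref{rounding lemma} with its accuracy and failure parameters set to $\epsilon'$ and $\delta'$; the granularity it prescribes is exactly the $\gamma = (\epsilon'\delta'/(dm\log(Um)))^C$ appearing in the statement (up to shrinking $\gamma$ by a further polynomial factor, which only sharpens its guarantee). Hence on an event $\mathcal{E}$ of probability $\ge 1-\delta'$ we have $\abs{\mathcal{P}(\Pi x^{(l,r)})_i - (\Pi x^{(l,r)})_i} \le \epsilon'\norm{x^{(l,r)}}{p}$ for every $i\in[d]$ simultaneously. It is worth stressing that Lemma~\ref{rounding lemma} already covers the case where $\mathcal{P}(\Pi x^{(l,r)})$ was built by an arbitrary sequence of the two operations of Procedure~\ref{rounding procedure} (direct rounding of a fresh sketch, and rounding the sum of two previously-rounded child sketches), so no separate treatment of the recursive merge structure is needed here. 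By the reverse triangle inequality $\abs{\abs{u}-\abs{v}}\le\abs{u-v}$, on $\mathcal{E}$ this yields $\abs{b_i-a_i}\le\Delta$ for all $i$, where $\Delta := \epsilon'\norm{x^{(l,r)}}{p}$.

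It then remains to establish the deterministic fact that if $\abs{b_i-a_i}\le\Delta$ for all $i\in[d]$ then $\abs{\med(b_1,\dots,b_d)-\med(a_1,\dots,a_d)}\le\Delta$. I would prove this by counting, using the convention that $\med(\cdot)$ is the $\lceil d/2\rceil$-th smallest coordinate. Put $M=\med(a_1,\dots,a_d)$: at least $\lceil d/2\rceil$ indices satisfy $a_i\le M$, and each such index has $b_i\le a_i+\Delta\le M+\Delta$, so at least $\lceil d/2\rceil$ of the $b_i$ are $\le M+\Delta$ and therefore $\med(b_1,\dots,b_d)\le M+\Delta$; symmetrically at least $\lfloor d/2\rfloor+1\ge\lceil d/2\rceil$ indices satisfy $a_i\ge M$, which forces $\med(b_1,\dots,b_d)\ge M-\Delta$. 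Thus $\med(b_1,\dots,b_d)=M\pm\Delta=\med(a_1,\dots,a_d)\pm\epsilon'\norm{x^{(l,r)}}{p}$ on $\mathcal{E}$, which is exactly the claimed bound with probability $\ge 1-\delta'$. Since all the probabilistic content is already packaged inside Lemma~\ref{rounding lemma}, there is no real obstacle here; the only points deserving care are checking the compatibility of the chosen $\gamma$ with what Lemma~\ref{rounding lemma} needs, and cleanly handling the median convention when $d$ is even.
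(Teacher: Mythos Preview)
Your proposal is correct and follows the same approach as the paper: the paper's own proof is literally the single sentence ``The proof simply follows Lemma~\ref{rounding lemma},'' and your argument is precisely the natural unpacking of that sentence---invoke Lemma~\ref{rounding lemma} for a uniform per-coordinate bound, pass to absolute values via the reverse triangle inequality, and then apply the elementary $1$-Lipschitz property of the median under $\ell_\infty$ perturbations. If anything, you have supplied the details the paper omits.
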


\begin{proof}
    The proof simply follows Lemma \ref{rounding lemma}.
\end{proof}

Setting the $\epsilon', \delta'$ in Lemma \ref{rounding ind} to be $\frac 14 \epsilon, \frac 14 \delta$, we can store the rounded vector in space $O(d \log (\frac 1\gamma \log n)) = O(d (\log \frac{1}{\epsilon \delta} + \log\log n))$. 

Next, we prove the correctness of rounding the \diff, for $p = 2$. The argument is the same as Lemma \ref{rounding ind}. 

\begin{lem}(Correctness of rounding the \dif) \label{rounding dif}
 Let $\epsilon_0, \delta_0 \in (0, 1), q = 3$, $u, v$ be 2 frequent vector, $A \in \mathbb{R} ^ {d\times U}$ be the sketch matrix, $\gamma = (\frac{\epsilon_0  \delta_0}{dm \log (Um)}) ^ C$, for $p \in (0, 2)$, recall the way \dif estimate the difference between $\|u+v\|_p ^ p$ and $\|v\|_p ^ p$ is by calculate the mean of $\prod_{j=  q(i-1) + 1} ^ {qi} |(Au+Av)_j| ^ {p/q} - \prod_{j=  q(i-1) + 1} ^ {qi} |(Av)_j| ^ {p/q}$.  Then the mean of $\prod_{j=  q(i-1) + 1} ^ {qi} |(\mathcal{P}(Au+Av))_j| ^ {p/q} - \prod_{j=  q(i-1) + 1} ^ {qi} |(\mathcal{P}(Av))_j| ^ {p/q}$ only deviate $O((\epsilon_0 \frac{d ^ 2}{\delta_0 ^ 2}) ^ {q/p})\cdot F(v)$ from the original estimation with probability $\ge 1 - 2\delta_0$. 
    
\end{lem}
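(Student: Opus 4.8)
The plan is to reduce the claim to a per‑coordinate guarantee for the rounding procedure and then run an elementary perturbation analysis on the degree‑$q$ products that define the \dif estimate.

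First I would invoke Lemma~\ref{rounding lemma} --- which already accounts for the recursive merging of rounded sketches in Procedure~\ref{rounding procedure} --- applied once to the sketch $A(u+v)$ and once to the sketch $Av$, with its accuracy parameter tuned so that it matches the choice $\gamma=(\tfrac{\epsilon_0\delta_0}{dm\log(Um)})^C$. This yields: with probability $\ge 1-\delta_0$, $\bigl|\mathcal{P}(A(u+v))_j-(A(u+v))_j\bigr|\le\epsilon_0\|u+v\|_p$ for every $j\in[d]$, and likewise for $Av$ with probability $\ge 1-\delta_0$; a union bound puts us on both events with probability $\ge 1-2\delta_0$. In parallel I would record the elementary fact that, by $p$‑stability of the rows of $A$ together with the polynomial tail of $\mathcal{D}_p$, a union bound over the $d$ coordinates gives $\max_j\bigl|(A(u+v))_j\bigr|\le(O(d/\delta_0))^{1/p}\|u+v\|_p$, and similarly for $Av$; the small failure probability for this is absorbed into the constants by mildly rescaling $\delta_0$.

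Next I would bound the rounding‑induced change of a single block. Fix block $i$ and compare $z_i=\prod_j|(A(u+v))_j|^{p/q}$ with $\widetilde z_i=\prod_j|\mathcal{P}(A(u+v))_j|^{p/q}$, the product running over the $q=3$ coordinates of the block. Since $x\mapsto x^{p/q}$ is not Lipschitz at $0$, I would use the sub‑additivity inequality $|a^{\alpha}-b^{\alpha}|\le|a-b|^{\alpha}$ (valid for $a,b\ge 0$ and $\alpha=p/q\le 2/3\le 1$): each factor of the product moves by at most $\tau=(\epsilon_0\|u+v\|_p)^{p/q}$, while staying bounded by $M=\bigl((O(d/\delta_0))^{1/p}\|u+v\|_p\bigr)^{p/q}$ up to lower‑order terms. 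A telescoping expansion of $\widetilde z_i-z_i$ over the three factors then bounds $|z_i-\widetilde z_i|$ by $q\cdot M^{q-1}\cdot\tau$; the powers of $\|u+v\|_p$ recombine into $\|u+v\|_p^{p}=F(u+v)$, and the identical estimate holds for $|z_i'-\widetilde z_i'|$ with $v$ in place of $u+v$. Finally the \dif hypothesis $\min(F(u),F(u+v)-F(v))\le\gamma F(v)$ forces $F(u+v)=O(F(v))$ (either $F(u)\le\gamma F(v)$, so $F(u+v)\le(F(u)^{1/p}+F(v)^{1/p})^{p}=O(F(v))$, or directly $F(u+v)\le(1+\gamma)F(v)$), so every block bound is of the form $O\bigl((\epsilon_0 d^2/\delta_0^2)^{q/p}\bigr)\,F(v)$.

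Since the \dif estimate is the arithmetic mean of the $d/q$ block differences $z_i-z_i'$, and each is within the above quantity of $\widetilde z_i-\widetilde z_i'$, the mean is within the same quantity of the rounded mean; as we are on the good event with probability $\ge 1-2\delta_0$, this is exactly the asserted bound. (For $p=2$ the estimate is merely $\|M(u+v)\|_2^2-\|Mv\|_2^2$, so the argument collapses to applying the linear‑sketch rounding bound of Lemma~\ref{rounding ind} to each of the two norms.) I expect the main obstacle to be the per‑block perturbation step: because $p/q<1$ makes $x\mapsto x^{p/q}$ singular at the origin, a naive Lipschitz bound on each factor fails, and a single small coordinate $(A(u+v))_j$ could a priori make a block blow up relative to $F(v)$; the sub‑additivity inequality (or, alternatively, an anti‑concentration bound ruling out pathologically small coordinates at the cost of another $\delta_0$) is what controls this. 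A secondary nuisance is that $\mathcal{D}_p$ has only a polynomial tail, which is precisely what injects the $\mathrm{poly}(d/\delta_0)$ factor into the final bound --- harmless, since $\gamma$, and hence the per‑coordinate rounding error, can be taken polynomially small in $d/\delta_0$.
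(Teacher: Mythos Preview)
Your proposal is correct and follows essentially the same route as the paper: invoke the per-coordinate rounding guarantee of Lemma~\ref{rounding lemma} for both $A(u+v)$ and $Av$, couple it with a $p$-stable tail bound to cap each $|(Av)_j|$ by $\mathrm{poly}(d/\delta_0)\cdot\|v\|_p$, then control each block via sub-additivity of $x\mapsto x^{p/q}$ and a telescoping product, and finally absorb $F(u+v)$ into $O(F(v))$ using the \dif premise. Your treatment of the singularity of $x^{p/q}$ at the origin via $|a^{\alpha}-b^{\alpha}|\le |a-b|^{\alpha}$ is in fact slightly cleaner than the paper's presentation (which only writes the upper-direction inequality), and your choice $\lambda=(O(d/\delta_0))^{1/p}$ is the tight value while the paper takes the looser $\lambda=O(d/\delta_0)$; neither difference is material.
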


Thus, setting $\epsilon_0 = O(\frac{\epsilon \delta ^ 2}{d ^ 2})$ and $\delta_0 = O(\delta)$, we can store the each \dif in space $O(d (\log \frac {1}{\epsilon\delta} + \log\log n))$. 
\subsection{Nearly Optimal $F_p$ Estimation}

Now we introduce our main result for $F_p$ estimation.

\begin{theorem}\label{main est}
    Let $p\in [1, 2]$, for any fixed window $W$, with probability $\ge \frac 23$, Algorithm \ref{alg2} will give an $\epsilon$ approximation of $F_p ^ {W}$, and  use $\tilde{O}(\epsilon ^ {-p}\log ^ 2n + \epsilon ^ {-2} \log n)$ bits of space, more specifically, $O((\epsilon ^ {-p} \log ^ 2n + \epsilon ^ {-2}\log n \log ^ 4 \frac 1{\epsilon}) (\log \frac{1}{\epsilon} + (\log \log n) ^ 2))$.
\end{theorem}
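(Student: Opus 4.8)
I would prove correctness and the space bound separately, conditioning the correctness argument on a single global good event. Fix a query time $r$, let $W=[r-n+1,r]$, and let $l$ be the earliest index with $F_p^{(l,r)}\le 2F_p^W$; the geometric decay of the top-level blocks puts $l$ inside the first top-level piece. Define the good event $\calE$ as the intersection of: (i) the top-level $(\tfrac{\epsilon_0^p}{64},\tfrac{\epsilon_0^p}{128},\delta_{SE})$ \est of Algorithm~\ref{alg:reg1} (run with $\epsilon_0=\tfrac12$) together with its \ind are correct, so by the analysis behind Theorem~\ref{1} the top histogram keeps $t_1\le r-n+1\le t_2$, $F_p^{(t_1,r)}=\Theta(F_p^W)$, and only $O(\log n)$ top-level timestamps; (ii) every \est matrix presently held in any queue $q_j$ satisfies its $(\tfrac18,\,2^{-1/p}\epsilon/2^{32},\,\delta_{sub})$ guarantee on $[l,r]$ --- there are only $\sum_{j=1}^\beta C2^j\log n=O(C\epsilon^{-p}\log n)$ such matrices and $\delta_{sub}=\delta/O(C\epsilon^{-p}\log n)$, so a union bound costs $O(\delta)$; (iii) all $O(\epsilon^{-p}\log n)$ active difference estimators are correct, costing another $O(\delta_{dif}\epsilon^{-p}\log n)=O(\delta)$; (iv) every queue $q_j$ is non-empty at all times, which by Lemma~\ref{rotating} and a union bound over levels and time holds with probability $1-1/\poly(n)$. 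On $\calE$ there are $O(2^j\log n)$ live timestamps at level $j$, hence $O(\epsilon^{-p}\log n)$ overall, and the \est readouts stored via powers of $1+\tfrac1{16}$ satisfy the $(\tfrac14,\,2^{-1/p}\epsilon/2^{30},\cdot)$ property by the rounding claim above.

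\textbf{Reconstructing $F_p^W$.} Following \cite{woodruff2022tight}, the \textbf{StitchSW} step starts from the \ind estimate of $F_p^{(t_1,r)}$ and, for $j=1,\dots,\beta$, subtracts the level-$j$ difference-estimator outputs that peel away the still-unremoved part of $[t_1,r-n]$; only $O(1)$ difference-estimator queries are made per level. The key point I must re-prove \emph{without} the multiplicative high-probability sketch of \cite{woodruff2022tight} is that the merge test $\hat f_{i,j,k-1}^{(t_{i,j,k-1},t_{i,j,k+1})}\le 2^{(-j-14)/p}\hat f_{i,j,k-1}^{(t_{i,j,k-1},t)}$, evaluated through the additive $(\tfrac14,\cdot)$ \est property --- and through Lemma~\ref{add} when the left endpoint lies before $l$ --- still (a) implies the precondition $\min(F_p(u),F_p(u+v)-F_p(v))\le\gamma_j F_p(v+w_t)$ with $\gamma_j=2^{3-j}$ needed to invoke the level-$j$ \dif of Definition~\ref{dif}, and (b) keeps $O(2^j)$ timestamps per top-level piece at level $j$; I expect this to be the main obstacle, since it must hold for the merge decisions made at \emph{every} time, not just at the query, so that no needed timestamp has been discarded. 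Granting this, each level injects additive error $\eta F_p^W$, and since $\beta=\lceil\log(100p/\epsilon^p)\rceil=O(\log\tfrac1\epsilon)$ while $\eta=\epsilon/(2^{25}\log\tfrac1\epsilon)$, the total accumulated error is $O(\eta\beta)F_p^W=O(\epsilon)F_p^W$; the block left unremoved after level $\beta$ has $F_p\le 2^{-\beta}\Theta(F_p^W)=O(\epsilon^p)F_p^W$ and is dropped. Finally I fold in rounding: Lemmas~\ref{rounding ind} and~\ref{rounding dif} (instances of Lemma~\ref{rounding lemma} with $\gamma=(\epsilon\delta d/(m\log Um))^C$, and $\epsilon'=\Theta(\epsilon\delta^2/d^2)$, $\delta'=\Theta(\delta)$ in the \dif case) bound each rounded readout's deviation by a lower-order multiple of $F_p^W$, and the technique~\ref{rounding tech} array only over-reports, which by the argument there preserves the \est property. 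Adjusting constants yields $\hat F=(1\pm\epsilon)F_p^W$ with probability $\ge\tfrac23$.

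\textbf{Space.} I would account by level. The top level has $O(\log n)$ timestamps, each with a rounded \ind of $O(\epsilon^{-2}\log\tfrac1\delta)$ entries at $O(\log\tfrac1{\epsilon\delta}+\log\log n)$ bits (Lemma~\ref{rounding ind}) and a constant-accuracy top-level \est of $\poly(\log n)$ dimension at $\log n$ bits per entry, totalling $\tilde O(\epsilon^{-2}\log n+\log^2 n)$. At level $j$ there are $O(2^j\log n)$ timestamps, each holding its \est compressed by Technique~\ref{rounding tech} into $O(\log n)$ ranks of $O(\log\tfrac1\epsilon+\log\log n)$ bits, plus a rounded \dif of dimension $d_j=O(\gamma_j^{2/p}\eta^{-2}(\log\tfrac1\eta+\log\tfrac1{\delta_{dif}}))$ at $O(\log\tfrac1{\epsilon\delta}+\log\log n)$ bits per entry. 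Summing the \est cost, $\sum_{j=1}^\beta O(2^j\log^2 n(\log\tfrac1\epsilon+\log\log n))=O(\epsilon^{-p}\log^2 n(\log\tfrac1\epsilon+\log\log n))$ since $2^\beta=\Theta(\epsilon^{-p})$; summing the \dif cost, note $2^j\gamma_j^{2/p}=2^{\,j+2(3-j)/p}=O(1)$ for $p\in[1,2]$, so each level costs $O(\epsilon^{-2}\log n\cdot\poly(\log\tfrac1\epsilon))$ and all $\beta$ levels give $O(\epsilon^{-2}\log n\log^4\tfrac1\epsilon\cdot(\log\tfrac1\epsilon+\log\log n))$ after expanding $\log\tfrac1\eta$ and $\log\tfrac1{\delta_{dif}}$. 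Matrix storage --- the $O(C\epsilon^{-p}\log n)$ pooled \est matrices bounded by Lemma~\ref{space}, and the $\beta$ difference-estimator matrices at $O(d_j\log n(\log\log n)^2)$ for $p<2$ (the source of the $(\log\log n)^2$ factor) --- is of equal or lower order. Combining all contributions and distributing the per-entry $(\log\tfrac1\epsilon+(\log\log n)^2)$ overhead gives $O\big((\epsilon^{-p}\log^2 n+\epsilon^{-2}\log n\log^4\tfrac1\epsilon)(\log\tfrac1\epsilon+(\log\log n)^2)\big)$ bits, as claimed.
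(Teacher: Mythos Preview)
Your proposal is essentially correct and follows the same architecture as the paper: the paper likewise splits into correctness (Lemma~\ref{A.8}, resting on Lemmas~\ref{A.1}--\ref{A.7}) and space (Lemma~\ref{final space}), conditions on exactly the event $\mathcal{E}$ you describe, and validates all past merge decisions at once via the \est property on $W'=[l,r]$ (so your ``main obstacle'' dissolves once you observe that the relevant \est readouts at merge time are evaluations on sub-intervals of $W'$, plus Lemma~\ref{add} for the top-level piece that may extend left of $l$).

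One accounting slip to fix: in item~(iii) you union-bound over all $O(\epsilon^{-p}\log n)$ difference estimators and claim cost $O(\delta_{dif}\epsilon^{-p}\log n)=O(\delta)$, but with $\delta_{dif}=\delta/(2^{25}\log\tfrac1\epsilon)$ this product is not $O(\delta)$. The paper instead union-bounds only over the difference estimators actually queried in \textbf{StitchSW}: Lemma~\ref{A.7} gives $b_j-a_j\le 2^{24}$ per level and there are $\beta=O(\log\tfrac1\epsilon)$ levels, so only $O(\log\tfrac1\epsilon)$ \dif outputs are used, and $\delta_{dif}$ is calibrated precisely for that. With this correction your argument goes through.
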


Let the query window be $W=[r-n,r]$, and let $l$ be the smallest index that $F_p ^ {(l,r)} \le 2 F_p ^ W$, and $W'=[l,r]$. First, we prove that our top level \est gives a constant approximation. The proof can be easily seen from Lemma \ref{Eps Approx}. 

\begin{lem}\label{A.1}
 (Constant factor partitions in top level). 

With probability $\ge 1 - \delta$, $F_p ^ W \leq F_p ^ {\left(t_1, r\right)} \leq 2 F_p ^ W$ and $\frac{1}{2} F_p ^ W \leq F_p ^ {\left(t_{2}, r\right)} \leq F_p ^ W$.

\end{lem}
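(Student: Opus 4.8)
The plan is to reduce the statement to the constant-factor guarantee already available for Algorithm~\ref{alg:reg1} via Lemma~\ref{Eps Approx}, applied at the top level with $\epsilon = \tfrac12$. Recall that the top level of Algorithm~\ref{alg2} runs exactly the smooth-histogram procedure of Algorithm~\ref{alg:reg1} with accuracy parameter $\epsilon_0 = \tfrac12$, using the $(\tfrac{\epsilon_0^p}{64}, \tfrac{\epsilon_0^p}{128}, \delta_{SE})$ \est with $\delta_{SE} = \delta/\log O(\log m)$. So the first step is to invoke Lemma~\ref{Eps Approx} (and the \est guarantee of Lemma~\ref{est}, together with its additional-property Lemma~\ref{add}) with these parameter settings, which holds with probability $\ge 1-\delta$ after adjusting constants in $\delta_{SE}$.

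Next I would translate the conclusion of Lemma~\ref{Eps Approx} into the two sandwiching inequalities. The timestamp $t_1$ is, by the deletion rule in \textbf{MergeSW} / line~9 of Algorithm~\ref{alg:reg1}, the most recent timestamp with $t_2 \ge i - n$ (equivalently $t_2$ is inside $W$ but $t_1$ need not be). Since the window $W = [r-n, r] \subseteq [t_1, r]$ by the invariant maintained by the algorithm, monotonicity of $F_p$ under adding coordinates gives $F_p^W \le F_p^{(t_1, r)}$ immediately; this handles the lower bound of the first inequality with no probabilistic content. For the upper bound $F_p^{(t_1, r)} \le 2 F_p^W$: if $t_1 + 1 = t_2$ then $[t_1, r]$ and $[t_2, r]$ differ by at most one stream element, and since each block has geometrically decaying mass this changes $F_p$ by at most a $(1+o(1))$ factor, so $F_p^{(t_1,r)} \le 2 F_p^{(t_2,r)} \le 2 F_p^W$; otherwise Lemma~\ref{Eps Approx} gives $\ell_p^{(t_2, r)} \ge (1 - \tfrac{\epsilon_0}{2}) \ell_p^{(t_1, r)} = (1 - \tfrac14)\ell_p^{(t_1,r)}$, so $F_p^{(t_1,r)} \le (1-\tfrac14)^{-p} F_p^{(t_2,r)} \le 2 F_p^{(t_2,r)} \le 2F_p^W$ for $p \le 2$. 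The last inequality $F_p^{(t_2,r)} \le F_p^W$ is again pure monotonicity since $t_2 \ge r - n$ so $[t_2, r] \subseteq W$.

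For the remaining bound $\tfrac12 F_p^W \le F_p^{(t_2, r)}$: by definition $t_1$ is kept precisely because deleting it would leave a timestamp ($t_2$ after reindexing) outside $W$, i.e.\ $t_2$ is the latest timestamp that is still $\ge r-n$ only after we have verified $t_3$ (the next one) has moved past; more carefully, the algorithm guarantees $[t_1, r] \supseteq W$, and by the smooth-histogram property consecutive retained timestamps have $\ell_p$ values within a $(1 \pm \tfrac{\epsilon_0^p}{32})$-type factor once the merge condition fires, so $\ell_p^{(t_2, r)} \ge (1-\tfrac{\epsilon_0}{2})\ell_p^{(t_1,r)} \ge (1-\tfrac14) \ell_p^W$ (using $[t_1,r] \supseteq W$ hence $\ell_p^{(t_1,r)} \ge \ell_p^W$), giving $F_p^{(t_2,r)} \ge (1-\tfrac14)^p F_p^W \ge \tfrac12 F_p^W$ for $p \le 2$. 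Everything rides on the single good event of Lemma~\ref{Eps Approx}, so the whole claim holds with probability $\ge 1-\delta$.

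The main obstacle I anticipate is not the probabilistic part, which is entirely outsourced to Lemma~\ref{Eps Approx}, but rather being careful about the exact bookkeeping of which timestamp plays the role of ``$t_1$'' and ``$t_2$'' after the index-update steps, and confirming the window-containment invariant $W \subseteq [t_1, r]$ and $[t_2, r] \subseteq W$ actually holds as the algorithm maintains it (this is where the line ``If $t_2 < i - n$, delete $t_1$'' is doing the work). Once those invariants are nailed down, the inequalities follow from monotonicity of $F_p$ plus a single application of the $(1-\epsilon_0/2)$-non-decrease from Lemma~\ref{Eps Approx} and the elementary bound $(1-\tfrac14)^p \ge \tfrac12$ for $p \in [1,2]$.
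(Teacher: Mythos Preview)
Your proposal is correct and follows essentially the same route as the paper: invoke Lemma~\ref{Eps Approx} at $\epsilon_0=\tfrac12$ to obtain $\ell_p^{(t_2,r)}\ge\tfrac34\,\ell_p^{(t_1,r)}$, combine with the containments $[t_2,r]\subseteq W\subseteq[t_1,r]$, and finish with $(4/3)^p\le 2$ and $(3/4)^p\ge\tfrac12$ for $p\in[1,2]$. The paper's proof is exactly this, only terser (it does not spell out the monotonicity steps or the containment invariants you correctly identify).

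One small slip: your justification for the edge case $t_1+1=t_2$ (``each block has geometrically decaying mass'') is a phrase from the lower-bound construction and has no meaning here. The correct observation is that the invariants $t_1\le r-n\le t_2=t_1+1$ force $[t_1,r]$ and $W$ to differ by at most one stream position, so the $F_p$ ratio is at most $(1+1/|W|)^p$, which is below $2$ for any nontrivial window. The paper's own proof simply elides this case.
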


\begin{proof}

As we use a $(\frac {\epsilon ^ p}{64}, \frac {\epsilon ^ p}{64}, \delta_{SE})$ \est for $\epsilon = \frac 12$. By Lemma $\ref{Eps Approx}$, $\ell_p ^ {(t_2, r)} \ge \frac 34 \ell_p ^ {(t_1,r)}$. So $F_p ^ {(t_1, r)} \le (\frac 43) ^ p F_p ^ {(t_2, r)} \le 2 F_p ^ {W}$ and $F_p ^ {(t_2, r)} \ge (\frac 34) ^ p F_p ^ {(t_1, r)} \ge \frac 12 F_p ^ W$.
\end{proof}

Let $\mathcal{E}$ be the event that the top level \est gives a constant factor partition, and for the subroutine levels, all the $O(\epsilon ^ {-p} \log n)$ instances of \est satisfy the $(\frac 12, \frac {2 ^ {-1/p}\cdot \epsilon}{2 ^ {30}})$ \est property on window $W'$. In the subroutine level, we use $(\frac 12, \frac {2 ^ {1/p} \cdot \epsilon}{2 ^ {30}}, \delta_{Sub}< \frac {\delta}{O(\epsilon ^ p \log n)})$ \estt, so by a union bound, all the $O(\epsilon ^ p \log n)$ instances of \est satisfy the \est property in window $W'$ with probability $\ge 1 - \delta$. 

So by adjusting the constant for  $\delta$, with probability $\ge 1 - \delta$, $\mathcal{E}$ will happen.

We now show an upper bound on the moment of each substream whose contribution is estimated by the difference estimator, i.e., the difference estimators are well-defined.

\begin{lem}\label{A.2}(Upper bounds on splitting times)
    Let $p \in[1,2]$. Conditioned on $\mathcal{E}$, we have that for each $j \in[\beta]$ and all $k$, either $t_{1, j, k+1}=t_{1, j, k}+1$ or $F_p ^ {\left(t_{1, j, k}, t_{1, j, k+1}\right)} \leq 2^{-j-7} \cdot F_p ^ W$.
\end{lem}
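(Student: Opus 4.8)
The plan is to analyze the merge condition in \textsf{MergeSW} that governs when a timestamp $t_{1,j,k}$ is retained rather than deleted. Recall that $t_{1,j,k}$ survives because at the moment just before it would have been merged, the test
\[
\hat f_{1,j,k-1}^{(t_{1,j,k-1}, t_{1,j,k+1})} \le 2^{(-j-14)/p} \cdot \hat f_{1,j,k-1}^{(t_{1,j,k-1}, t)}
\]
failed, i.e.\ the opposite inequality held. (If no such test was ever run because consecutive timestamps are adjacent, then $t_{1,j,k+1} = t_{1,j,k}+1$ and we are in the first case of the conclusion, so we may assume otherwise.) First I would fix the relevant current time $t$ at which this retention decision was locked in, and translate the failed test from the rounded estimates $\hat f$ back to true $\ell_p$ values. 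Since we are conditioning on $\mathcal{E}$, the estimator $f_{1,j,k-1}$ (and hence its rounded version $\hat f_{1,j,k-1}$, which by the rounding claim is a $(\tfrac14, \tfrac{2^{-1/p}\epsilon}{2^{30}})$-\est) satisfies the strong-estimator bound on $W'$ simultaneously for all sub-windows, in particular for $[t_{1,j,k-1}, t_{1,j,k+1}]$ and $[t_{1,j,k-1}, t]$. Using constant multiplicative slack $(1\pm\tfrac14)$ plus the additive term $\tfrac{2^{-1/p}\epsilon}{2^{30}} \ell_p^{W'} \le \tfrac{2^{-1/p}\epsilon}{2^{30}}\cdot 2^{1/p}\ell_p^{W}$ (by choice of $l$ and $p\ge 1$), I would convert the failed test into a genuine lower bound
\[
\ell_p^{(t_{1,j,k-1}, t_{1,j,k+1})} \ge c \cdot 2^{(-j-14)/p}\, \ell_p^{(t_{1,j,k-1}, t)} - (\text{small additive in }\ell_p^W)
\]
for an explicit constant $c$, and symmetrically relate $\ell_p^{(t_{1,j,k-1},t)}$ to $\ell_p^{(t_{1,j,k+1},t)}$ and then to $\ell_p^W$ (again via $\mathcal{E}$ and the top-level constant partition of Lemma \ref{A.1}, noting $t$ is at most the current window end and the histogram ensures $\ell_p^{(\cdot,t)}$ is within a constant of $\ell_p^W$ when the merge was decided — or, if it was decided at an earlier time, monotonicity of $\ell_p^{(\cdot,t)}$ in $t$ only helps).

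Next I would bound $F_p^{(t_{1,j,k}, t_{1,j,k+1})}$ in terms of the two quantities $F_p^{(t_{1,j,k-1}, t_{1,j,k+1})}$ and $F_p^{(t_{1,j,k-1}, t_{1,j,k})}$. Since the frequency vector on $[t_{1,j,k}, t_{1,j,k+1}]$ is dominated coordinatewise by that on $[t_{1,j,k-1}, t_{1,j,k+1}]$, and the $\ell_p$ norm is monotone, we have $\ell_p^{(t_{1,j,k}, t_{1,j,k+1})} \le \ell_p^{(t_{1,j,k-1}, t_{1,j,k+1})}$; the point is that the failed test forces $\ell_p^{(t_{1,j,k-1}, t_{1,j,k+1})}$ itself to be a $2^{-j}$-scale fraction (times constants) of $\ell_p^{(t_{1,j,k-1}, t)}$, which is $O(\ell_p^W)$. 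Raising to the $p$-th power turns the $2^{(-j-14)/p}$ factor into $2^{-j-14}$, and absorbing the accumulated constant factors (the $\tfrac14$-multiplicative slacks from the rounded \est, the $(\tfrac43)^p \le 2$ type losses from Lemma \ref{A.1}, the $2^{1/p}$ from passing between $W'$ and $W$) into the gap between $2^{-j-14}$ and the claimed $2^{-j-7}$ is exactly what the generous constant $2^{14}$ in the algorithm's merge threshold is designed to buy. I would finish by checking the additive error terms: each is of the form $(\text{const})\cdot \tfrac{2^{-1/p}\epsilon}{2^{30}}\ell_p^W$, which after raising to the $p$-th power and using $\epsilon < 1$ is far smaller than $2^{-j-8}F_p^W$ for every $j \le \beta = O(\log \epsilon^{-p})$, so it too is swallowed by the slack between $2^{-j-14}$ and $2^{-j-7}$.

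The main obstacle I anticipate is the bookkeeping around \emph{which} time $t$ the retention test was evaluated at and ensuring the bound on $\ell_p^{(t_{1,j,k-1},t)}$ (or its rounded estimate) in terms of $\ell_p^W$ is uniform over all such $t$ — in particular that a timestamp retained at an early time, when $\ell_p^{(\cdot,t)}$ may have been smaller, does not cause trouble. This is handled because the \emph{last} surviving test before the query is the binding one, and by monotonicity of the frequency vectors $\ell_p^{(t_{1,j,k-1},t)} \le \ell_p^{(t_{1,j,k-1}, r)} \le C\,\ell_p^W$ under $\mathcal{E}$ and Lemma \ref{A.1}; but care is needed because Technique \ref{rounding tech} stores $\hat f$ via an "earliest timestamp exceeding $(1+\tfrac1{16})^c$" encoding, so the value we actually compare against may be an overestimate of $f_{1,j,k-1}^{(t_{1,j,k-1},t)}$ — fortunately, as observed right after Technique \ref{rounding tech}, an overestimate still obeys the \est upper bound, which is the only direction needed here. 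A secondary subtlety is making sure $p \in [1,2]$ is used correctly when moving between $\ell_p$ and $F_p = \ell_p^p$: raising an inequality $a \ge b$ to the $p$-th power is fine, but one must be careful that the additive $\ell_p$ errors, once $p$-th-powered, do not interact badly with the multiplicative $F_p$ terms; expanding $(x+y)^p \le 2^{p-1}(x^p + y^p) \le 2(x^p+y^p)$ makes this routine.
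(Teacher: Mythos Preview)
Your proposal has the direction of the key inequality backwards. You argue from the fact that $t_{1,j,k}$ \emph{survives}, i.e.\ that the merge test
\[
\hat f_{1,j,k-1}^{(t_{1,j,k-1},\,t_{1,j,k+1})} \le 2^{(-j-14)/p}\,\hat f_{1,j,k-1}^{(t_{1,j,k-1},\,t)}
\]
\emph{failed}. But a failed test gives the opposite inequality, namely a \emph{lower} bound on $\hat f_{1,j,k-1}^{(t_{1,j,k-1},\,t_{1,j,k+1})}$. That is precisely the content of Lemma~\ref{A.5}, not Lemma~\ref{A.2}. Your sentence ``the failed test forces $\ell_p^{(t_{1,j,k-1},\,t_{1,j,k+1})}$ itself to be a $2^{-j}$-scale fraction \ldots'' does not follow: the failed test forces it to be \emph{at least} that fraction, not at most. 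Chaining $\ell_p^{(t_{1,j,k},\,t_{1,j,k+1})} \le \ell_p^{(t_{1,j,k-1},\,t_{1,j,k+1})}$ with a lower bound on the right-hand side yields nothing.

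The correct observation, which the paper uses, is different: the hypothesis $t_{1,j,k+1} \neq t_{1,j,k}+1$ means that at some earlier time $t$ there was a timestamp strictly between $t_{1,j,k}$ and $t_{1,j,k+1}$ that was \emph{deleted}. At that moment the merge test \emph{succeeded}, so (after reindexing) one has
\[
f_{1,j,k}^{(t_{1,j,k},\,t_{1,j,k+1})} \le 2^{(-j-14)/p}\,f_{1,j,k}^{(t_{1,j,k},\,t)}.
\]
This is an upper bound on the left side. Now the \est{} guarantees under $\mathcal{E}$ give $\ell_p^{(t_{1,j,k},\,t_{1,j,k+1})} \le 2 f_{1,j,k}^{(t_{1,j,k},\,t_{1,j,k+1})} + \frac{\epsilon}{2^{30}}\ell_p^W$ and $f_{1,j,k}^{(t_{1,j,k},\,t)} \le 4\ell_p^W$, which combine to $\ell_p^{(t_{1,j,k},\,t_{1,j,k+1})} \le (2^{(-j-8)/p}+\frac{\epsilon}{2^{30}})\ell_p^W$; raising to the $p$-th power and using $\epsilon^p/2^{30} \le 2^{-j-8}$ gives the claim. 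The estimator involved is $f_{1,j,k}$ (not $f_{1,j,k-1}$), and the relevant event is a \emph{successful} merge, not a failed one.
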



     
     
     




\begin{lem}\label{A.3} 
     (Accuracy of difference estimators) (Lemma A.3 \cite{woodruff2022tight}) Let $p \in[1,2]$. Conditioned on the event in Lemma \ref{A.2} holding, we have that for each $j \in[\beta]$ and all $k$, 
     
     $\operatorname{SDiffEsT}\left(t_{i, j, k-1}, t_{i, j, k}, t, \gamma_j, \eta, \delta\right)$ gives an additive $\eta \cdot F_p\left(t_{i, j, k}, t\right)$ approximation to $F_p ^ {\left(t_{i, j, k-1}, t\right)}-F_p ^ {\left(t_{i, j, k}, t\right)}$ with probability at least $1-\delta$.
\end{lem}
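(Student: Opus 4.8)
The plan is to reduce the statement to the correctness guarantee of the \diff\ (Definition~\ref{dif} and Lemma~\ref{dif lemma}): I will identify the frequency vectors $u,v,w_t$ that play the roles in Definition~\ref{dif}, verify the ratio hypothesis $\min\!\big(F_p(u),F_p(u+v)-F_p(v)\big)\le\gamma_j F_p(v)$ it requires, and then account for the error introduced by storing the sketch in rounded form. Concretely, fix a level $j\in[\beta]$ and an index $k$; take $u$ to be induced by $[t_{i,j,k-1},t_{i,j,k})$, take $v$ to be the suffix $[t_{i,j,k},t_3)$ that $\operatorname{SDiffEsT}(i,j,k-1)$ has accumulated by the time it is finalized, and take $w_t$ to be induced by $[t_3,t)$, so that $v+w_t$ is induced by $[t_{i,j,k},t)$ and $u+v+w_t$ by $[t_{i,j,k-1},t)$. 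Under this identification the estimand is $F_p(u+v+w_t)-F_p(v+w_t)=F_p^{(t_{i,j,k-1},t)}-F_p^{(t_{i,j,k},t)}$ and the permitted additive error $\eta\cdot F_p(v+w_t)$ equals $\eta\cdot F_p^{(t_{i,j,k},t)}$, which is exactly the claim; so once the ratio hypothesis is checked, Lemma~\ref{dif lemma} tells us that the unrounded $(\gamma_j,\eta,\delta_{dif})$-difference estimator meets the bound simultaneously for all $t>t_3$ except with probability $\delta_{dif}$.

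The crux is the ratio hypothesis. Conditioning on $\mc{E}$, Lemma~\ref{A.2} gives that either $t_{i,j,k}=t_{i,j,k-1}+1$ — in which case $u$ is a single increment, $F_p(u)=1$, and the hypothesis holds trivially (using $F_p(v)\ge 1/\gamma_j$, which is immediate once the window is non-degenerate) — or $F_p(u)=F_p^{(t_{i,j,k-1},t_{i,j,k})}\le 2^{-j-7}F_p^W$. In the latter case I would bound $F_p(v)$ from below by a constant fraction of $F_p^W$: the merge rule of \textbf{MergeSW} together with the resulting placement of the level-$j$ timestamps (which is what underlies Lemma~\ref{A.2}) keeps $t_{i,j,k}$ only while the suffix it opens still carries a $\Omega(1)$-fraction of the block mass, and Lemma~\ref{A.1} pins that block mass to within a factor $2$ of $F_p^W$; hence $F_p(u)/F_p(v)\le 2^{-j-7}F_p^W/(c\,F_p^W)\le 2^{3-j}=\gamma_j$. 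This is exactly the accounting in Lemma~A.3 of \cite{woodruff2022tight}, which I would import essentially verbatim. The one difference that makes this not a pure citation is that every invariant used here — in particular the upper bound on $F_p(u)$ — must be a consequence of the single event $\mc{E}$ rather than of a union bound over all sketches and times as in \cite{woodruff2022tight}, so I need to keep track that nothing deterministic is being assumed.

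Finally I would handle the rounding. The algorithm never retains the raw \diff\ sketch, only its image under Procedure~\ref{rounding procedure}; by Lemma~\ref{rounding dif}, with the parameters $\epsilon_0,\delta_0$ fixed in the initialization (chosen so that the deviation $O\!\big((\epsilon_0 d^2/\delta_0^2)^{q/p}\big)\cdot F_p(v+w_t)$ is at most $\tfrac12\eta\cdot F_p^{(t_{i,j,k},t)}$, using $q/p\ge 1$), the rounded estimate differs from the unrounded one by at most $\tfrac12\eta\cdot F_p^{(t_{i,j,k},t)}$ except with probability $O(\delta_0)$. A union bound over this event and the difference estimator's own failure event $\delta_{dif}$, followed by absorbing the $O(1)$ constants hidden in $\eta$ and $\delta_{dif}$, yields the stated additive $\eta\cdot F_p^{(t_{i,j,k},t)}$ approximation with probability at least $1-\delta$. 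I expect the ratio-hypothesis check of the second paragraph to be the only genuinely technical step; the rest is re-indexing and direct appeals to Lemmas~\ref{dif lemma}, \ref{A.1}, and \ref{rounding dif}.
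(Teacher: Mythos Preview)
The paper itself does not prove this lemma; it simply cites Lemma~A.3 of \cite{woodruff2022tight} and moves on. Your outline --- identify $u,v,w_t$, verify the ratio hypothesis of Definition~\ref{dif}, invoke Lemma~\ref{dif lemma}, and then absorb rounding via Lemma~\ref{rounding dif} --- is exactly the right skeleton and matches what the cited work does.

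There is, however, a real gap in your verification of the ratio hypothesis. You take $t_3$ to be the time at which $u$ is finalized (the last merge affecting the pair) and then assert $F_p(v)\ge c\,F_p^W$ on the grounds that ``the merge rule keeps $t_{i,j,k}$ only while the suffix it opens still carries a $\Omega(1)$-fraction of the block mass.'' But with this $t_3$, $v=x^{(t_{i,j,k},\,t_3)}$, and the merge could have occurred long before the query window opened, when that suffix was still short; nothing in \textbf{MergeSW} forces $F_p^{(t_{i,j,k},t_3)}$ to be comparable to $F_p^W$. In \cite{woodruff2022tight} the corresponding step works because their estimates are multiplicative everywhere, so the merge condition directly yields $\ell_p(u)\lesssim 2^{(-j-14)/p}\ell_p(u+v)$ without any additive slack; here the \est only gives accuracy up to $\pm\,\eps_2\,\ell_p^{W'}$, and that additive term swamps the comparison whenever $\ell_p(u+v)$ is small relative to $\ell_p^W$. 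Your ``trivial'' case has the same issue: $F_p(u)=1$ does not give $F_p(u)\le\gamma_j F_p(v)$ unless $F_p(v)\ge 2^{j-3}$, which for $j$ near $\beta$ is far from immediate.

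The clean fix is to take $t_3=r$ (so $w_t$ is empty and $v=x^{(t_{i,j,k},r)}$). For every index that is actually used in \textbf{StitchSW} one has $t_{i,j,k}\le r-n$, so $[t_{i,j,k},r)\supseteq W$ and $F_p(v)\ge F_p^W$; combined with the bound $F_p(u)\le 2^{-j-7}F_p^W$ from Lemma~\ref{A.2} this gives $F_p(u)\le 2^{-j-7}F_p(v)\le\gamma_j F_p(v)$ as required, and the adjacent-timestamp case is handled simultaneously. The ``all $k$'' in the statement is stronger than what Lemmas~\ref{A.7} and~\ref{A.8} actually consume, so restricting to active indices loses nothing downstream.
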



\begin{lem}\label{A.4} (Geometric upper bounds on splitting times) (Lemma A.4 \cite{woodruff2022tight}) Let $p \in[1,2]$. Conditioned on the event in Lemma \ref{A.2} holding, we have that for each $j \in[\beta]$ and each $k$ that either $t_{i, j, k+1}=t_{i, j, k}+1$ or $F_p ^ {\left(t_{i, j, k}, m\right)}-F_p ^ {\left(t_{i, j, k+1}, m\right)} \leq 2^{-j / p-2} p \cdot F_p ^ {\left(t_{i, j, k+1}, m\right)}$.
\end{lem}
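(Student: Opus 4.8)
The plan is to observe that, once we condition on the event of Lemma~\ref{A.2}, the claim is a purely deterministic consequence of the $F_p$-estimates already in hand, so none of the \est apparatus enters here; this is exactly the situation of the corresponding Lemma~A.4 of \cite{woodruff2022tight}, whose argument transfers. Fix $i$, $j$, $k$ and assume $t_{i,j,k+1} > t_{i,j,k}+1$, as otherwise there is nothing to prove. Let $u$ be the frequency vector of the updates in $[t_{i,j,k}, t_{i,j,k+1})$ and $v$ that of the updates in $[t_{i,j,k+1}, m)$, so that $F_p^{(t_{i,j,k},m)} = \|u+v\|_p^p$ and $F_p^{(t_{i,j,k+1},m)} = \|v\|_p^p$; the quantity to control is $\|u+v\|_p^p - \|v\|_p^p$ (identifying these with the paper's closed-interval notation only costs an off-by-one in the endpoints, absorbed by the slack in the constants).

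First I would reduce the difference to the single ratio $\rho := \|u\|_p/\|v\|_p$. Since $\|u+v\|_p \le \|u\|_p+\|v\|_p$ by the triangle inequality, and $x\mapsto x^p$ is increasing and convex on $[0,\infty)$ with derivative $px^{p-1}$ for $p\in[1,2]$, one has the elementary bound $(1+\rho)^p-1 \le p\rho(1+\rho)^{p-1} \le 2p\rho$ for every $\rho\in[0,1]$; hence, provided $\rho\le1$,
\[
\|u+v\|_p^p - \|v\|_p^p \;\le\; \big((1+\rho)^p-1\big)\|v\|_p^p \;\le\; 2p\rho\cdot F_p^{(t_{i,j,k+1},m)}.
\]
It therefore suffices to show $\rho \le 2^{-j/p-3}$, equivalently $F_p(u) \le 2^{-j-3p}F_p(v)$: this both certifies $\rho\le1$ and yields the claimed bound $2^{-j/p-2}\cdot p\cdot F_p^{(t_{i,j,k+1},m)}$.

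To bound $\rho$ I would combine the two available estimates. Lemma~\ref{A.2} (and its per-block analogue) gives the upper bound $\|u\|_p^p \le F_p^{(t_{i,j,k},t_{i,j,k+1})} \le 2^{-j-7}F_p^W$ on the excluded piece. For the lower bound on $\|v\|_p^p$: for the top-level block that straddles the window boundary one has $t_{i,j,k+1}\le t_2 \le r-n$, so $[t_{i,j,k+1},m)\supseteq [t_2,r]\supseteq W$, and monotonicity of $F_p$ together with Lemma~\ref{A.1} gives $\|v\|_p^p = F_p^{(t_{i,j,k+1},m)} \ge F_p^{(t_2,r)} \ge \tfrac12 F_p^W$. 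Dividing, $\rho^p = F_p(u)/F_p(v) \le 2^{-j-6} \le 2^{-j-3p}$, the final step using $p\le2$, which is exactly $\rho \le 2^{-j/p-3}$. The constants $2^{-j-7}$ in Lemma~\ref{A.2} and $\tfrac12$ in Lemma~\ref{A.1} are tuned precisely so that this last inequality closes for every $p\in[1,2]$.

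The step I expect to be the main obstacle is the lower bound $F_p(v)=\Omega(F_p^W)$ above, made uniform over \emph{all} top-level blocks $i$ and all surviving intra-block timestamps, not just for the block containing the window boundary (for later blocks $t_{i,j,k+1}$ need not precede $t_2$, and $F_p^{(t_{i+1},r)}$ can be much smaller than $F_p^W$). Following \cite{woodruff2022tight}, one resolves this by running the intra-block subdivision against each block's \emph{own} constant-factor moment estimate rather than against $F_p^W$ directly --- so that Lemma~\ref{A.2} should be read with $F_p^W$ replaced by the block's moment --- and by noting that only timestamps old enough relative to that block's scale are ever required to feed a difference estimator; newer ones either satisfy $t_{i,j,k+1}=t_{i,j,k}+1$ or bound negligible intervening mass, making the claim trivial for them. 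Beyond this, the argument is routine interval bookkeeping.
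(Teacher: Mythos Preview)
The paper does not supply its own proof of this lemma; it is cited verbatim from \cite{woodruff2022tight}. Your approach --- bounding $\|u+v\|_p^p - \|v\|_p^p$ via the ratio $\rho = \|u\|_p/\|v\|_p$ using convexity of $x\mapsto x^p$, then controlling $\rho$ by combining the upper bound $F_p(u)\le 2^{-j-7}F_p^W$ from Lemma~\ref{A.2} with the lower bound $F_p(v)\ge \tfrac12 F_p^W$ from Lemma~\ref{A.1} --- is exactly the argument in the cited work, and your arithmetic closes correctly.

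One minor slip that does not break the proof: your intermediate claim ``$t_{i,j,k+1}\le t_2 \le r-n$'' and hence $[t_2,r]\supseteq W$ is backwards --- the algorithm maintains $t_2 \ge r-n$, so in fact $[t_2,r]\subseteq W$. But you never actually use that inclusion. What you need, and what you in fact invoke, is only $t_{1,j,k+1}\le t_2$ (block-$1$ sub-timestamps lie in $[t_1,t_2]$), giving $F_p^{(t_{1,j,k+1},m)} \ge F_p^{(t_2,r)}$ by monotonicity, and then Lemma~\ref{A.1} supplies $F_p^{(t_2,r)}\ge \tfrac12 F_p^W$ directly. So the chain you wrote is sound once the stray ``$\le r-n$'' and ``$\supseteq W$'' are dropped.

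Your diagnosis of the general-$i$ case is also on point: in this paper Lemma~\ref{A.2} is stated only for $i=1$, and the only downstream consumer of Lemma~\ref{A.4} (namely Lemma~\ref{A.8}) likewise uses only $i=1$, so the apparent generality in $i$ is inherited notation from \cite{woodruff2022tight} rather than something you need to establish here.
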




\begin{lem}\label{A.5}
    (Geometric Lower bounds on splitting times). Let $p \in[1,2]$, conditioned on $\mathcal{E}$, we have that for each $j \in[\beta]$ and each $k$ that
$$
F_p ^{\left(t_{i, j, k-1}, t_{i, j, k+1}\right)}>2^{-j-21} \cdot F_p ^ W .
$$
\end{lem}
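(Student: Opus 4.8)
The plan is to read the lower bound off directly from the deletion test executed in \textsc{MergeSW}. Fix the query window $W=[r-n,r]$, let $W'=[l,r]$ be the window of Lemma~\ref{A.1}, and recall that $F_p^{(t_1,r)}\le 2F_p^W$ forces $t_1\ge l$; hence every interval of the form $[t_{i,j,k-1},t_{i,j,k+1}]$ or $[t_{i,j,k-1},r]$ appearing below is contained in $W'$. The only tool I need is that, conditioned on $\mathcal{E}$, each subroutine \est{}---and therefore its rounded version $\hat f_{i,j,\cdot}$, by the rounding claim---satisfies $\hat f_{i,j,\cdot}^{(a,b)}=(1\pm\frac14)\ell_p^{(a,b)}\pm\eps_2\,\ell_p^{W'}$ for all $[a,b]\subseteq W'$, with $\eps_2=2^{-30}\cdot 2^{-1/p}\eps$.

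First I would treat the main case: a surviving level-$j$ timestamp $t_{i,j,k}$ that is \emph{not} shared with level $j-1$. Such a timestamp was tested for deletion at the last execution of \textsc{MergeSW} before the query and survived, so
\[
\hat f_{i,j,k-1}^{(t_{i,j,k-1},\,t_{i,j,k+1})}\;>\;2^{(-j-14)/p}\,\hat f_{i,j,k-1}^{(t_{i,j,k-1},\,r)},
\]
with the same neighbours as at query time (no later merge moves them). Feeding the right side through the \est{} bound gives $\hat f_{i,j,k-1}^{(t_{i,j,k-1},r)}\ge\frac34\ell_p^{(t_{i,j,k-1},r)}-\eps_2\ell_p^{W'}$, and since $t_{i,j,k-1}<t_2$ in the block straddling $r-n$, Lemma~\ref{A.1} gives $\ell_p^{(t_{i,j,k-1},r)}\ge\ell_p^{(t_2,r)}\ge 2^{-1/p}\ell_p^W$ while $\ell_p^{W'}\le 2^{1/p}\ell_p^W$; the choice $\beta=\lceil\log(100\max(p,1)/\eps^{\max(p,1)})\rceil$, i.e.\ $2^{-\beta}=\Theta(\eps^{p})$, makes $\eps_2\ell_p^{W'}$ at most a $2^{-\Omega(1)}$ fraction of $2^{(-j-14)/p}\ell_p^{(t_{i,j,k-1},r)}$ for every $j\le\beta$. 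Feeding the left side through the \est{} bound gives $\hat f_{i,j,k-1}^{(t_{i,j,k-1},t_{i,j,k+1})}\le\frac54\ell_p^{(t_{i,j,k-1},t_{i,j,k+1})}+\eps_2\ell_p^{W'}$. Chaining the three estimates and folding the $\eps_2$-terms into the constants yields $\ell_p^{(t_{i,j,k-1},t_{i,j,k+1})}\ge 2^{-1}\cdot 2^{(-j-15)/p}\ell_p^W$, and raising to the $p$-th power (using $p\le 2$) gives $F_p^{(t_{i,j,k-1},t_{i,j,k+1})}\ge 2^{-j-17}F_p^W>2^{-j-21}F_p^W$, with a factor $2^{4}$ of slack.

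The hard part will be the remaining case, $t_{i,j,k}\in\{t_{i,j-1,*}\}$: such a timestamp is protected by the nesting rule in \textsc{MergeSW}, so its deletion test is never run and the merge rule says nothing directly; moreover the natural substitutes only lower-bound $F_p$ on intervals \emph{wider} than $[t_{i,j,k-1},t_{i,j,k+1}]$, and passing to a sub-interval can only lose mass. For these timestamps I would induct on $j$ (the base case $j=1$ has no coarser level, so the main-case argument covers every $k$): $t_{i,j,k}$ equals some coarse timestamp $t_{i,j-1,k'}$, whose level-$(j-1)$ double-gap has $F_p>2^{-(j-1)-21}F_p^W$ by the inductive hypothesis, and I would transfer that mass into $[t_{i,j,k-1},t_{i,j,k+1}]$, using superadditivity of $F_p$ over disjoint intervals (the frequency vectors are nonnegative) together with the per-gap cap $2^{-j-7}F_p^W$ of Lemma~\ref{A.2} and the nesting of the levels to control the difference between the coarse and fine intervals. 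The delicate point---the part I expect to need the most care---is that a naive version of this transfer loses a couple of factors of $2$ relative to the target constant $2^{-j-21}$, so one must exploit the slack left by the main case (which in fact gives $2^{-j-17}$), strengthen the inductive hypothesis, or, if necessary, reorganize the timestamp-counting argument that uses this lemma so that it only invokes it at non-protected timestamps.
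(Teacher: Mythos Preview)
Your main-case argument (a surviving timestamp $t_{i,j,k}$ not shared with level $j-1$) is exactly the paper's proof: it reads the failed deletion test $\hat f_{i,j,k-1}^{(t_{i,j,k-1},t_{i,j,k+1})}>2^{(-j-14)/p}\hat f_{i,j,k-1}^{(t_{i,j,k-1},r)}$ off \textsc{MergeSW}, pushes both sides through the $(\tfrac14,\eps_2)$ \est{} guarantee on $W'$, lower-bounds $\ell_p^{(t_{1,j,k-1},r)}$ by a constant multiple of $\ell_p^W$, absorbs the $\eps_2$-terms using $2^{-\beta}=\Theta(\eps^p)$, and raises to the $p$th power. The constants are tracked slightly differently (the paper lands at $2^{(-j-20)/p}$ on the $\ell_p$ scale before subtracting the $\eps$-term), but the skeleton is identical.

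The paper does \emph{not} treat your ``hard case'' at all. Its proof simply writes ``since \textsc{MergeSW} did not merge the timestamp $t_{1,j,k}$'' and immediately invokes the strict test inequality, with no mention of the nesting clause $t_{i,j,k}\in\{t_{i,j-1,*}\}$. So the paper's entire argument is your main case. Your observation that a protected timestamp can survive \textsc{MergeSW} without ever failing the threshold test is correct and is a point the paper glosses over; your proposed induction with the Lemma~\ref{A.2} cap is a reasonable way to patch it, and your fallback---that the downstream uses (Lemmas~\ref{A.6} and~\ref{A.7}) really only need the bound at non-protected timestamps, since the protected ones are already counted at level $j-1$---is the cleanest resolution. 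But if your goal is to match the paper, you can stop after the main case: that is all the paper does.
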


    


    

\begin{algorithm}
\begin{algorithmic}[1]

\Function{StitchSW}{$t$}
\State $c_0 \gets t_1$
\State $X \gets g ^ {(t_1, t)}$ \Comment{By \ind, after rounding}
\For{$j \in [\beta]$} \Comment{Stitch sketches}
    \State Let $a$ be the smallest index such that $t_{1, j, a} \geq c_{j-1}$
    \State Let $b$ be the largest index such that $t_{1, j, b} \leq t - n + 1$
    \State $c_j \gets t_{1, j, b}$
    \State $Y_j \gets \sum_{k=a}^{b-1} \operatorname{SDiffEst}(1, j, k)$
\EndFor
\State \Return $Z := X - \sum_{j=1}^{\beta} Y_j$
\EndFunction

\end{algorithmic}
\end{algorithm}

\begin{lem}\label{A.6}(Number of level $j$ difference estimators) (Lemma A.6 \cite{woodruff2022tight})

Let $p \in[1,2]$, conditioned on the Geometric Lower bounds on splitting times to hold (Lemma \ref {A.5}) , we have that for each $j \in[\beta]$, that $k \leq 2^{j+23}$ for any $t_{1, j, k}$.

\end{lem}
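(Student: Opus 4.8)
The plan is a ``mass-counting'' argument. The merge rule forces surviving consecutive level-$j$ timestamps to be spread geometrically apart in $F_p$ -- this is exactly the content of Lemma~\ref{A.5} -- while all of these timestamps sit inside a substream whose total $F_p$ is at most $2F_p^W$; superadditivity of $F_p$ for $p\ge 1$ then caps how many of them there can be.

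In more detail, I would fix $j\in[\beta]$ and work under the event $\mathcal{E}$, which (via Lemma~\ref{A.5}) yields the geometric lower bounds we are conditioning on and also supplies the top-level constant-factor partition of Lemma~\ref{A.1}. List the level-$j$ timestamps of the first top-level copy as $t_{1,j,0}<t_{1,j,1}<\dots<t_{1,j,N}$; by construction $t_{1,j,0}=t_1$ and $t_{1,j,N}\le r$, so every one of them lies in $[t_1,r]$. Set $M=\lfloor N/2\rfloor$ and consider the half-open substreams $S_k=[t_{1,j,2k},\,t_{1,j,2k+2})$ for $k=0,1,\dots,M-1$. As sets of stream positions these are pairwise disjoint and tile the contiguous substream $[t_{1,j,0},\,t_{1,j,2M})\subseteq[t_1,r]$, so if $y_k\ge 0$ is the frequency vector of $S_k$ then the frequency vector of the union is $\sum_k y_k$. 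Applying the elementary inequality $\sum_\ell a_\ell^p\le\bigl(\sum_\ell a_\ell\bigr)^p$ (valid for nonnegative reals and $p\ge 1$) coordinate by coordinate and summing over coordinates gives $\sum_{k=0}^{M-1}F_p(S_k)\le F_p^{(t_{1,j,0},\,t_{1,j,2M})}\le F_p^{(t_1,r)}$.

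It then remains to plug in the two quantitative bounds. By Lemma~\ref{A.5}, $F_p(S_k)=F_p^{(t_{1,j,2k},\,t_{1,j,2k+2})}>2^{-j-21}F_p^W$ for every $k$, and by Lemma~\ref{A.1}, $F_p^{(t_1,r)}\le 2F_p^W$. Chaining these estimates, $M\cdot 2^{-j-21}F_p^W<2F_p^W$, whence $M<2^{j+22}$; since $N\le 2M+1$ this gives $N\le 2^{j+23}$, i.e.\ every level-$j$ timestamp $t_{1,j,k}$ of the first copy satisfies $k\le 2^{j+23}$.

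The one delicate point I anticipate is the superadditivity step: one must be sure the $S_k$ are exactly position-disjoint so that the union's frequency vector is precisely $\sum_k y_k$ with no over-counting, which is why I take the $S_k$ half-open rather than closed (so consecutive triples share no position) and keep only every other triple; this is also the only place the hypothesis $p\ge 1$ enters. The remainder is index bookkeeping, paralleling the proof of Lemma~A.6 in \cite{woodruff2022tight}.
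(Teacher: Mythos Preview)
Your proposal is correct and follows the same mass-counting argument the paper defers to in \cite{woodruff2022tight}: pair off consecutive level-$j$ timestamps, lower-bound each pair's $F_p$ contribution via Lemma~\ref{A.5}, upper-bound the total by $F_p^{(t_1,r)}\le 2F_p^W$ via Lemma~\ref{A.1} and superadditivity of $F_p$ for $p\ge 1$, and divide. The only loose end is that you define $S_k$ as half-open for disjointness but then identify $F_p(S_k)$ with the closed-interval quantity $F_p^{(t_{1,j,2k},t_{1,j,2k+2})}$ from Lemma~\ref{A.5}; this off-by-one-position slippage is the same looseness the paper itself adopts throughout and costs at most a constant factor, so it is harmless for the bound.
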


    


We next bound the number of level $j$ difference estimators that can occur from the end of the previous level $j-1$ difference estimator to the time when the sliding window begins. We say a difference estimator $\mathcal{C}_{1, j, k}$ is active if $k \in\left[a_j, b_j\right]$ for the indices $a_j$ and $b_j$ defined in $\textbf{StitchSW}$. The active difference estimators in each level will be the algorithms whose output are subtracted from the initial rough estimate to form the final estimate of $F_p ^ W$.

\begin{lem}\label{A.7}(Number of active level $j$ difference estimators) (Lemma A.7 \cite{woodruff2022tight}) Conditioned on the Upper bounds and Geometric lower bounds on splitting times to hold (Lemma \ref{A.2} and \ref{A.5}) , for $p \in[1,2]$ and each $j \in[\beta]$, let $a_j$ be the smallest index such that $t_{1, j, a_j} \geq c_{j-1}$ and let $b_j$ be the largest index such that $t_{1, j, b_j} \leq r - n$. Then conditioned on $\mathcal{E}$, we have $b_j-a_j \leq 2 ^ {24}$.

\end{lem}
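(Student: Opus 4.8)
The plan is to bound $b_j - a_j$ by combining two ingredients: a lower bound on the $F_p$ contribution accumulated over the active window $[t_{1,j,a_j}, t_{1,j,b_j}]$, and the geometric lower bound on splitting times from Lemma \ref{A.5}. The first step is to identify the total mass in the active region. By definition $c_{j-1} = t_{1,j-1,b_{j-1}} \le r-n+1$ was the stitching point from the previous level, so the active level-$j$ timestamps all lie inside $[c_{j-1}, r-n]$, which is a subinterval of $[c_{j-1}, r]$. Conditioned on $\mathcal{E}$ and the upper bounds on splitting times (Lemma \ref{A.2}), the level-$(j-1)$ analysis should give that the substream $[c_{j-1}, r]$ has $F_p$ value at most some $O(2^{-(j-1)}) \cdot F_p^W$ — more precisely, $c_{j-1}$ is chosen (in \textbf{StitchSW} and controlled by the level $j-1$ geometric bounds) so that $F_p^{(c_{j-1}, r)} \le 2^{-j+c'} F_p^W$ for an explicit small constant $c'$. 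So the whole active region carries at most $O(2^{-j}) F_p^W$ worth of $F_p$ mass.

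The second step is to charge each active index. Group the active timestamps $t_{1,j,a_j}, t_{1,j,a_j+1}, \dots, t_{1,j,b_j}$ into consecutive triples $(t_{1,j,k-1}, t_{1,j,k}, t_{1,j,k+1})$, i.e. look at indices $k$ with $k-1, k, k+1$ all in $[a_j, b_j]$. For each such $k$, Lemma \ref{A.5} gives $F_p^{(t_{1,j,k-1}, t_{1,j,k+1})} > 2^{-j-21} F_p^W$ — unless consecutive timestamps are adjacent in the stream (the "$t_{1,j,k+1} = t_{1,j,k}+1$" degenerate case), in which case that pair contributes no new stream positions and can simply be absorbed. Since the intervals $(t_{1,j,2k-1}, t_{1,j,2k+1})$ for distinct $k$ are disjoint subintervals of $[c_{j-1}, r]$, their $F_p$ contributions are subadditive (for $p \ge 1$, $F_p$ of a concatenation is at least the sum of the $F_p$'s of disjoint pieces, since $(\sum a_i)^p \ge \sum a_i^p$ coordinatewise). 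Hence the number of disjoint triples is at most $\frac{O(2^{-j}) F_p^W}{2^{-j-21} F_p^W} = O(2^{21})$, which gives $b_j - a_j \le 2 \cdot O(2^{21}) + O(1)$; tracking the constants carefully yields the claimed bound $b_j - a_j \le 2^{24}$.

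The main obstacle I expect is getting the constants to line up exactly — in particular pinning down the precise upper bound $F_p^{(c_{j-1},r)} \le 2^{-j+c'} F_p^W$ that comes out of the level-$(j-1)$ stitching, and making sure the interplay of the "$+21$" in Lemma \ref{A.5}, the factor-$2$ loss from splitting into two families of disjoint triples (even-indexed and odd-indexed), and any slack in the definition of $b_{j-1}$ all fit inside $2^{24}$. A secondary subtlety is handling the degenerate adjacent-timestamp case cleanly so it does not inflate the count: one should argue that whenever $t_{1,j,k+1} = t_{1,j,k}+1$, merging is impossible but also no stream position is "wasted," so such indices can be charged to the single stream update between them, of which there are at most $r - c_{j-1} \le n$ — but that bound is too weak, so really one must note these degenerate runs are bounded by the number of genuinely heavy single updates, which is itself $O(2^j)$-bounded by the same mass argument, or else observe they cannot occur in the active range at all by the choice of $b_j$. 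I would structure the writeup to closely mirror Lemma A.7 of \cite{woodruff2022tight}, substituting our $\mathcal{E}$ and Lemma \ref{A.5} for their union-bound-based guarantees.
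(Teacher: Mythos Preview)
Your high-level plan is right --- combine an upper bound on the $F_p$ mass of the active region with the per-triple lower bound from Lemma~\ref{A.5} --- and this is exactly the argument from \cite{woodruff2022tight} that the paper defers to. But the specific upper bound you state is false: you claim $F_p^{(c_{j-1}, r)} \le 2^{-j+c'} F_p^W$, yet $c_{j-1} \le r-n$, so the interval $[c_{j-1}, r]$ contains the entire window $W$ and hence $F_p^{(c_{j-1}, r)} \ge F_p^W$. The interval whose mass you actually need to bound is the level-$(j{-}1)$ gap $[t_{1,j-1,b_{j-1}},\, t_{1,j-1,b_{j-1}+1}]$. By maximality of $b_{j-1}$ we have $t_{1,j-1,b_{j-1}+1} > r-n$, so this gap contains the entire active range $[t_{1,j,a_j}, t_{1,j,b_j}]$; and Lemma~\ref{A.2} applied at level $j-1$ bounds its $F_p$ by $2^{-(j-1)-7} F_p^W = 2^{-j-6} F_p^W$ (the degenerate adjacent-timestamp case in Lemma~\ref{A.2} makes the active range a single point, so the claim is vacuous). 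For the base case $j=1$, use Lemma~\ref{A.1} together with superadditivity of $F_p$: $F_p^{(t_1, r-n)} \le F_p^{(t_1,r)} - F_p^W \le F_p^W$. With these corrected upper bounds, your packing argument against the $2^{-j-21} F_p^W$ lower bound from Lemma~\ref{A.5} goes through and yields $b_j - a_j \le 2^{24}$.

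A smaller point: your worry about a degenerate case in the second step is unnecessary. Lemma~\ref{A.5} as stated has no such exception --- the bound $F_p^{(t_{1,j,k-1}, t_{1,j,k+1})} > 2^{-j-21} F_p^W$ holds for every $k$ conditioned on $\mathcal{E}$ --- so there is nothing extra to handle there.
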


\begin{lem}\label{A.8}(Correctness of sliding window algorithm) For $p \in[1,2]$, Algorithm \ref{alg2} outputs a $(1\pm \varepsilon)$-approximation to $F_p ^ W$ with probability $\ge 1 - \delta$.
\end{lem}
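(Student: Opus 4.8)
The plan is to chain together all the structural lemmas already established (Lemmas~\ref{A.1}--\ref{A.7}) to argue that, conditioned on the event $\mathcal{E}$ together with the success of every difference estimator and the rounding procedures, the quantity $Z$ returned by \textbf{StitchSW} is a $(1\pm\varepsilon)$-approximation to $F_p^W$. First I would recall the telescoping decomposition that underlies the stitching: by Lemma~\ref{A.1}, $X = g^{(t_1,t)}$ is a $(1\pm \tfrac{\varepsilon}{4})$-approximation to $F_p^{(t_1,r)}$, which lies in $[F_p^W, 2F_p^W]$; the true $F_p^W$ is obtained from $F_p^{(t_1,r)}$ by peeling off the contribution of the prefix $[t_1, r-n]$. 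The cut points $c_0 = t_1 \ge c_1 \ge \cdots \ge c_\beta$ chosen in \textbf{StitchSW} satisfy $c_\beta \le r-n+1$, and on each level $j$ the sum $Y_j = \sum_{k=a_j}^{b_j-1}\operatorname{SDiffEst}(1,j,k)$ is designed to approximate $F_p^{(c_{j-1},t)} - F_p^{(c_j,t)}$. Summing the levels telescopes to $F_p^{(c_0,t)} - F_p^{(c_\beta,t)} = F_p^{(t_1,r)} - F_p^{(c_\beta, r)}$, and since $c_\beta$ is within one window-position of $r-n+1$, $F_p^{(c_\beta,r)}$ differs from $F_p^W$ by the mass of at most one stream element, which is negligible compared to $\varepsilon F_p^W$ (or can be made exactly $r-n+1$ if the finest level is fine enough; I would remark on whichever the paper intends).

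Next I would control the error accumulated across the $\beta$ levels. On level $j$, by Lemma~\ref{A.3} (accuracy of difference estimators, valid once Lemma~\ref{A.2} holds), each active $\operatorname{SDiffEst}(1,j,k)$ has additive error $\eta\cdot F_p^{(t_{1,j,k},t)} \le \eta\cdot F_p^{(t_1,r)} \le 2\eta\, F_p^W$; by Lemma~\ref{A.7} there are at most $2^{24}$ active estimators per level, so the total level-$j$ error is $O(\eta\, F_p^W)$, and summing over the $\beta = O(\log\frac{1}{\varepsilon})$ levels gives $O(\beta\eta\, F_p^W)$. With $\eta = \frac{\varepsilon}{2^{25}\log\frac1\varepsilon}$ this is at most $\frac{\varepsilon}{4}F_p^W$. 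To this I add the $\frac{\varepsilon}{4}$-relative error from $X$ (Indyk's sketch), the additional $O(\varepsilon F_p^W)$-type deviations introduced by the rounding of the Indyk sketch (Lemma~\ref{rounding ind}) and of the difference estimators (Lemma~\ref{rounding dif}), each of which was set up with parameters small enough to contribute at most a further $O(\varepsilon F_p^W)$. I also need the ``overestimation is harmless'' observations (the ones accompanying Technique~\ref{rounding tech}) so that the rounded \est values used to decide merges do not break the geometric bounds. Combining, $|Z - F_p^W| \le \varepsilon F_p^W$.

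Finally I would assemble the probability bound by a union bound over the bad events: $\mathcal{E}$ fails with probability $\le \delta$ (already argued right after its definition, using $\delta_{SE}$ and $\delta_{sub}$ with the union over $O(\varepsilon^{-p}\log n)$ subroutine \est instances); the $O(\beta)$ active difference estimators fail with total probability $O(\beta\,\delta_{dif}) \le \delta$; the rotating-random-set guarantee (Lemma~\ref{rotating}) fails with probability $1/\poly(n)$; and the roundings of the Indyk sketch and the $O(\log n)$ difference-estimator sketches fail with probability $O(\delta)$ each, by Lemmas~\ref{rounding ind} and~\ref{rounding dif}. Rescaling all the constants in the $\delta$-parameters, the overall failure probability is at most $\delta$. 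I expect the main obstacle to be bookkeeping the error propagation cleanly: specifically, verifying that the difference estimators are genuinely \emph{well-defined} at every active node — i.e., that the ratio hypothesis $\min(F_p(u), F_p(u+v)-F_p(v)) \le \gamma_j F_p(v)$ in Definition~\ref{dif} actually holds for the pair of substreams induced by $t_{1,j,k-1}, t_{1,j,k}$ — which is exactly what Lemmas~\ref{A.2} and~\ref{A.4} are for, but threading the rounded \est estimates (with their $(\tfrac14, \tfrac{2^{-1/p}\varepsilon}{2^{30}})$ parameters) through the merge condition $\hat f_{i,j,k-1}^{(\cdot,\cdot)} \le 2^{(-j-14)/p}\hat f_{i,j,k-1}^{(\cdot,t)}$ and confirming the claimed exponents line up requires care. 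A secondary subtlety is handling the ``$t_{1,j,k+1} = t_{1,j,k}+1$'' degenerate case uniformly, where a difference estimator spans a single update and is computed exactly.
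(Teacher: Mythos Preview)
Your overall architecture matches the paper's: condition on $\mathcal{E}$, use the telescoping $\sum_j Y_j$ to peel off the prefix $[t_1,c_\beta]$, bound the per-level error by combining Lemma~\ref{A.3} with the constant number of active estimators per level (Lemma~\ref{A.7}), and finish with a union bound over the failure events. That part is fine and is essentially what the paper does (the paper in fact defers most of this to the corresponding lemma in \cite{woodruff2022tight}).

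There is one genuine gap, namely your treatment of the residual between $c_\beta$ and the window start. You assert that ``$c_\beta$ is within one window-position of $r-n+1$'' and hence that $F_p^{(c_\beta,r)}$ differs from $F_p^W$ by the mass of a single element. This is not true: $c_\beta=t_{1,\beta,b_\beta}$ is only the last level-$\beta$ timestamp at or before $r-n+1$, and the next timestamp $t_{1,\beta,b_\beta+1}$ can be many positions later. The dichotomy of Lemma~\ref{A.2} says that either adjacent level-$\beta$ timestamps differ by one \emph{or} the $F_p$ of the gap is at most $2^{-\beta-7}F_p^W$; you cannot choose which alternative holds. The paper handles this via Lemma~\ref{A.4}: writing $v$ for the frequency vector on $[c_\beta,r-n]$ and $f$ for that on $W$, monotonicity gives
\[
F_p(v+f)-F_p(f)\;\le\; F_p^{(c_\beta,r)}-F_p^{(t_{1,\beta,b_\beta+1},r)}\;\le\; 2^{-\beta/p-2}\,p\cdot F_p^{(t_{1,\beta,b_\beta+1},r)}\;\le\; 2^{-\beta/p-2}\,p\cdot F_p^W,
\]
and the choice $\beta=\lceil\log(100p/\varepsilon^p)\rceil$ is exactly what makes $2^{-\beta/p}\,p\le\varepsilon/50$, so the residual contributes at most $\tfrac{\varepsilon}{2}F_p^W$. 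This is the role of Lemma~\ref{A.4} in the correctness proof (you cited it only as input to the well-definedness of the difference estimators), and it is what replaces your ``one stream element'' claim.
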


Now we will prove the space bound.





\begin{lem}\label{final space}
    Let $\delta$ be a constant $\frac 13$, Algorithm \ref{alg2} uses space  $O((\epsilon ^ {-p} \log ^ 2n + \epsilon ^ {-2}\log n \log ^ 4 \frac 1{\epsilon}) (\log \frac{1}{\epsilon} + (\log \log n) ^ 2))$ bits. 
\end{lem}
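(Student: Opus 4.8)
The plan is to bound the space of Algorithm \ref{alg2} by enumerating its components and checking that each contribution falls into one of two ``slots'', $\epsilon^{-p}\log^2 n$ or $\epsilon^{-2}\log n\log^4\frac1\epsilon$, each carrying a $(\log\frac1\epsilon+(\log\log n)^2)$ polylog overhead. The first step is to count timestamps. The top level runs Algorithm \ref{alg:reg1} with ratio $\epsilon=\frac12$ and so keeps $O(\log n)$ timestamps; by Lemma \ref{rotating}, level $j\in[\beta]$ keeps at most $O(2^j\log n)$ active timestamps with probability $1-1/\poly(n)$; and since $\beta=\lceil\log(100\max(p,1)/\epsilon^{\max(p,1)})\rceil=O(\log\frac1\epsilon)$ with $2^\beta=O(\epsilon^{-p})$ (using that $p\le 2$), the geometric sum $\sum_{j\le\beta}O(2^j\log n)=O(2^\beta\log n)=O(\epsilon^{-p}\log n)$ bounds the total number of timestamps throughout the stream.

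Next I would charge the \est storage, which supplies the $\epsilon^{-p}\log^2 n$ slot. Each subroutine-level instance $f_{i,j,k}$ is an $(\frac18,\Theta(\epsilon),\delta_{\mathrm{sub}})$ estimator, so by Lemma \ref{est} it has constant $\epsilon_1$ and dimension $O(\log\frac1\epsilon+\log\log n)$ for constant $\delta$; its matrix is stored via Lemma \ref{space} in $O((\log\frac1\epsilon+\log\log n)\log(mU))=O((\log\frac1\epsilon+\log\log n)\log n)$ bits, and over the $O(\epsilon^{-p}\log n)$ matrices this totals $O(\epsilon^{-p}\log^2 n(\log\frac1\epsilon+\log\log n))$. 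Rather than storing $O(s^2)$ sketch vectors, Technique \ref{rounding tech} keeps for each of the $s=O(\epsilon^{-p}\log n)$ timestamps an $O(\log n)$-length array of ranks, each $O(\log s)=O(\log\frac1\epsilon+\log\log n)$ bits, for $O(s\log n\log s)=O(\epsilon^{-p}\log^2 n(\log\frac1\epsilon+\log\log n))$ bits in total. The $O(\log n)$ top-level \est's and \ind's are cheaper: the Algorithm \ref{alg:reg1} top level costs $O(\log^2 n\log\log n)$ by Theorem \ref{1} with $\epsilon=\frac12$, and each rounded \ind sketch uses $O(\epsilon^{-2}(\log\frac1\epsilon+\log\log n))$ bits by Lemma \ref{rounding ind} (applied with $\epsilon'=\frac\epsilon4,\delta'=\frac\delta4$), for $O(\epsilon^{-2}\log n(\log\frac1\epsilon+\log\log n))$ total, with the single \ind matrix of lower order.

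The crux of the accounting is the \dif's, and in particular the interaction between the $\gamma_j^{2/p}$ dimension scaling of Lemma \ref{dif lemma} and the $O(2^j\log n)$ count of level-$j$ estimators. With $\gamma_j=2^{3-j}$, $\eta=\Theta(\epsilon/\log\frac1\epsilon)$ and $\delta_{\mathrm{dif}}=\Theta(\delta/\log\frac1\epsilon)$, a level-$j$ estimator has dimension $d_j=O\bigl(\gamma_j^{2/p}\eta^{-2}(\log\frac1\eta+\log\frac1{\delta_{\mathrm{dif}}})\bigr)=O(2^{-2j/p}\epsilon^{-2}\log^3\frac1\epsilon)$ for constant $\delta$. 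There is one matrix per level, stored by Lemma \ref{dif lemma} in $O(d_j\log n(\log\log n)^2)$ bits for $p<2$ and $O(d_j\log n)$ for $p=2$; since $2/p\ge 1$, $\sum_{j\le\beta}d_j=O(\epsilon^{-2}\log^3\frac1\epsilon)$, so the matrices cost $O(\epsilon^{-2}\log n\log^3\frac1\epsilon(\log\log n)^2)$ bits, which is where the $(\log\log n)^2$ factor in the statement enters. Each of the $O(2^j\log n)$ level-$j$ estimators is then stored after the rounding of Procedure \ref{rounding procedure} and Lemma \ref{rounding dif} (with $\epsilon_0=\Theta(\epsilon\delta^2/d_j^2)$, $\delta_0=\Theta(\delta)$, and the $p=2$ analogue of Lemma \ref{rounding ind}) in $O(d_j(\log\frac1\epsilon+\log\log n))$ bits, and here $\sum_{j\le\beta}2^j d_j=\epsilon^{-2}\log^3\frac1\epsilon\sum_{j\le\beta}2^{j(1-2/p)}$, which is $O(\epsilon^{-2}\log^3\frac1\epsilon)$ when $p<2$ (geometric) but picks up the extra $\beta=O(\log\frac1\epsilon)$ factor at $p=2$, giving $O(\epsilon^{-2}\log^4\frac1\epsilon)$; multiplying through, the \dif data costs $O(\epsilon^{-2}\log n\log^4\frac1\epsilon(\log\frac1\epsilon+\log\log n))$ bits. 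This geometric bookkeeping — showing the $\epsilon^{-2}$ terms do not inherit a spurious $\epsilon^{-p}$ factor from the $2^j$ timestamps, while conceding $\log^4\frac1\epsilon$ rather than $\log^3\frac1\epsilon$ at $p=2$ — is the step I expect to require the most care; the remaining work is routine substitution of the parameter settings from the initialization of Algorithm \ref{alg2}. Summing all contributions and using $\log\log n\le(\log\log n)^2$, every piece is at most $O\bigl((\epsilon^{-p}\log^2 n+\epsilon^{-2}\log n\log^4\frac1\epsilon)(\log\frac1\epsilon+(\log\log n)^2)\bigr)$, which is the claimed bound.
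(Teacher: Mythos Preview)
Your proposal is correct and follows essentially the same approach as the paper's proof: enumerate the components (top-level \est, \ind, subroutine \est matrices and rounded values via Technique~\ref{rounding tech}, and the \dif sketches and matrices), invoke Lemma~\ref{rotating} to bound the timestamp count, and do the geometric sum $\sum_j 2^j d_j$ for the \dif contribution. Your write-up is in fact more explicit than the paper's about the $p<2$ versus $p=2$ distinction in that sum and about where the $(\log\log n)^2$ originates (the \dif matrix storage of Lemma~\ref{dif lemma}); the only minor item you omit is the paper's remark that one unrounded (``accurate'') sketch is retained per level, but that cost is already dominated by the terms you account for.
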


Combining Lemma \ref{A.8}, which establishes the approximation guarantee, with the space bound given in Lemma \ref{final space}, we conclude the proof of Theorem \ref{main est}.

    



\section{Lower bounds}
\label{sec:lower_bounds}

We consider the following communication problem, which is a generalization of the classic GreaterThan communication problem.

\begin{problem}
\label{prob:multi_greater_than}
In the $\text{IndexGreater}(N,k)$ problem, Alice receives a sequence $(x_1, \ldots, x_k) \in [2^N]^k$. Bob receives an an index $j$ and a number $y \in [2^N].$   Alice sends a one-way message to Bob, and Bob must decide if $y > x_j$ or if $y \leq x_j.$
\end{problem}

\begin{lem}
\label{lem:index_greater}
Problem~\ref{prob:multi_greater_than} requires $\Omega(Nk)$ communication to solve with $2/3$ probability.
\end{lem}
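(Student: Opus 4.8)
The plan is to reduce from the standard two-party GreaterThan problem on $N$-bit integers, which is known to require $\Omega(N)$ bits of one-way communication (indeed $\Omega(N)$ even for randomized protocols with constant error, by e.g.\ a reduction from Augmented Indexing or a direct information-theoretic argument). We want to amplify this to $\Omega(Nk)$ by a direct-sum style argument: the $\text{IndexGreater}(N,k)$ instance embeds $k$ independent copies of GreaterThan, and Bob's index $j$ selects which copy must be answered. Intuitively, since Bob's index is unknown to Alice at the time she sends her message, her message must simultaneously ``work'' for all $k$ coordinates, and so must carry roughly $k$ times the information of a single GreaterThan message.

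First I would set up a hard distribution: let $x_1,\dots,x_k$ be i.i.d.\ uniform on $[2^N]$, let $j$ be uniform on $[k]$ independent of everything, and let $y$ be drawn from the hard single-coordinate distribution for GreaterThan conditioned on $x_j$ (e.g.\ uniform on $[2^N]$, or whatever distribution witnesses the $\Omega(N)$ lower bound). Let $M = M(x_1,\dots,x_k)$ be Alice's message and suppose the protocol succeeds with probability $\ge 2/3$ overall. Then I would argue that for an average coordinate $i$, the protocol restricted to coordinate $i$ solves GreaterThan with probability $\ge 2/3$ over the choice of $(x_i, y)$; by an averaging/Markov argument at least a constant fraction of coordinates are solved with probability bounded away from $1/2$, which suffices to invoke the single-coordinate hardness. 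The key quantitative step is the information-theoretic inequality
\[
H(M) \;\ge\; I(M; x_1,\dots,x_k) \;=\; \sum_{i=1}^k I(M; x_i \mid x_{<i}) \;=\; \sum_{i=1}^k I(M; x_i),
\]
where the last equality uses that the $x_i$ are mutually independent (so conditioning on $x_{<i}$ does not change the mutual information once we also note $M$ is a function of all of them — more precisely one uses the chain rule and independence to get $\sum_i I(M;x_i \mid x_{<i}) \ge \sum_i I(M;x_i)$ via the fact that $I(M;x_i\mid x_{<i}) = I(M,x_{<i};x_i) \ge I(M;x_i)$). Then I would lower bound each $I(M; x_i)$ by $\Omega(N)$ for the coordinates that are solved well: if Bob, holding $M$ and a fresh $y$, can predict whether $y > x_i$, then $M$ must reveal $\Omega(N)$ bits about $x_i$, which is exactly the content of the one-way GreaterThan lower bound (reformulated as: any message that enables a constant-advantage GreaterThan predictor has $\Omega(N)$ mutual information with the input). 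Summing over the $\Omega(k)$ good coordinates gives $H(M) = \Omega(Nk)$, hence the message length is $\Omega(Nk)$.

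The main obstacle I anticipate is making the ``single-coordinate hardness in mutual-information form'' precise and correctly handling the averaging over $j$. The clean statement one needs is: there is a constant $c$ such that if a one-way message $M$ about a uniform $x_i \in [2^N]$ lets a receiver (given an independent uniform $y$) decide $y > x_i$ with probability $\ge \tfrac12 + c$, then $I(M; x_i) \ge \Omega(N)$. This itself follows from the classical GreaterThan lower bound, but one must be careful that the per-coordinate success probability is measured under the right conditional distribution and that discarding the bad coordinates (where success is close to $1/2$) still leaves $\Omega(k)$ of them — a Markov argument: if the average per-coordinate advantage is $\ge \tfrac16$ then at least, say, a $\tfrac1{12}$-fraction of coordinates have advantage $\ge \tfrac1{12}$. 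An alternative, possibly cleaner route is a slick reduction: a protocol for $\text{IndexGreater}(N,k)$ using $o(Nk)$ bits would, by running it on an instance where Bob's index is one he cares about and padding the other $k-1$ coordinates with shared-randomness-sampled decoys, yield a one-way GreaterThan protocol with $o(N)$ communication on average, a contradiction. I would present whichever of these is shorter, but the information-theoretic direct sum is the most robust and is what I would write up in detail.
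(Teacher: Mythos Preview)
Your proposal is correct in spirit and would yield the $\Omega(Nk)$ bound, but it takes a noticeably different and more laborious route than the paper. You propose an information-theoretic direct-sum argument: apply the chain rule to get $H(M)\ge\sum_i I(M;x_i\mid x_{<i})$, then invoke a mutual-information form of the one-way GreaterThan lower bound per coordinate, together with a Markov argument to pass from average success to per-coordinate success. This works, but it requires you to state and prove the auxiliary lemma ``constant-advantage GreaterThan implies $\Omega(N)$ mutual information,'' and to manage the averaging over $j$ carefully.

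The paper instead gives a one-line reduction from Augmented Index on $Nk$ bits. Alice packs her $Nk$ bits into $k$ integers $x_1,\dots,x_k$ by reading each block of $N$ bits as a binary number. Bob, who in Augmented Index already knows all bits preceding position $(i,j)$, forms $y$ to agree with $x_i$ on its top $j-1$ bits, places a $1$ in position $j$, and zeros below. Then $y>x_i$ iff $a_{ij}=0$, so a single call to IndexGreater recovers the unknown bit. The $\Omega(Nk)$ bound then follows immediately from the standard Augmented Index lower bound, with no information-theoretic calculus, no averaging, and no auxiliary per-coordinate lemma. Your decoy-padding alternative is closer in flavor to this, but still goes through single-coordinate GreaterThan rather than exploiting the ``augmented'' side information, which is precisely what makes the paper's reduction so short.
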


This lower bound is known. For instance~\cite{braverman2018nearly} applies it in their lower bounds as well.  We supply a short proof in the appendix from the somewhat more standard Augmented Index problem.

\begin{lem}
\label{lem:main_lower_bound}
Let $p\in (1,2],$ and let $W, N, k, r$ be parameters satisfying the following:
\begin{itemize}
\item $\window \leq \unisize$
\item $2^r N k^{1 - 1/p} \leq \window^{1-1/p}$
\item $2^r Nk \leq \window$
\end{itemize}
Suppose that $\mathcal{A}$ is a streaming algorithm that operates on sliding windows of size $\window$, and gives a $1 \pm \frac{1}{12 k^{1/p}}$ multiplicative approximation to $F_p$ on any fixed window with $9/10$ probability, over the universe $[\unisize]$.

Then $\mathcal{A}$ uses at least $\Omega(kr\log N)$ space.
\end{lem}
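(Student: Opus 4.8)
The plan is to reduce from $\text{IndexGreater}(N, r \cdot k)$ — wait, more precisely, we want $r$ separate instances of $\text{IndexGreater}(N,k)$, or a single $\text{IndexGreater}(N, rk)$ instance whose $rk$ coordinates are grouped into $r$ blocks of $k$. Since $\text{IndexGreater}(N, rk)$ requires $\Omega(rk\log N)$ communication by Lemma~\ref{lem:index_greater}, it suffices to show that an algorithm $\mathcal{A}$ as in the statement yields a one-way protocol for this problem whose message is the sketch state of $\mathcal{A}$. So the space lower bound follows once we exhibit such a reduction.

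\begin{proof}[Proof sketch of Lemma~\ref{lem:main_lower_bound}]
I would reduce from $\text{IndexGreater}(N, rk)$, whose $rk$ input integers I think of as $x^{(i)}_j \in [2^N]$ for $i \in [r]$, $j \in [k]$. Alice builds a stream $\mathcal{S}$ as a concatenation of $r$ blocks $B^{(1)}, \dots, B^{(r)}$, where $B^{(i)}$ is a "$(\window/2^i)$-block" for the subsequence $x^{(i)}_1, \dots, x^{(i)}_k$: the block has length $\window/2^i$, is split into $k$ sub-blocks of size $\window/(2^i k)$, each sub-block into $N$ mini-blocks of size $\window/(2^i N k)$, and in mini-block number $x^{(i)}_j$ of sub-block $j$ we insert $\window/(2^i N k)$ copies of a dedicated universe element $u^{(i)}_j$, filling all other positions of the block with fresh distinct singletons. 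The total stream length is at most $\window$, so it fits inside one window; the conditions $2^r N k \le \window$ and $2^r N k^{1-1/p}\le \window^{1-1/p}$ are exactly what is needed for the mini-block sizes to be $\ge 1$ and for certain $F_p$ estimates below to dominate error terms. Alice runs $\mathcal{A}$ on $\mathcal{S}$ and sends its state to Bob, together with $O(\log n)$ bits of bookkeeping (the current stream length); this is lower order.

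Bob holds an index — a pair $(i^\star, j^\star)$ and a threshold $y \in [2^N]$ — and wants to decide $y > x^{(i^\star)}_{j^\star}$ vs.\ $y \le x^{(i^\star)}_{j^\star}$. He continues the stream by first expiring the prefix consisting of all blocks $B^{(1)},\dots,B^{(i^\star-1)}$ together with the first $(y-1)$ mini-blocks of sub-block $j^\star$ within $B^{(i^\star)}$ \emph{and} the first $(j^\star-1)$ sub-blocks of $B^{(i^\star)}$ — i.e.\ he shifts the window so that it starts exactly at the beginning of mini-block $y$ of sub-block $j^\star$ of block $B^{(i^\star)}$. Actually, to be careful about the order of operations (we cannot shift past the point where we want to append), Bob first appends $Q := \window/(2^{i^\star} N k^{1/p})$ copies of $u^{(i^\star)}_{j^\star}$ to the stream and \emph{then} appends enough singleton insertions to accomplish the shift, querying $\mathcal{A}$ before and after the $Q$ appends (the window size is fixed so the shift happens automatically as new elements arrive; one must check the arithmetic so that the relevant window at query time is the intended one). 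If $y \le x^{(i^\star)}_{j^\star}$, then the $\window/(2^{i^\star}Nk)$ copies of $u^{(i^\star)}_{j^\star}$ are still in the window, and appending $Q$ more of them increases $F_p$ from roughly $f^p := (\window/(2^{i^\star}Nk))^p$ to $(Q + \window/(2^{i^\star}Nk))^p$; if $y > x^{(i^\star)}_{j^\star}$, that mini-block has expired, so appending $Q$ copies increases $F_p$ by exactly $Q^p$ (minus negligible terms from the expired singletons). The two resulting $F_p$ increments differ by $\Theta(Q^{p-1}\cdot \window/(2^{i^\star}Nk)) = \Theta(Q^p / k^{(p-1)/p}\cdot \text{const})$, a $\Theta(Q^p/k^{1-1/p})$ gap, whereas the $F_p$ of the window throughout is $\Theta(\window/2^{i^\star} + Q^p + \ldots) = \Theta(Q^p k^{1/p} + \ldots)$ — here I need $2^{i^\star}Nk^{1-1/p}\le \window^{1-1/p}$ so that the "mass-$\window/2^{i^\star}$ of singletons" term is $O(Q^p)$, and then the window $F_p$ is $\Theta(Q^p k^{1/p})$. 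A $(1 \pm \frac{1}{12 k^{1/p}})$-multiplicative estimate therefore has additive error $O(Q^p k^{1/p} \cdot \frac{1}{k^{1/p}}) = O(Q^p)$, with small enough constant to resolve the $\Theta(Q^p/k^{1-1/p})$... wait — that gap is $\Theta(Q^p / k^{1-1/p})$, which is \emph{smaller} than $Q^p$ for $p<2$, so the accuracy parameter must actually give additive error $\ll Q^p/k^{1-1/p}$; with multiplicative error $1/(12k^{1/p})$ on a window of $F_p$ value $\Theta(Q^p k^{1/p})$ the additive error is $\Theta(Q^p)$, which is too big. So the window $F_p$ should instead be taken $\Theta(Q^p)$ (not $Q^p k^{1/p}$): this is achieved because only sub-block $j^\star$ survives intact after the shift along with parts of the later-decaying blocks, so the total window $F_p$ is dominated by the $Q^p$ just appended — one must re-examine exactly which mini-blocks remain after Bob's shift and confirm the surviving $F_p$ is $\Theta(Q^p)$, not larger. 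Taking the estimate of $F_p$ before and after the $Q$ appends and subtracting, Bob recovers the increment to within additive $\Theta(Q^p/k^{1-1/p})$ with a controllable constant (two estimates, each multiplicative $1/(12k^{1/p})$ on a $\Theta(Q^p)$ quantity), which distinguishes the two cases. Repeating over the $9/10$ success probability and union-bounding over the two queries gives a correct $\text{IndexGreater}$ protocol with probability $\ge 2/3$.

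The main obstacle is exactly this accounting: verifying that after Bob's window shift the surviving $F_p$ mass of the stream is $\Theta(Q^p)$ up to constants — so that the required accuracy $\frac{1}{12 k^{1/p}}$ translates into additive error small relative to the $\Theta(Q^p/k^{1-1/p})$ signal gap — and simultaneously that the three parameter constraints ($\window \le \unisize$, $2^rNk^{1-1/p}\le \window^{1-1/p}$, $2^rNk\le \window$) are precisely what make the "singleton filler" and "geometric-decay of later blocks" contributions lower order. The $p=2$ case is cleanest (the gap is $\Theta(Q \cdot \window/(2^{i^\star}Nk))$ and everything is linear-algebraic); for general $p\in(1,2]$ one uses the elementary inequality $(a+b)^p - a^p = \Theta(b^{p-1} a + b^p)$ for $a,b \ge 0$ to estimate the increments. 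Finally, to derive Theorem~\ref{...} from Lemma~\ref{lem:main_lower_bound} one sets $k = \Theta(\eps^{-p})$ (so $\frac{1}{12k^{1/p}} = \Theta(\eps)$), $N = \Theta(\log(\eps \unisize))$, and $r = \Theta(\log(\eps \unisize))$ subject to the three constraints, and a standard probability-amplification argument boosts $9/10$ to the $2/3$ in the theorem; this yields the claimed $\Omega(\eps^{-p}\log^2(\eps\unisize))$ term, and the second term $\Omega(\eps^{-2}\log(\eps^{1/p}\unisize))$ is cited from \cite{braverman2024optimality}.
\end{proof}
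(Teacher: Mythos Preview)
Your high-level plan is exactly the paper's: reduce from IndexGreater, have Alice lay down $r$ geometrically shrinking blocks, each with $k$ sub-blocks containing one heavy mini-block whose position encodes $x^{(i)}_j$, then have Bob append $Q$ copies of the item in question, shift the window, and compare $F_p$ values. The structure is right. But the execution has a real error, which you yourself sensed midway through.

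The problem is your choice of $Q$. You set $Q = \window/(2^{i^\star} N k^{1/p})$; the correct choice is $Q = \window/(2^{i^\star} N k^{1-1/p})$. (These coincide only at $p=2$, which is why you call that case ``cleanest.'') With the correct $Q$ one has $Q^p = \window^p/(2^{i^\star p} N^p k^{p-1})$, and the second hypothesis $2^r N k^{1-1/p} \le \window^{1-1/p}$ is precisely the statement that $Q^p \ge \window$. Hence the $\Theta(\window)$ singleton mass in the post-shift window is dominated by $Q^p$; the window $F_p$ is $O(Q^p)$, the signal gap is $p\,Q^{p-1}\cdot \window/(2^{i^\star}Nk) = \Theta(Q^p/k^{1/p})$, and a $(1\pm \tfrac{1}{12k^{1/p}})$-estimate resolves it.

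Your attempted fix---arguing that ``only sub-block $j^\star$ survives intact after the shift'' so the window $F_p$ drops to $\Theta(Q^p)$---is incorrect: after the shift the window still contains the entire remainder of Alice's stream (sub-blocks $j^\star{+}1,\dots,k$ of block $i^\star$ and all of blocks $i^\star{+}1,\dots,r$) together with Bob's appended singletons, giving $\Theta(\window)$ mass from singletons regardless. You cannot make that term disappear; you must choose $Q$ large enough that $Q^p$ swallows it, and the lemma's second constraint is exactly what makes this possible with the right $Q$.

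A smaller inconsistency: you write $x^{(i)}_j \in [2^N]$ but then partition each sub-block into $N$ mini-blocks; these must match. The construction needs $x^{(i)}_j \in [N]$ (so the reduction is from $\text{IndexGreater}(\log N, rk)$, giving the $\Omega(rk\log N)$ bound).
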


\begin{proof}
See the appendix.
\end{proof}
\begin{theorem}
Fix $p \in (1,2]$. Suppose that $\mathcal{A}$ is a streaming algorithm that operates on a window of size $\window$, and outputs a $(1\pm \eps)$ approximation of $F_p$ on any fixed window with $9/10$ probability. Suppose that $\eps \geq n^{-1/p}.$ Then $\mathcal{A}$ must use at least \[
\Omega\left(\frac{(p-1)^2}{\eps^p}\log^2(\eps \min(\window,\unisize)^{1/p})
+ \frac{1}{\eps^2}\log(\eps^{1/p} \min(\unisize, \window))
\right)
\]space.  In particular for constant $p$ and $\window \geq \unisize,$ $\mathcal{A}$ must use at least \[\Omega\left(\frac{1}{\eps^p}\log^2(\eps \unisize) + \frac{1}{\eps^2}\log(\eps^{1/p} \unisize)\right)\] space.
\end{theorem}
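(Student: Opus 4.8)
The plan is to derive the final theorem as a corollary of Lemma~\ref{lem:main_lower_bound} by choosing the free parameters $N, k, r$ appropriately in terms of $\eps$ and $\window, \unisize$, and then adding in the second term which comes from the cited insertion-only lower bound of \cite{braverman2024optimality}. Since Lemma~\ref{lem:main_lower_bound} produces an approximation guarantee of $1 \pm \frac{1}{12 k^{1/p}}$, the natural choice is to set $k = \Theta(\eps^{-p})$ so that $\frac{1}{12 k^{1/p}} = \Theta(\eps)$, which matches the desired accuracy (up to the constant, which we can absorb by a standard rescaling of $\eps$). With $k$ fixed, the space lower bound becomes $\Omega(k r \log N) = \Omega(\eps^{-p} r \log N)$, so to get the claimed $\eps^{-p}\log^2(\eps\,\unisize^{1/p})$ we want both $r = \Theta(\log(\eps\,\unisize^{1/p}))$ and $\log N = \Theta(\log(\eps\,\unisize^{1/p}))$, i.e. $N = (\eps\,\unisize^{1/p})^{\Theta(1)}$ and $2^r = (\eps\,\unisize^{1/p})^{\Theta(1)}$.

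The main work is checking that these choices are consistent with the three side conditions in Lemma~\ref{lem:main_lower_bound}, namely $\window \leq \unisize$, $2^r N k^{1-1/p} \leq \window^{1-1/p}$, and $2^r N k \leq \window$. For $\window \geq \unisize$ (the ``in particular'' case) we may assume $\window = \unisize$ after truncating, so $\window$ and $\unisize$ coincide, and the first condition is trivially satisfied; in the general case we just work with $\min(\window,\unisize)$ throughout, which is why that quantity appears in the bound. For the second and third conditions: with $k = \Theta(\eps^{-p})$ we have $k^{1-1/p} = \eps^{-(p-1)}$ and $k = \eps^{-p}$, so the binding constraint is roughly $2^r N \leq \eps^p \cdot \min(\unisize,\window)^{1/p} \cdot (\text{stuff})$ — here I would carefully track which of the two constraints is tighter for $p \in (1,2]$ and verify that taking $2^r = N = \min(\unisize,\window)^{c/p}\eps^{c}$ for a suitable small constant $c$ (say $c = 1/2$ or $1/3$) satisfies both, while still giving $r \log N = \Theta(\log^2(\eps\,\min(\unisize,\window)^{1/p}))$. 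The hypothesis $\eps \geq n^{-1/p}$ (equivalently $\eps\,\unisize^{1/p} \geq 1$ in the relevant regime) ensures the logarithm is nonnegative and the parameters are at least constant, so the construction is nonvacuous.

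After establishing the first term, I would invoke the result of \cite{braverman2024optimality} to get the additive $\frac{1}{\eps^2}\log(\eps^{1/p}\min(\unisize,\window))$ term, which holds even for ordinary insertion-only streams (a window of size $\window$ over universe $[\unisize]$ trivially contains such an instance), and then combine the two lower bounds by noting that the maximum of two quantities is at least half their sum — or, more simply, that a lower bound of $\Omega(A)$ and a lower bound of $\Omega(B)$ together give $\Omega(A+B)$. The factor $(p-1)^2$ in the general statement should be traced through Lemma~\ref{lem:main_lower_bound}'s proof (it presumably enters through the $p$-dependence of how ``heavy'' the planted items are, or through the $F_p$ gap between the two cases in the IndexGreater reduction); for constant $p \in (1,2]$ it is $\Theta(1)$ and drops out, giving the clean second display.

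The step I expect to be the main obstacle is the parameter-balancing in the second paragraph: making sure that a single choice of $(N,k,r)$ simultaneously (i) yields accuracy $\Theta(\eps)$, (ii) satisfies both multiplicative side conditions for all $p \in (1,2]$ — these have different exponents $1-1/p$ versus $1$ on the $k$ and $\window$ factors, so the tighter one changes character as $p$ ranges over the interval — and (iii) still makes the product $kr\log N$ as large as $\eps^{-p}\log^2(\eps\,\min(\unisize,\window)^{1/p})$. In particular one must confirm that no constraint forces $r$ or $\log N$ to be $o(\log(\eps\,\unisize^{1/p}))$; the assumption $2^r N k^{1-1/p} \leq \window^{1-1/p}$ with $k = \eps^{-p}$ gives $2^r N \leq (\eps \window^{1/p})^{p-1} \cdot \eps^{\,?}$, and one has to check this still permits $2^r, N$ to be polynomially large in $\eps\,\window^{1/p}$ — which it does precisely because there are two degrees of freedom ($r$ and $N$) and only a bound on their product. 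Everything else is bookkeeping.
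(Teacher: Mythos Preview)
Your approach is essentially the paper's: set $k = \Theta(\eps^{-p})$, balance $2^r$ and $N$ subject to the side conditions, cite \cite{braverman2024optimality} for the second term, and handle $\window > \unisize$ by a reduction.

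One correction worth flagging: the $(p-1)^2$ factor does \emph{not} come from inside Lemma~\ref{lem:main_lower_bound}'s proof as you conjecture. It comes exactly from the parameter balancing you outline in your final paragraph. With $k = \eps^{-p}$, the second side condition reads $2^r N \leq \eps^{p-1}\window^{1-1/p} = (\eps\,\window^{1/p})^{p-1}$ (your ``$\eps^{?}$'' is in fact $1$), and under the hypothesis $\eps \geq \window^{-1/p}$ this is the binding constraint. Saturating it with $2^r = N = \Theta\big((\eps\,\window^{1/p})^{(p-1)/2}\big)$ gives $r\log N = \Theta\big((p-1)^2\log^2(\eps\,\window^{1/p})\big)$, which is precisely the first term. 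So your earlier claim that a fixed constant $c = 1/2$ or $1/3$ works is off: the exponent must be $\Theta(p-1)$, and your sentence ``still giving $r\log N = \Theta(\log^2(\eps\,\min(\unisize,\window)^{1/p}))$'' overstates what the constraints allow when $p$ is close to $1$.

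For the $\window \geq \unisize$ case, ``truncating'' is vague; the paper's reduction is to duplicate each stream item $\window/\unisize$ times, which converts a hard instance over a window of size $\unisize$ into one over a window of size $\window$ with the same universe.
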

\begin{proof}

The second term in the lower bound is shown by \cite{braverman2024optimality} in their Theorem 11 for general insertion-only streams.

For the first term, first suppose that $\window \leq \unisize.$ In Lemma~\ref{lem:main_lower_bound}, set $k = \frac{1}{\eps^p}$.  Then set $2^r$ and $N$ to both be $\Theta((\eps^{p-1}n^{1 - 1/p})^{1/2})$ so that the conditions of Lemma~\ref{lem:main_lower_bound} hold.  This gives the stated bound for $\window \leq \unisize.$ To handle the case where $\window \geq \unisize$, note that we can reduce from the $\window \leq \unisize$ setting; simply duplicate each item in the stream $\frac{\window}{\unisize}$ times.
\end{proof}


\bibliography{ref}
\newpage
\section{Appendix}
\subsection{Upper Bound Lemmas}
\paragraph{Proof of Lemma \ref{est}.}
\begin{proof}

Let $[l, r]$ be a window.

    Consider constructing a sequence of indices $l=t_0<t_1<\cdots<t_{q+1}=r$ inductively as follows: For each $1 \leq k \leq q+1$, define $t_k$ as the smallest index $\leq r$ such that $\ell_p^{\left(t_{k-1}, t_k\right)} \geq$ $\epsilon_2^{4 / p} \ell_p^{\left(l, r\right)}$. If no such index exists, set $t_{q+1}=r$. This ensures $q \leq \epsilon_2^{-4}$, because $\sum_{i =0} ^ q F_p ^ {(t_{i}, t_{i+1})} \ge q\epsilon_2 ^ 4 \cdot F_p ^ {(t_0, t_{q+1})} \ge F_p ^ {(t_0, t_{q+1})}$. 

Let $\Pi_i$ be the $i$-th row of $\Pi$. For all $1 \leq a<b \leq q+1$, define:

$$
l_{a, b}:=\#\left\{m: |\Pi_mx ^ {(t_a, t_b-1)}|<\left(1-\epsilon_1\right) \ell_p^{\left(t_a, t_b-1\right)}\right\}, \quad u_{a, b}:=\#\left\{m:|\Pi_mx ^ {(t_a, t_b-1)}|>\left(1+\epsilon_1\right) \ell_p^{\left(t_a, t_b-1\right)}\right\}
$$

By the property of distribution $\mathcal{D}_p$, we have $\mathbb{E}\left[l_{a, b}\right] \leq d\left(\frac{1}{2}-C \epsilon_1\right)$ and $\mathbb{E}\left[u_{a, b}\right] \leq d\left(\frac{1}{2}-C \epsilon_1\right)$ for a constant $C$. Let $S_{a, b}$ denote the event $\left\{l_{a, b} \leq \frac{1}{2}\left(d-C \epsilon_1\right)\right\} \cap\left\{u_{a, b} \leq \frac{1}{2}\left(d-C \epsilon_1\right)\right\}$.

By Lemma \ref{cher}, $\operatorname{Pr}\left[S_{a, b}\right] \geq 1-C^{\prime} \exp \left(-\Omega\left(d \epsilon_1^2\right)\right)-\exp (-\Omega(r))$. Choosing:

$$
d=\Theta\left(\epsilon_1^{-2}\left(\log \frac{1}{\epsilon_2}+\log \frac{1}{\delta}\right)\right), \quad r=\Theta\left(\log \left(\epsilon_2 \delta\right)^{-1}\right)
$$

we ensure $\operatorname{Pr}\left[S_{a, b}\right] \geq 1-\frac{\delta \epsilon_2^{-8}}{2}$. A union bound over all $\binom{q+1}{2}$ pairs $(a, b)$ guarantees that all $S_{a, b}$ hold simultaneously with probability $\geq 1-\frac{\delta}{2}$.

Next, define $E_{k, m}$ as the event that $\exists t \in\left[t_m, t_{m+1}-1\right]$ such that $\left|\left\langle\Pi_k, x^{\left(t_m, t\right)}\right\rangle\right|>\epsilon_2 \ell_p^{\left(l, r\right)}$. By Lemma \ref{prefix bound},

$$
\operatorname{Pr}\left[E_{k, m}\right] \leq C_p\left(\left(\frac{\epsilon_2^{4 / p}}{\epsilon_2}\right)^{\frac{2 p}{2+p}}+s^{-1 / p}\right)
$$

As $\epsilon_2 \leq \epsilon_1$ and setting $s=\Theta\left(\epsilon_1^{-p}\right)$, we have $\operatorname{Pr}\left[E_{k, m}\right] \leq \frac{C}{16} \epsilon_1$.
For each $m$, the probability that $\sum_{k=1}^d \mathbf{1}\left[E_{k, m}\right] \geq \frac{C}{8} d \epsilon_1$ is bounded by:

$$
\exp \left(-C^{\prime} d \epsilon_1^2\right)+\exp \left(-C^{\prime} r\right) \leq \delta \epsilon_2^{-8}
$$

For all intervals $[a, b]$, suppose $a \in\left[t_u, t_{u+1}\right)$ and $b \in\left[t_v, t_{v+1}\right)$ where $0 \leq u \leq v \leq q$. By triangle inequality we have:

$$
|\Pi_kx ^ {(a, b)}|=|\Pi_k x ^ {(t_u, t_v-1)}| \pm|\Pi_k x ^ {(t_u, a-1)}| \pm|\Pi_k x ^ {(t_v, b)}|
$$

We now count the number of $k$ such that
\[
|\Pi_k x ^ {(a, b)}| \leq (1 + \epsilon_1) \ell_p^{(a, b)} + \epsilon_2 \ell_p^{(l, r)}.
\]

For those $k$ such that $|\Pi_k x ^ {(t_u, t_v-1)}|\leq (1 + \epsilon_1) \ell_p^{(t_u, t_v-1)}$ and $|\Pi_k x ^ {(t_u, a-1)}|, |\Pi_k x ^ {(t_v, b)}| \leq \epsilon_2 \ell_p^{(l, r)}$, we have:
\[
|\Pi_k x ^ {(a, b)}| \leq (1 + \epsilon_1) \ell_p^{(t_u, t_v-1)} + 2 \epsilon_2 \ell_p^{(l, r)} \leq (1 + \epsilon_1) \ell_p^{(a, b)} + \left( (1 + \epsilon_1) \epsilon_2^{4/p} + 2 \epsilon_2 \right) \ell_p^{(l, r)}.
\]

By applying a union bound over all events and adjusting the constant factor of $\epsilon_2$, we can ensure:
\[
\# \left\{ k : |\Pi_k x ^ {(a, b)}| \leq (1 + \epsilon_1) \ell_p^{(a, b)} + \epsilon_2 \ell_p^{(l, r)} \right\} \geq \frac{d}{2}
\]

For the other side, we can use same method to bound 

\[
\# \left\{ k : |\Pi_k x ^ {(a, b)}| \ge (1 - \epsilon_1) \ell_p^{(a, b)} - \epsilon_2 \ell_p^{(l, r)} \right\} \geq \frac{d}{2}
\]

Thus, we conclude that, with probability $\ge 1 - \delta$, $\forall l \le a < b \le r$:
\[
\med_k\left( (\Pi x^{(a,b)})_k \right) \in (1 \pm \epsilon_1) \ell_p^{(a,b)} \pm \epsilon_2 \ell_p^{(l,r)}.
\]

\end{proof}

\paragraph{Proof of Lemma \ref{space}.}
\begin{proof}
    We refer to Lemma 11 in \cite{blasiok2017continuous}. Consider a sketch matrix $\Pi$ as in Construction \ref{construct est} - i.e. $\Pi \in \mathbb{R}^{d \times n}$ with random $\mathcal{D}_p$ entries, such that all rows are $r$-wise independent and all entries within a row are $s$-wise independent. Moreover let us pick some $\gamma=\Theta\left(\varepsilon m^{-1}\right)$ and consider discretization $\tilde{\Pi}$ of $\Pi$, namely each entry $\tilde{\Pi}_{i j}$ is equal to $\Pi_{i j}$ rounded to the nearest integer multiple of $\gamma$. The analysis identical to the one in \cite{kane2010exact} shows that this discretization have no significant effect on the accuracy of the algorithm, and moreover that one can sample from a nearby distribution using only $\tau=\mathcal{O}\left(\lg m \varepsilon^{-1}\right)$ uniformly random bits. Therefore we can store such a matrix succinctly using $\mathcal{O}(r s(\lg U+\tau)+r \lg d)$ bits of memory.

    Setting $d = O(\epsilon_1 ^ {-2}(\log \frac 1{\epsilon_2} + \log \frac 1{\delta}))$, $r = O(\log \frac 1{\epsilon_2} + \log \frac 1{\delta})$, and $s = O(\epsilon_1 ^ {-p})$ finishes the proof.
\end{proof}

\paragraph{Proof of Lemma \ref{add}.}
\begin{proof}
Construct another sequence of indices $1 = l_{s} < l_{s-1}  < \dots < l_0$, where $l_j$ is the lowest index with $\ell_p ^ {(l_j, r)} \le 2 ^ j$. Then with probability $\ge 1 - \delta \cdot \log r$,  the \est satisfies the \est property on window $[l_j, r]$ for all $j\in [s]$. So $\forall [a, b]$ that $1 \le a < l, l \le b \le r$, suppose $a \in [l_j, l_{j + 1}]$, by the \est property on window $[l_j, r]$, we have $f_p (x ^ {(a,b)}) = (1\pm \epsilon_1) \ell_p ^ {(a, b)} \pm \epsilon_2 \ell_p ^ {(l_j ^ {(i)},r)} = (1\pm \epsilon_1) \ell_p ^ {(a, b)} \pm \epsilon_2 (2\ell_p ^ {(a, b)} + 2\ell_p ^ {(b, r)})$.

\end{proof}

\paragraph{Proof of Lemma \ref{Eps Approx}.}
\begin{proof}

 Consider the case where \( t_1(i) + 1 < t_2(i) \). In this case, there must exist a time \( t \in (t_2(i), i] \) such that 
    \[
    f^{(t_2(i), t)} \geq \left( 1 - \frac{\epsilon^p}{32} \right) f^{(t_1(i), t)},
    \]
    
    using the fact that \( f \) is an $( \frac {\epsilon ^ p}{64} ,\frac {\epsilon ^ p}{128}, \delta')$ \estt. Therefore, by Lemma \ref{add}, with probability \( 1 - \delta' \cdot \log i \), for any interval \( [a, b] \) that $1 \le a < i - n, i - n\le b \le i$, we have 
    \[
    f^{(a, b)} = \left( 1 \pm \frac{\epsilon^p}{32} \right) \ell_p^{(a, b)} \pm \frac{\epsilon^p}{64} \ell_p^{W}
    \]

    And by the definition of \estt,  with probability $\ge 1 - \delta'$, for all interval $[a, b] \subseteq [i-n,i]$,

    \[
    f^{(a, b)} = \left( 1 \pm \frac{\epsilon^p}{64} \right) \ell_p^{(a, b)} \pm \frac{\epsilon^p}{64} \ell_p^{W}  = \ell_p^{(a, b)} \pm \frac{\epsilon^p}{32} \ell_p^{W}.
    \]
    
    With probability $\ge 1 - \delta$, both the above 2 cases will happen, this leads to the inequality:
    \[
    \left( 1 - \frac{\epsilon^p}{32} \right) \left( (1 - \frac{\epsilon ^ p}{64})\ell_p^{(t_1(i), t)} - \frac{\epsilon^p}{64} \ell_p^{W} \right) \leq \left( 1 - \frac{\epsilon^p}{32} \right) f^{(t_1(i), t)} \leq f^{(t_2(i), t)} \leq \ell_p^{(t_2(i), t)} + \frac{\epsilon^p}{32} \ell_p^{W}.
    \]

    So : 

    \[
(1 - \frac{\epsilon ^ p}{16})\ell_p ^ {(t_1(i), t)} - \frac{\epsilon ^ p}{16} \ell_p ^ W \le  \ell_p^{(t_2(i), t)}
    \]
    
By the super-additivity of \( F_p \) for \( p \geq 1 \), we obtain
\[
F_p^{(t_1(i), t_2(i))} \leq F_p^{(t_1(i), t)} - F_p^{(t_2(i), t)} \leq F_p^{(t_1(i), t)} - \left( \ell_p^{(t_1(i), t)} - \frac{\epsilon^p}{16} \ell_p^{W} - \frac{\epsilon^p}{16} \ell_p^{(t_1(i), t)} \right)^p.
\]
Simplifying further, we have
\[
F_p^{(t_1(i), t_2(i))} \leq F_p^{(t_1(i), t)} - \left( \ell_p^{(t_1(i), t)} - \frac{\epsilon^p}{8} \ell_p^{(t_1(i), i)} \right)^p \leq \frac{p \epsilon^p}{8} F_p^{(t_1(i), i)}.
\]
The last inequality follows from the fact that for \( a > b \), we have \( a^p - (a - b)^p \leq b \cdot p \cdot a^{p-1} \).

Now consider the case where \( a < b \), i.e., \( \ell_p^{(t_1(i), t)} \leq \frac{\epsilon^p}{8} \ell_p^{(t_1(i), i)} \). In this case, we can directly deduce that:
\[
\ell_p^{(t_1(i), t_2(i))} \leq \frac{\epsilon^p}{8} \ell_p^{W}.
\]

Thus, by the triangle inequality, we conclude that
\[
\ell_p^{(t_2(i), i)} \geq \ell_p^{(t_1(i), i)} - \ell_p^{(t_1(i), t_2(i))} \geq \left( 1 - \frac{\epsilon}{2} \right) \ell_p^{(t_1(i), i)}.
\]

\end{proof}

\paragraph{Proof of Lemma \ref{rounding lemma}}

\begin{proof}
If it is the first case, then we build a leaf node to represent the interval being rounded. If it is the second case, then we build a parent node, linked to the 2 children node which represent the 2 sub-intervals respectively. One can see that in the end that we will build a tree, with the node representing interval $[l, r]$ being the root. This is exactly a special case (where every operation is just round one sketch or merge 2 sketches) in Procedure \ref{distributed rounding procedure}, so the proof follows from Lemma \ref{rounding} and a union bound.
\end{proof}

\paragraph{Proof of Lemma \ref{rotating}.}
\begin{proof} 
We will prove, for each level $j$, the queue $q_j$ will never be empty (w.h.p), so there are at most $O(2 ^ j \log n)$ timestamps. Let the current time be $t$, and let $t_1 > t_2 > \dots > t_q$, where $t_i$ is the lowest index with $F_p ^ {(t_i, t)} \le 2 ^ i$. Then for each of the \est matrix $\Pi_i$, with probability $\ge 1 - \delta_{Sub} \cdot q \ge 1 - \delta$, it will satisfy \est property on all windows $[t_i, t]$. We refer to such a $\Pi_i$ as a "good sketch".

By a Chernoff Bound, the probability that there are $\ge \frac C2 \cdot 2 ^ j \log n$ good sketches in queue $q_j$ is $\ge 1 - \exp(-\Omega(C\cdot 2 ^ j \log n)) \ge 1 - \text{poly}(n)$ for a large constant $C$.

Suppose there are $s$ timestamps on the top level. From Lemma \ref {A.6}, there are at most $2 ^ {j+23}\cdot s = C'\cdot 2 ^ j \log n$ good sketches on level $j$. Choosing $C > 2C'$, we can infer that $q_j$ will not be empty.

\end{proof}

\paragraph{Proof of Lemma \ref{rounding dif}.}
\begin{proof}
    First we union bound on the events $\forall j, |(Au+Av)_j - \mathcal{P}(Au+Av))_j| \le \epsilon '\|u+v\|_p$ and $ |(Av)_j - \mathcal{P}(Av))_j| \le \epsilon'\|v\|_p$. We use the fact that for $Z \sim D_p$, we have $\Pr[|Z| > \lambda] \le O(\frac 1{\lambda ^ p})$. Picking $\lambda = O(\frac {2d}{\delta'})$, by a union bound, with probability $\ge 1 - \delta'$, $\forall j\in [d]$, $|(A(u+v))_j| \le \lambda \|u+v\|_p$ and $|(Av)_j| \le \lambda \|v\|_p$. 
    
    Conditioning on the 2 events both happening, we will have, $\forall i$,  
    
    $$\Big|\prod_{j=  q(i-1) + 1} ^ {qi} |(\mathcal{P}(Av))_j| ^ {p/q} - \prod_{j=  q(i-1) + 1} ^ {qi} |(Av)_j|^ {p/q} \Big|\le \prod_{j=  q(i-1) + 1} ^ {qi} (|(Av)_j|+\epsilon' \|v\|_p) ^ {p/q} - \prod_{j=  q(i-1) + 1} ^ {qi} |(Av)_j| ^ {p/q}$$

    Suppose $\epsilon'$ is sufficiently smaller than $\frac 1{\lambda ^ 2}$. Then, 
    
    $$\prod_{j=  q(i-1) + 1} ^ {qi} (|(Av)_j|+\epsilon' \|v\|_p) ^ {p/q} \le \prod_{j=  q(i-1) + 1} ^ {qi} |(Av)_j| ^ {p/q}+ (\epsilon' \|v\|_p) ^ {p/q} $$

    So 
    $$\prod_{j=  q(i-1) + 1} ^ {qi} (|(Av)_j|+\epsilon' \|v\|_p) ^ {p/q} -  \prod_{j=  q(i-1) + 1} ^ {qi} |(Av)_j| ^ {p/q} \le O((\epsilon' \lambda ^ 2) ^ {p/q} \|v\|_p)$$

    Then using the same argument for $u+v$, and using the fact that $\|u+v\|_p ^ p$ is bounded by a constant factor of $\|v\|_p ^ p$ when using \dif, we finish the proof.

\end{proof}

\paragraph{Proof of Lemma \ref{A.2}.}
\begin{proof}
     Suppose $t_{1, j, k+1} \neq t_{1, j, k}+1$. Then there is a time $t$, a timestamp that has been removed from the set tracked by the algorithm. At this time $t$, $f_{1,j,k} ^ {\left(t_{1, j, k}, t_{1, j, k+1}\right)} \leq 2^{(-j-14) /p}$. $f_{1, j, k} ^ {\left(t_1,t\right)}$.

     Conditioned on $\mathcal{E}$, $f_{1, j, k}$ satisfies the \est property within window $W'$. So all the sub-intervals $[a, b] \subseteq W'$ will satisfy 
     
     $$f_{1, j, k} ^ {(a,b)} \le \frac 32 \ell_p ^ {(a, b)} + \frac{\epsilon}{2 ^ {30}}\ell_p ^ {W} \le 4 \ell_p ^  W$$
     
     as $\ell_p ^ {(a, b)} \le \ell_p ^ {W'} \le 2 \ell_p^ W$, and 
     
     $$f_{1, j, k} ^ {(a, b)} \ge \frac 12 \ell_p ^ {(a, b)} - \frac{\epsilon}{2 ^ {30}}\ell_p ^ {W}$$

     Combining these 2 inequalities,
     \[
     \ell_p ^ {(t_{1, j, k}, t_{1, j, k + 1})} \le  2 f_{1, j, k} ^ {(t_{1,j,k}, t_{1, j, k + 1})} + \frac {\epsilon }{2 ^ {30}}\ell_p ^ W  \le 2 ^ {(-j-12) /p} f_{1,j,k} ^ {(t_{1, j, k}, t)} + \frac {\epsilon}{2 ^ {30}} \ell_p ^ W \le \Big(2 ^ {(-j - 8) / p} + \frac {\epsilon}{2 ^ {30}}\Big) \ell_p ^ W 
     \]

     Since $\frac {\epsilon ^ p}{2 ^ {30}} \le 2 ^ {-j-8}$, raising both sides to the power of $p$ yields $F_p ^ {(t_{1,j,k}, t_{1,j,k+1})} \le 2 ^ {-j-7} F_p ^ W$.

\end{proof}

\paragraph{Proof of Lemma \ref{A.5}.}
\begin{proof}
    Note that conditioned on $\mathcal{E}$, $f_{1, j, k-1}$ satisfy the \est property on window $W'$, then we have $\ell_p ^ {\left(t_{1, j, k-1}, t_{1, j, k+1}\right)} \geq \frac{1}{2} f_{1, j, k-1} ^ {(t_{1, j, k-1}, t_{1,j,k+1}) } - \frac {\epsilon}{2 ^ {30}} \ell_p ^ W$ . Since $\operatorname{MergeSW}$ did not merge the timestamp $t_{1, j, k}$ then it follows that 
    
    $$f_{1, j, k-1} ^ {\left(t_{1, j, k-1}, t_{1, j, k+1}\right)}>2^{(-j-14)/p} \cdot f_{1, j, k-1} ^ {\left(t_{1,j,k-1}, r\right)} \\ \ge 2 ^ {(-j-14)/p}\cdot \Big( \frac 12 \ell_p ^ {(t_{1,j,k-1},r) } - \frac{\epsilon}{2 ^ {30}}\ell_p ^ W\Big) \ge 2 ^ {(-j-18)/p} \cdot \ell_p ^ W$$

    So 

    $$\ell_p ^ {(t_{1, j, k-1}, t_{1, j, k+1})} \ge\Big( 2 ^ {(-j-20)/p} -  \frac {\epsilon}{2 ^ {30}}\Big) \cdot \ell_p ^ W$$
    
    Since $\frac {\epsilon ^ p}{ 2 ^ {30}} \le 2 ^ {-j-21}$, raising both sides to the power of $p$ finishes the proof.
\end{proof}

\paragraph{Proof of Lemma \ref{A.8}.}
\begin{proof}
Let $c_0=t_1$. Let $u$ be the frequency vector induced by the updates of the stream from time $t_1$ to $r$. For each $j \in[\beta]$, let $a_j$ be the smallest index such that $t_{1, j, a_j} \geq c_{i-1}$ and let $b_j$ be the largest index such that $t_{1, j, b_j} \leq r-n$. Since function \textbf{MergeSW} does not merge indices of $t_{i, j, k}$ that are indices of $t_{i, j-1, k^{\prime}}$, it follows that $t_{1, j, a_j}=c_{i-1}$. Let $u_j$ be the frequency vector induced by the updates of the stream from time $a_j$ to $b_j$. Let $v$ denote the frequency vector induced by the updates of the stream from $c_\beta$ to $r-n$. Thus, we have $u=\sum_{j=1}^\beta u_j+v+f$, so it remains to show that:
\begin{enumerate}
    \item $F_p(v+f)-F_p(f) \leq \frac{\varepsilon}{2} \cdot F_p(f)$.
    \item We have an additive $\frac{\varepsilon}{2} \cdot F_p(f)$ approximation to $F_p(v+f)$.
\end{enumerate}

By Lemma A.8 in \cite{woodruff2022tight}, the first statement is correct if we give a constant factor partition on the top level (Lemma \ref{A.1}) and the Geometric upper bound on splitting time property satisfies (Lemma \ref{A.4}). The second statement follows the property that the number of active level $j$ difference estimator is constant (Lemma \ref{A.7}). 

Thus, conditioned on $\mathcal{E}$ happening, the algorithm will give a $1\pm \epsilon$ approximation over $F_p ^ W$.  Recall $\mathcal{E}$ is the event that the top level \est gives a constant factor partition, and for the subroutine level, all the $O(\epsilon ^ {-p} \log n)$ \est satisfy the $(\frac 12, \frac {2 ^ {-1/p}\cdot \epsilon}{2 ^ {30}})$ \est property on window $W'$, and this happens with probability $\ge 1 - \delta$.

So, Algorithm \ref{alg2} outputs a $(1\pm \epsilon)$ approximation over $F_p ^ W$ with probability $\ge 1 - \delta$.

\end{proof}

\paragraph{Proof of Lemma \ref{final space}.}
\begin{proof}
First by Lemma \ref{rotating}, there will be at most $C\epsilon ^ {-p} \log n$ timestamps over all levels at all times, for a large constant $C$.    

The top level \est uses $\tilde{O}(\log ^ 2n)$ bits, and the \ind after rounding uses $O(\epsilon ^ {-2}\log n (\log \frac{1}{\epsilon} + \log \log n))$ bits.
    
    Let $s = O(\epsilon ^ {-p} \log n)$ be the total number of timestamps. Then storing \est using rounding technique \ref{rounding tech} uses $O(s\log n\log s) = O(\epsilon ^ {-p} \log ^ 2n (\log \frac 1{\epsilon} + \log \log n))$ bits of space. As the \est has dimension $d = O(\log \frac 1{\epsilon}+\log \log n)$, each timestamps has an accurate sketch, which costs $O(sd \log n) = O(\epsilon ^ {-p} \log ^ 2n (\log \frac 1{\epsilon} + \log \log n))$.     To store the $O(\epsilon ^ {-p} \log n)$ matrix of $(\frac 18, O(\epsilon), O(\frac 1{\epsilon ^ {-p}\log n}))$ \estt, by Lemma \ref{space}, it uses $O(\epsilon ^ {-p} \log ^ 2n (\log \frac 1{\epsilon} + \log \log n))$ bits of space. 

    For the \diff, for each level $j\in [\beta]$, there are $O(2 ^ j \log n)$ sketches, with $d = O(\frac{(2 ^ {-j}) ^ {2/p}}{(\epsilon / \log \frac 1{\epsilon}) ^ 2} \log \frac{1}{\epsilon})$. After rounding, each level uses $O(d\log n (\log \frac 1{\epsilon} + \log \log n))$ bits. Summing over all levels, the \dif uses $O(\epsilon^ {-2} \log n \log ^ 4\frac 1{\epsilon}(\log \frac 1{\epsilon} + \log \log n)  + d \log \frac 1{\epsilon}\log n)$ space. The last $O(d \log \frac 1{\epsilon}\log n)$ is because we store an accurate sketch at each level. To store the matrix of \diff, by Lemma \ref{dif lemma}, we use $O(d\log n(\log \log n) ^ 2)$ bits of space. 

    The space to store the \ind is no larger than \diff. 

\end{proof}

\subsection{Lower Bounds}

\paragraph{Proof of Lemma~\ref{lem:index_greater}}
\begin{proof}
We reduce from the Augmented Index problem on $Nk$ bits.  Suppose that in the AugmentedIndex problem Alice receives $Nk$ bits \[
a_{11}, \ldots, a_{1N}, a_{21}, \ldots, a_{2N}, \ldots, a_{k1}, \ldots, a_{kN}.
\]
and  Bob receives an index pair $(i,j)$ along with Alice's bits that occur prior to that index, and must output the corresponding bit.

Let $x_{\ell}$ be the number obtained by interpreting $a_{\ell 1} \ldots a_{\ell N}$ as a number written in binary.  Alice computes $x_1, \ldots, x_k$.

Bob forms the number $y$ whose binary expansion is $a_{i1} \ldots a_{i, (j-1)} 1 0 \ldots 0$ where there are $N - j$ zeros so that $y$ has $N$ binary digits.


Now Alice and Bob run the $\text{IndexGreater}(N,k)$ protocol, which allows Bob to decide with $2/3$ probability whether (i) $y > x_{i}$ or (ii) $y \leq x_{i}.$

Since $x_i$ and $y$ agree in the first $j-1$ digits, case (i) occurs precisely when $a_{ij} = 0$ and case (ii) occurs precisely when $a_{ij} = 1.$  Thus solving $\text{IndexGreater}(N,k)$ allows Bob to solve Augmented index on $Nk$ bits, which requires $\Omega(Nk)$ communication to solve with constant probability.

\end{proof}

\paragraph{Proof of Lemma~\ref{lem:main_lower_bound}.}
\begin{proof}
We reduce from Problem~\ref{prob:multi_greater_than} above.  Let 
\[
x^{(1)}_1, \ldots, x^{(1)}_k, x^{(2)}_1, \ldots, x^{(2)}_k, \ldots x^{(r)}_1, \ldots, x^{(r)}_k,
\]
with each $x^{(i)}_j \in [N]$
be the input to Alice in Problem~\ref{prob:multi_greater_than}.  Let Bob receive $y \in [N]$ along with an index pair $(\alpha, \beta).$ Bob must distinguish between $y > x^{(\alpha)}_{\beta}$ and $y \leq  x^{(\alpha)}_{\beta}.$

We will construct a stream by concatenating a series of blocks.  We first give the general structure of such a such a block.

For a sequence $x_1, \ldots, x_k \in [N]^k$, we define a $B$-block as follows.
The block is a sequence of length $B.$  Divide the block into $k$ consecutive sub-blocks of size $B/k$. Also divide each sub-block into $N$ consecutive mini-blocks each of size $\frac{B}{Nk}.$  In mini-block $x_i$ of sub-block $i$, we fill it with $\frac{B}{Nk}$ copies of a fixed universe item.  We populate the remaining unfilled entries of the block with distinct elements.  Note that the $F_p$ of a $B$-block is 
\begin{equation}
\label{eq:Fp_of_block}
B - \frac{B}{N} + \frac{B^p}{N^p k^{p-1}} \leq B + \frac{B^p}{N^p k^{p-1}}.
\end{equation}

Alice constructs a series of blocks $B^{(1)}, \ldots B^{(r)}$ where block $B^{(i)}$ is a $(\window/2^i)$-block for the sequence $x^{(i)}_1, \ldots x^{(i)}_k$ and concatenates them to form a stream $\mathcal{S}.$  For concreteness we  construct the stream such that the mini-block associated to $x^{(i)}_j$ contains repetitions of the universe item $u^{(i)}_j := ik + j.$ This is simply to ensure that the blocks are disjoint, and so that Bob knows the universe item associated with each index.

Alice runs the streaming algorithm on $\mathcal{S}$ and sends the resulting sketch to Bob.  Bob then computes $Q := \frac{\window}{2^i N k^{1 - 1/p}},$ and simulates adding $Q$ copies of $u^{(\alpha)_{\beta}}$ to the end of the stream, running the streaming algorithm to update the sketch.  Bob then appends new singletons to the stream so that the sliding window begins on mini-block $y$ of sub-block $\beta$ in block $\alpha.$

If $x^{(\alpha)}_{\beta} \geq y$, then by our construction, there are $\frac{\window}{2^i Nk}$ copies of $u^{(\alpha)_{\beta}}$ remaining in the stream.  Otherwise there are no copies of $u^{(\alpha)_{\beta}}$ remaining in the stream.  The $F_p$ of the resulting stream is at most $\window + \frac{\window^p}{2^{ip} N^p k^{p-1}}$ by summing Equation~\ref{eq:Fp_of_block} over the remaining blocks. 

In the first case, replacing $Q$ singletons with $Q$ copies of $u^{(\alpha)_{\beta}}$  increases the $F_p$ of the stream by 
\[
-Q + \left(Q + \frac{\window}{2^i Nk}\right)^p - \left(\frac{\window}{2^i Nk}\right)^p
\geq Q^p - Q + p\frac{Q^{p-1}\window}{2^i N k}.
\] In the second case, the $F_p$ of the stream increases by $Q^p - Q.$

Given an $(\eps, F_p)$-approximation approximation algorithm, we can run it twice to approximate the increase to within an additive error of 
\[
2\eps \left(\window + \frac{\window^p}{2^{ip} N^p k^{p-1}} + \left(Q + \frac{\window}{2^i N k}\right)^p \right)
\leq \frac{12\eps \window^p}{2^{ip} N^p k^{p-1}},
\]
where we used the assumption that $2^r N k^{1 - 1/p} \leq \window^{1 - 1/p}.$

In order to distinguish between the two cases, it suffices to have
\[
\frac{12\eps \window^p}{2^{ip} N^p k^{p-1}} \leq \frac{Q^{p-1} \window}{2^i N k},
\]
which after plugging in our choice of $Q$ is equivalent to
\[
\frac{12\eps \window^p}{2^{ip} N^p k^{p-1}} \leq \frac{\window^p}{2^{ip} N^p k^{1 + (p-1)^2/p}}.
\] 
This occurs for $\eps \leq \frac{1}{12 k^{1/p}}.$

\end{proof}

\end{document}